\def\BibTeX{{\rm B\kern-.05em{\sc i\kern-.025em b}\kern-.08em
    T\kern-.1667em\lower.7ex\hbox{E}\kern-.125emX}}
\numberwithin{equation}{section}
\numberwithin{theorem}{section}
\numberwithin{subsubsection}{subsection}
\numberwithin{figure}{section}
\begin{document}
\title{EFFICIENTLY ESTIMATING A SPARSE \protect\\
      DELAY-DOPPLER CHANNEL.}
\author{Alisha Zachariah}
\oraldate{May 29, 2020}
\profA{S. Gurevich, Associate Professor, Mathematics}
\profB{N. Boston, Professor Emeritus, Mathematics and ECE}
\profC{B. Lesieutre, Professor, Electrical and Computer Engineering (ECE)}
\profD{S. Goldstein, Researcher, Botany }
\degree{Doctor of Philosophy}
\dept{Mathematics}
\thesistype{dissertation}
\beforepreface
\prefacesection{Abstract}
Multiple wireless sensing tasks, e.g., radar detection for driver safety, involve estimating the ``channel” or relationship between signal transmitted and received. In this paper, we focus on a certain channel model known as the \textit{delay-doppler channel}. This model begins to be useful in the \textit{high frequency carrier} setting, which is increasingly common with developments in millimeter-wave technology. Moreover, the delay-doppler model then continues to be applicable even when using signals of \textit{large bandwidth}, which is a standard approach to achieving \textit{high resolution} channel estimation. However, when high resolution is desirable, this standard approach results in a tension with the desire for efficiency because, in particular, it immediately implies that the signals in play live in a space of very high dimension N (e.g., $\sim10^6$ in some applications), as per the Shannon-Nyquist sampling theorem.

To address this difficulty, in this paper we propose a novel randomized estimation scheme called \textit{Sparse Channel Estimation}, or \textit{SCE} for short, for channel estimation in the $k$-sparse setting (e.g., $k$ objects in radar detection). This scheme involves an estimation procedure with sampling and space complexity both on the order of $k(\log N)^3$, and arithmetic complexity on the order of $k(\log N)^3+k^2$, for $N$ sufficiently large.

To the best of our knowledge, Sparse Channel Estimation (SCE) is the first of its kind to achieve these complexities simultaneously -- it seems to be extremely efficient! As an added advantage, it is a simple combination of three ingredients, two of which are well-known and widely used, namely digital chirp signals and discrete Gaussian filter functions, and the third being recent developments in sparse fast fourier transform algorithms.

We note that the design of Sparse Channel Estimation(SCE) is based on a digital channel model which presumes the so-called \textit{``on-the-grid"} assumption. This assumption is made in a heuristic manner -- while SCE can still be used in the \textit{off-the-grid} setting, proving guarantees in this more general setting remains to be future work.
 \prefacesection{Acknowledgements}
 
I would like to thank my thesis advisor, Shamgar Gurevich, for his advice and guidance, and the enormous amount of time he has invested in my growth as a scientist. 

I am extremely grateful to Nigel Boston and Bernie Lesieutre who have also been advisors to me and supported my work in a multitude of ways, over the course of our project on algebraic methods for the power flow equations. 

I would also like to thank Steven Goldstein and Alexander Fish for crucial discussions that informed this thesis project.

Finally, I would like to thank the Mathematics Department -- my personal experience here has been one of receiving tremendous support for the pursuit of my research. 

\listoffigures
\listoftables
%

\afterpreface
%
\chapter{Introduction}\label{introduction}   
%
%
%
%
%
%

The process of channel estimation is going on all around us. 
For instance, all our personal devices are constantly sending and receiving signals and, intuitively speaking, the ``channel'' is simply the relationship between the signal transmitted and signal received, certain parameters of which we would like to estimate.

In this paper we focus on the \textit{delay-doppler channel}. Within that setting, our methods are quite general and may serve a variety of wireless communication applications, but to illustrate the key ideas, the rest of this work will deal with the application to radar detection. 

We begin with an intuitive physical picture of the radar task. \cref{sec:mot,sec:cont,sec:dig} will closely follow standard references such as \cite{levanon2004radar,tse2005fundamentals}.

\section{\textbf{Motivation and Intuitive Physical Picture}}\label{sec:mot}

Consider the classical problem of estimating the position and velocity of some object of interest.

\begin{figure}[h]
\centerline{\includegraphics[width=0.5\textwidth]{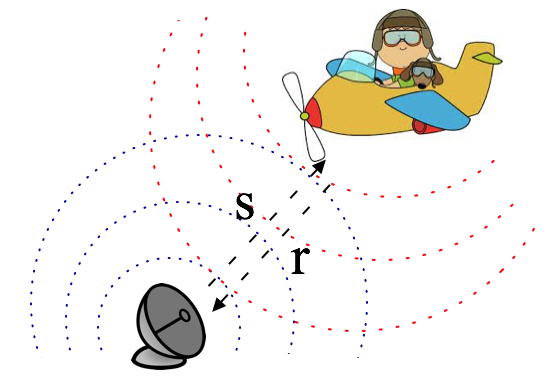}}
\caption{Radar detection of object in the case of line of sight.}
\label{fig1}
\end{figure}

In practice, we could approach this problem using a device called a radar
which emits electromagnetic waves in all directions around it -- see \cref{fig1}. The emitted waves or ``signal" travels through space and will be reflected back to the radar by certain materials, such as that of our object. 

For the rest of this paper, we assume a 
\textit{line of sight} between the radar and object -- as illustrated in Fig.\ref{fig1}. The reflection from the object back to the radar is then strongest along this direction. Restricting
to the line of sight, we denote the signal \textit{transmitted} by $s$ and the signal reflected or \textit{received} by $r$. 

The signals $s$ and $r$ are physically related, and this can be utilized to estimate position and velocity of the object of interest. More precisely, a ``digital" computational procedure will yield the parameters of interest.

We will eventually describe certain computational challenges in the digital estimation step. However, to see exactly where these challenges come from, it will be to our advantage to first consider a \textit{continuous} (or \textit{analog}) model of the  channel relationship just intuitively described, together with the corresponding estimation task.

\section{\textbf{Continuous Channel Model and Estimation Task}}\label{sec:cont}

Engineers tell us that the transmitted and received signals,  $s$ and $r$, depend on \textit{time} and have finite \textit{energy}. As a consequence, it is natural to model them as elements of the Hilbert space $L^2(\R)$, of complex-valued functions of one real variable with bounded $L^2$-norm with respect to the standard inner product $\<\cdot, \cdot\>$ \cite{levanon2004radar, tse2005fundamentals}. For the reader's benefit, we may denote an element $s\in L^2(\R)$ by $s(t)$ where $t\in \R$ models time. We refer to elements of this space as \textit{continuous} (or \textit{analog}) \textit{signals}. 

Next, we would like to model the relationship between $s$ and $r$. \\
Firstly, it takes time for the signal transmitted to travel the distance to the object and back. This is modeled as a time delay, i.e., 
\begin{align*}
    r(t) & = \alpha_0 \cdot s(t-t_0) + \text{Noise},
\end{align*}
More precisely, let us denote the distance (often called \textit{relative range}) between the radar and object by $d_0$ -- see Fig. \ref{fig2}, then $t_0 = \frac{2d_0}{c}$ with $c$ denoting the speed of light. Moreover, $\alpha_0$ is a complex-valued scalar with magnitude $|\alpha_0|\le 1$, known as the \textit{attenuation coefficient}, and models loss of energy. Finally, the additive noise term that appears in the equation is intended to account for imprecisions in the model, including environmental effects. 

Next -- and most interestingly -- when the object is moving, the signal is also subject to the \textit{Doppler effect}. \\ 
In general, the Doppler effect can be modeled as a \textit{time scale}, as detailed in \cref{app:timescale}. However, we will adopt the standard \textit{narrowband} assumption \cite{levanon2004radar,tse2005fundamentals}. Namely, denoting the bandwidth (i.e., size of support of the Fourier transform) of $s$ by $W$ and its carrier (i.e., central) frequency by $f_c$, we assume that $W \ll f_c$ --  for a precise quantification of what  ``$\ll$" can mean in this context, see \cref{app:tstofs}. \\
Under this assumption, the Doppler effect can be modeled by a frequency shift \cite{levanon2004radar}, so that we have
\begin{align}\label{eq1}
    r(t) & = \alpha_0 \cdot e^{2\pi if_0t}s(t-t_0) + \text{Noise}, 
\end{align}
where, $f_0 = -\frac{2f_cv_0}{c}$ with $v_0$ denoting the object's \textit{relative radial velocity}, as illustrated in Fig. \ref{fig2}. A mathematical justification for approximating a time scale by a frequency shift, under the narrowband assumption, can be found in \cref{app:tstofs}. \\
\begin{figure}[H]
  \centering    
     \includegraphics[width=0.8 \textwidth]{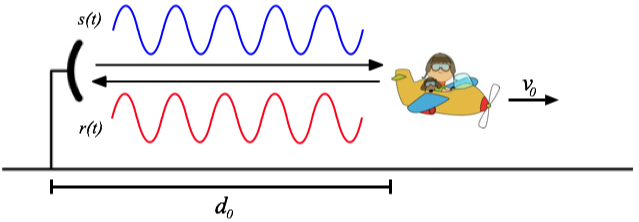}
     \caption{Illustrating range and radial velocity of object.}
     \label{fig2}
\end{figure} 
\ Note that if $f_c$ is not sufficiently large, the quantity $f_0$ is negligible and often ignored. However, with a high carrier frequency (i.e., in the millimeter-wave setting) \cite{niu2015survey}, $f_0$ may be on the order of \textit{megahertz} (MHz) 
-- even for relatively slow moving objects like cars.\\ So, to summarize, given a good estimate of $(t_0,f_0)$, we will be able to estimate range and radial velocity $(d_0,v_0)$.


However, in general we may have more than one object of interest for which we would like to estimate range and radial velocity. In this case, the received signal is a \textit{superposition} of reflections from each object \cite{tse2005fundamentals}, i.e., the relationship between $r$ and $s$ is given by 
\begin{align}\label{eq:cont}
    r(t) & = \sum_{j=1}^k \alpha_j \cdot h_{t_j,f_j}s(t) + \text{Noise}, 
\end{align}
where, $h_{t_j,f_j}$ for $(t_j,f_j)\in \R^2$, denotes the \textit{continuous time-frequency shift} operator on $L^2(\R)$ given by,
\begin{align*}
    h_{t_j,f_j}s(t) &= e^{2\pi if_jt}s(t-t_j).
\end{align*}
and the parameter $k$, known as \textit{channel sparsity}, models the number of \textit{targets} in the case of radar detection. \\
We will refer to (\ref{eq:cont}) as the \textit{continuous channel model}, and we can now formulate the following estimation task.
\begin{task}[\textbf{\textit{Continuous Estimation}}]\label{CET}
Assuming \cref{eq:cont}, estimate $(t_j,f_j)$ for $j=1, \dots , k$.
\end{task}
We have the following remarks regarding \cref{CET}.
\begin{remark}
Our motivating problem was to estimate position and velocity, rather than simply range and radial velocity, as with \cref{CET}. However, in theory, it is possible to estimate position and velocity by performing \ref{CET} with four non-coplanar radars and solving a small system of quadratic equations (four equations in three variables), for $k$ targets in generic position \cite{mailloux2017phased}. In practice, for robustness, this procedure may be generalized to using an array of radars (for instance, a phased array radar \cite{mailloux2017phased}).
\end{remark}
\begin{remark}
While \cref{CET} assumes that the channel sparsity $k$ is known, in practice this may not true. For instance we may only know some upper bound on $k$. When $k$ is unknown, we may rather wish to estimate shifts $(t,f)\in \R^2$, whose coefficients $\alpha$ are "significant" under suitable assumptions on noise. The algorithm we propose is designed to perform this modified estimation task, but for ease of exposition, we work with \cref{CET} for now. 
\end{remark}

We are now ready to move towards the promised digital computational procedure.

\section{\textbf{Digital Channel Model and Estimation Task}}\label{sec:dig}

In order to produce a computational procedure, we wish to reduce the continuous estimation task to a finite dimensional linear algebra problem, and \textit{digital signal processing} (DSP) allows us to do just that \cite{tse2005fundamentals}.\\
In DSP -- see \cref{fig:sq} for illustration -- the continuous signal $s$ will begin life as a \textit{digital signal} $S$ and, similarly, the continuous signal $r$ ends up as  digital signal $R$. So we first provide a standard description for these.

\vspace{.25cm}
\subsection{The space of digital signals}
Fix a positive integer $N$ and consider the set $\Set{0,1,\dots, N-1}$ that we denote by $\Z_N$; recall that it comes naturally equipped with addition and multiplication modulo $N$. \\
Now, just as continuous signals were elements of $L^2(\R)$, digital signals will be elements of the space $L^2(\Z_N)$, of complex-valued functions\footnote{Equivalently, the space of $N$-periodic complex-valued functions on the integers $\Z$.} on $\Z_N$, with its natural inner product\footnote{ Namely, $\<S_1,S_2\> = \sum_{\tau\in \Z_N} S_1[\tau] \cdot \overline{S_2[\tau]}$ for every $S_1,S_2$ in $L^2(\Z_N)$.} $\<\cdot, \cdot \>$.

Having a description for digital signals at our disposal, we can now describe what it means for a continuous signal to ``begin life" or ``end up" as a digital one, and fully  flesh out the stages pictured in \cref{fig:sq}.

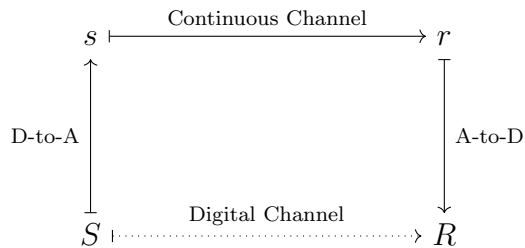
\begin{figure}[H]
    \centering
    \begin{tikzpicture}[descr/.style={fill=white}]
    \matrix(m)[matrix of math nodes, row sep=1em, column sep=2.5em,
    text height=1.5ex, text depth=0.25ex]
    {& s &&&& r \\ &&&&&\\ &&&&&\\ &&&&&\\ &&&&& \\ & S &&&& R \\};
    7
    \path[|->,font=\scriptsize]
    (m-1-2) edge node[above] {Continuous Channel} (m-1-6)
    (m-6-2) edge node[left] {D-to-A} (m-1-2);
    \path[|->,font=\scriptsize]
    (m-1-6) edge node[right] {A-to-D} (m-6-6);
    \path[|->,dotted, font=\scriptsize]
    (m-6-2) edge node[above] {Digital Channel}(m-6-6);
    \end{tikzpicture}
    \caption{Signal life-cycle in DSP.}
    \vspace*{-0.07in}
    \label{fig:sq}
\end{figure}

\subsection{Moving between digital and analog settings}
The process by which $s\in L^2(\R)$ is generated from $S\in L^2(\Z_N)$, is called \textit{digital-to-analog} ($\text{D-to-A}$), 
as illustrated on the left-hand side of \cref{fig:sq}. Similarly, the process by which the received signal $r$ ends up as a digital signal $R$, is called \textit{analog-to-digital}, illustrated on the right-hand side of \cref{fig:sq} 
($\text{A-to-D}$). \\ There are various options for these processes \cite{oppenheim1983signals} and in this work we adopt a choice that is frequently used in the literature  \cite{tse2005fundamentals}. This approach is typically attributed to Shannon, and was introduced in his seminal work \cite{shannon1949communication}. We proceed to provide a brief description of Shannon's approach below.

Let's first recall that the signal transmitted $s$, has bandwidth $W$. In addition, let's denote the \textit{duration}\footnote{As a consequence of the \textit{uncertainty principle}, signals cannot simultaneously be \textit{time-limited} and \textit{bandlimited}. In practice, signals will be \textit{essentially} time-limited and bandlimited.} of $s$ by $T$. Shannon showed that the space of continuous signals with this duration and bandwidth, is close to being $N$ dimensional for $N=TW$ (for simplicity, we assume $TW$ is an integer), in a sense which we do not make explicit here. Along the way, he provided an explicit formula for a linear map,
\begin{align}\label{eq:da}
    \text{D-to-A}&: L^2(\Z_N) \to L^2(\R), \\
    S&\mapsto s,
\end{align}
that would produce signals of duration $T$ and bandwidth $W$. In addition, he provided exact formulas for a linear map,
\begin{align}\label{eq:ad}
   \text{A-to-D}&: L^2(\R) \to L^2(\Z_N), \\
   r &\mapsto R
\end{align}
\
(In fact these maps model certain physical procedures, which have been implemented in devices in order to generate and process continuous signals \cite{oppenheim1983signals}.)

While we do not state the formulas that realize maps \ref{eq:da} and \ref{eq:ad} at this point, we next present a key property of these maps that will enable us to perform the continuous estimation task \ref{CET} \textit{digitally}.  For the reader's reference, exact formulas for $\text{D-to-A}$ and $\text{A-to-D}$ can be found in \cref{app:digital-analog}, and their key attribute can be immediately derived from these. \\
Note that, in describing this property, we will refer to a certain \textit{grid} in $\R^2$, namely, $\frac{1}{W}\Z\times \frac{1}{T}\Z$, as the \textit{time-frequency grid}, denoted
\begin{align}
    \Lambda_{T,W} = \frac{1}{W}\Z\times \frac{1}{T}\Z.
\end{align}
In addition, $H_{\tau_0,\omega_0}$, for $(\tau_0,\omega_0)\in \Z^2_N$, will denote the \textit{discrete time-frequency shift} operator on $L^2(\Z_N)$ given by,
\begin{align*}
    H_{\tau_0,\omega_0}S[\tau] & = e^{2\pi i\frac{\omega_0\tau}{N}}S[\tau-\tau_0]. 
\end{align*}
We are now ready to state the property of interest.
\begin{restatable}{prop}{propadda}\label{prop:ad-da}
For a continuous shift on the time-frequency grid, $(t_0,f_0) \in \Lambda_{T,W}$, there exists $(\tau_0,\omega_0)$ in $\Z_N^2$ such that 
\begin{align}
    \text{A-to-D}\circ h_{t_0,f_0}\circ \text{D-to-A} &= H_{\tau_0,\omega_0}
\end{align}
\end{restatable}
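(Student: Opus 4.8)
The plan is to reduce the composition to an explicit computation on samples, exploiting the fact that the time-frequency grid $\Lambda_{T,W}$ is calibrated exactly to the sampling structure underlying $\text{D-to-A}$ and $\text{A-to-D}$. First I would recall the explicit formulas from \cref{app:digital-analog}, from which I expect two facts to be immediate: that $\text{D-to-A}$ is a linear isomorphism onto its image $V \subset L^2(\R)$ with $\text{A-to-D}$ as its inverse (so that $\text{A-to-D}\circ\text{D-to-A} = \mathrm{Id}$ on $L^2(\Z_N)$), and that $\text{A-to-D}$ amounts to evaluation at the sample points $t = \tau/W$, $\tau \in \Z_N$, satisfying the interpolation identity $s(\tau/W) = S[\tau]$ for $s = \text{D-to-A}(S)$. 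The image space $V$ consists of signals band-limited to $W$ and periodic of period $T$, so their samples are naturally $N$-periodic for $N = TW$.

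Given $(t_0,f_0)\in\Lambda_{T,W}$, write $t_0 = a/W$ and $f_0 = b/T$ with $a,b\in\Z$, and set $\tau_0 = a \bmod N$ and $\omega_0 = b \bmod N$. Applying the composition to an arbitrary $S\in L^2(\Z_N)$, with $s = \text{D-to-A}(S)$ and $g = h_{t_0,f_0}s$, sampling gives
\begin{align*}
(\text{A-to-D}\circ h_{t_0,f_0}\circ\text{D-to-A})(S)[\tau] &= g(\tau/W) = e^{2\pi i f_0 \tau/W}\, s(\tau/W - t_0).
\end{align*}
The two factors then match the two ingredients of the discrete shift exactly: the phase is $e^{2\pi i b\tau/(TW)} = e^{2\pi i \omega_0 \tau/N}$, while the argument of $s$ is $(\tau-a)/W$, whence $s((\tau-a)/W) = S[(\tau-a)\bmod N] = S[\tau-\tau_0]$ by the interpolation identity together with the $N$-periodicity of the samples. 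Combining these yields $H_{\tau_0,\omega_0}S[\tau]$, as desired.

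The main obstacle I anticipate is justifying the step $s((\tau-a)/W) = S[\tau-\tau_0]$ when $\tau - a$ falls outside the fundamental sample window $\{0,\dots,N-1\}$: this is precisely where the discrete shift acquires its modulo-$N$ character, and it rests on the periodicity built into the Shannon reconstruction rather than on naive sinc interpolation over $\R$. Establishing this cleanly reduces to showing that $V$ is invariant under grid shifts, $h_{t_0,f_0}(V)\subseteq V$ for $(t_0,f_0)\in\Lambda_{T,W}$, which also guarantees that $\text{A-to-D}$ acts on $g$ as clean, alias-free sampling. This invariance is transparent on the Fourier side: $V$ is spanned by the $N$ exponentials $e^{2\pi i (m/T)t}$, a time shift by $a/W$ multiplies the $m$-th exponential by the phase $e^{-2\pi i ma/N}$, and a frequency shift by $b/T$ sends it to the exponential with index $m+b$ taken mod $N$. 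A secondary point to verify is that any modulation or demodulation by the carrier frequency $f_c$ embedded in the actual $\text{D-to-A}$ and $\text{A-to-D}$ formulas contributes only phases that cancel in the composition or are absorbed into $(\tau_0,\omega_0)$; the narrowband assumption and the alignment of the grid to the carrier should render this bookkeeping routine once the explicit formulas are in hand.
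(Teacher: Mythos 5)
Your overall computational skeleton matches the paper's proof: substitute the explicit formulas, split the result into a phase factor times a shifted sample, and read off $(\tau_0,\omega_0)$ as the mod-$N$ reductions of $(a,b)$; your treatment of the frequency phase $e^{2\pi i b\tau/(TW)} = e^{2\pi i\omega_0\tau/N}$ is exactly the paper's. You also correctly flag the one nontrivial step, namely why $s((\tau-a)/W)$ becomes the \emph{cyclic} shift $S[\tau-\tau_0]$. But the mechanism you propose for that step fails for the maps the paper actually defines. You model $\text{A-to-D}$ as pure evaluation at the points $\tau/W$ and locate the periodicity inside the image space $V$ of $\text{D-to-A}$, claiming $V$ is spanned by the exponentials $e^{2\pi i(m/T)t}$ and is invariant under grid shifts. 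Neither claim holds: by \cref{eq:daformula}, $\text{D-to-A}(S)$ is a carrier-modulated \emph{finite} sum of sincs $\sum_{\tau'} S[\tau']\,\text{sinc}_W(t-\tau'/W)$, which is not $T$-periodic (a nonzero $T$-periodic function is not even in $L^2(\R)$, so $V\subset L^2(\R)$ cannot be spanned by such exponentials), and a time shift by $a/W$ moves sinc centers to points $(\tau'+a)/W$ outside the window $\Set{0,\dots,N-1}/W$, where they are orthogonal to everything in $V$; likewise a frequency shift by $b/T$ moves the Fourier support off that of $V$. So $h_{t_0,f_0}(V)\not\subseteq V$, and the invariance argument you lean on cannot be carried out.

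The wraparound actually comes from the other side of the composition: by \cref{eq:adformula}, $\text{A-to-D}$ is not evaluation but the periodization $R[\tau]=\sum_{m\in\Z} r\p{\tau/W+mT}$. When you sample the time-shifted signal, a sinc centered at $(\tau'+a)/W$ with $\tau'+a\ge N$ is picked up not by the $m=0$ term but by the unique translate $m$ solving $\tau-(\tau'+a)+mN=0$; summing over $m$ is precisely what folds out-of-window centers back into the window and produces $S[(\tau-a)\bmod N]$. This is how the paper's proof proceeds. Your carrier bookkeeping also needs repair: there is no demodulation inside $\text{A-to-D}$ for the modulation to ``cancel'' against, and the narrowband assumption is irrelevant here; the carrier phase $e^{2\pi if_c(t-t_0)}$ is identically $1$ at all sample points $\tau/W+mT$ simply because the paper stipulates that $f_c$ is an integer multiple of $W$, making $f_ct_0$, $f_c\tau/W$, and $f_cmT$ integers. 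With $\text{A-to-D}$ corrected to the periodization sum, your computation goes through and coincides with the paper's; as written, however, the key step rests on a false description of the maps.
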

We have the following remark regarding Property \ref{prop:ad-da}.
\begin{remark}[Boundedness assumption]\label{rem:recovery}If in addition to lying on the grid $\Lambda_{T,W}$, we assume, for instance, that $t_0\in [0,T]$ and $f_0\in[-W/2,W/2]$, then the continuous shift $(t_0,f_0)$ can be uniquely recovered from the discrete shift $(\tau_0,\omega_0)$ with which it is associated by Property \ref{prop:ad-da}. For simplicity of exposition, we will adopt the boundedness assumption for the rest of this work.
\end{remark}

With \ref{prop:ad-da} in hand, we can now move towards recasting the continuous estimation task \ref{CET} as a digital one.

\vspace{.25cm}
\subsection{Digital model and estimation task}
The first step towards a digital estimation task is formulating a model for the digital channel relationship between the ``transmitted" and ``received" signals, $S$ and $R$. 

The relationship between $S$ and $R$ is induced -- as we see from following the arrows of \cref{fig:sq} -- by the continuous channel relationship \ref{eq:cont} between $s$ and $r$. If we assume that $(t_j,f_j)$ in \cref{eq:cont} lie on the grid $\Lambda_{T,W}$ then, by \cref{prop:ad-da}, the relationship between $R$ and $S$ can be modeled as,
\begin{align}\label{eq:digital}
    R[\tau] & = \sum_{j=1}^k \alpha_j \cdot H_{\tau_j,\omega_j}S[\tau] + \text{Noise}
\end{align}
for $(\tau_j,\omega_j) \in \Z_N^2$. Moreover, under suitable assumptions (see \cref{rem:recovery}), $(t_j,f_j)$ can be uniquely recovered from $(\tau_j,\omega_j)$. 

\begin{remark}[On-the-grid assumption]
In practice, the continuous shifts $(t_j,f_j)$ in the channel model \ref{eq:cont}, will never lie exactly on the time-frequency grid $\Lambda_{T,W}$. However, we make the ``on-the-grid" assumption in a heuristic manner -- any resulting estimation scheme could still be applied in the general setting and, if one expects some form of continuity in the model, a continuous shift $(t_j,f_j)$ could be identified by the point on the grid closest to it. Note that we will adopt the on-the-grid assumption for the rest of this work.
\end{remark}\
We will refer to \ref{eq:digital} as the \textit{digital channel model}.

At first glance, the digital model \ref{eq:digital} might seem unsatisfactory since it only \textit{resolves} or distinguishes two shifts $(t_1,f_1)$ and $(t_2,f_2)$ that are at least $\p{\frac{1}{W},\frac{1}{T}}$ apart but, if the resolution $\p{\frac{1}{W},\frac{1}{T}}$ is sufficiently small -- equivalently, if $N=TW$ is sufficiently large -- 
this model can begin to be useful. 
For instance, state-of-the-art radars aim to resolve objects at a distance of as little as \textit{centimeters} apart \cite{schneider2005automotive}. So, in such applications, the model starts to be useful for bandwidth on the order of $10^{9} Hz$ or, equivalently, $N$ on the order of $10^6$ for signals of duration in \textit{milliseconds}.

We make a remark regarding this standard model.
\begin{remark}
A characteristic of \ref{eq:digital}, is that if we assume this model, our ability to resolve time-frequency shifts is limited by the bandwidth and duration of the signal transmitted. However, since \ref{eq:digital} is a consequence of our choice of D-to-A and A-to-D, it is unclear to the author whether this limitation is fundamental in nature, or simply a consequence of that choice.
\end{remark} 
In summary of the above discussion, we can reduce the continuous estimation task \ref{CET} to the following:
\begin{task}[\textbf{\textit{Digital Estimation}}]\label{DET}
Assuming \cref{eq:digital}, detect $(\tau_j,\omega_j)$ for $j=1, \dots , k$.
\end{task}
We need an \textit{estimation scheme} in order to perform the task \ref{DET}, and in this work, we will assume certain ``Rules of the Game" for any such scheme. 

For the reader's benefit, we introduce some notation that will be helpful in describing these rules. \\
The evaluation of a signal $S\in L^2(\Z_N)$ at $\tau\in \Z_N$ will be referred to as \textit{sampling} $S$, and the value $S[\tau]$ will be called a \textit{sample} of $S$. In addition, we will denote the \textit{channel operator} by $H$, namely,
\begin{align*}
    H &= \sum_{i=1}^k \alpha_i \cdot H_{\tau_i, \omega_i}.
\end{align*}
\subsection{Rules of the digital estimation game} 
In this work, we only study/design digital estimation schemes which may involve:
\begin{itemize}
       \item A choice of ``mechanism" $\mathbb{P}$ for picking an element in $L^2(\Z_N)$. More precisely, we will be able to choose a \textit{probability distribution} $\mathbb{P}$ on $L^2(\Z_N)$.
    \item A choice of estimation algorithm which may involve: 
    \begin{itemize}
        \item Picking a digital signal $S\in L^2(\Z_N)$ using $\mathbb{P}$.
        \item Storing samples of elements of $L^2(\Z_N)$ as needed, such as $S$ and
        \begin{align*}
            R = HS +\text{Noise},
        \end{align*}
        \item Correlating and/or applying linear operators to $S$ and $R$.
    \end{itemize}
\end{itemize}
\vspace{.25cm}
In the next chapter, we describe existing schemes for the given digital estimation game.
%
\chapter{Existing Methods and Computational Challenges}   
%
%
%
%
%
%

We now elaborate existing estimating schemes to perform the digital estimation task (\ref{DET}).

\section{\textbf{Matched Filter and Pseudorandom Method}}\label{existing:prm}
Given the ``Rules of the Game" in \cref{sec:dig}, we might choose an estimation scheme which involves picking a signal $S$ such that, for $R=H_{\tau_0,\omega_0}S$, the inner products  $\<R,H_{\tau,\omega}S\>$ have a distinguished maximum (in magnitude) at $(\tau,\omega) = (\tau_0,\omega_0)$. 

For convenience, we introduce some standard notation \cite{levanon2004radar} at this stage.\\ Given any two signals in $L^2(\Z_N)$ we can define their \textit{ambiguity function} as follows.
\begin{definition} We define the \underline{ambiguity function} of $S$ against $R$, denoted $\mathcal{A}(S,R)$, on the $\Z_N^2$ plane as follows
\begin{align*}
    \mathcal{A}(S,R)[\tau,\omega] &= \<H_{\tau,\omega}S,R\> & \tau,\omega \in \Z_N^2.
\end{align*}
We denote the ambiguity function of a signal $S$ against itself, simply as $\mathcal{A}(S)$.
\end{definition}
Such functions are sometimes referred to as a \textit{matched filter}. \\ Now that we have this notation in hand, note that $\mathcal{A}$ satisfies the following convenient property.
\begin{prop}[Shifts]\label{prop:amb2}
It follows from the fact that operators $H_{\tau,\omega}$ are unitary that,
\begin{align*}
    \mathcal{A}\p{S,H_{\tau_0,\omega_0}S}=\mathcal{A}(S)[\tau-\tau_0,\omega-\omega_0]
\end{align*}
for $S,R\in L^2(\Z_N)$ and $(\tau,\omega), (\tau_0,\omega_0) \in \Z_N^2$.
\end{prop}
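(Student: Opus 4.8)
The plan is to unfold the definition of the ambiguity function and then exploit unitarity exactly as the proposition's preamble advertises. By definition,
\[
\mathcal{A}(S, H_{\tau_0,\omega_0}S)[\tau,\omega] = \langle H_{\tau,\omega}S,\, H_{\tau_0,\omega_0}S\rangle ,
\]
so the first step is to move the operator $H_{\tau_0,\omega_0}$ out of the right-hand slot. Since $H_{\tau_0,\omega_0}$ is a phase multiplication followed by a cyclic shift, it is unitary, hence its adjoint equals its inverse, and
\[
\langle H_{\tau,\omega}S,\, H_{\tau_0,\omega_0}S\rangle = \langle H_{\tau_0,\omega_0}^{-1}H_{\tau,\omega}S,\, S\rangle .
\]
This already has the shape of an ambiguity function of $S$ against itself, provided we can simplify the composite operator $H_{\tau_0,\omega_0}^{-1}H_{\tau,\omega}$ into a single time-frequency shift.

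The key step is therefore to compute this composite via the composition law for discrete time-frequency shifts. A direct calculation from $H_{a,b}S[\tau] = e^{2\pi i b\tau/N}S[\tau-a]$ yields the Heisenberg-type relation $H_{a,b}H_{c,d} = e^{-2\pi i da/N}H_{a+c,\,b+d}$, and in particular $H_{\tau_0,\omega_0}^{-1} = e^{-2\pi i \omega_0\tau_0/N}H_{-\tau_0,-\omega_0}$. Substituting and collecting the exponentials reduces the composite to $H_{\tau_0,\omega_0}^{-1}H_{\tau,\omega} = c\cdot H_{\tau-\tau_0,\,\omega-\omega_0}$ for a unimodular scalar $c$, so that
\[
\langle H_{\tau_0,\omega_0}^{-1}H_{\tau,\omega}S,\, S\rangle = c\,\langle H_{\tau-\tau_0,\,\omega-\omega_0}S,\, S\rangle = c\,\mathcal{A}(S)[\tau-\tau_0,\,\omega-\omega_0],
\]
which is the claimed identity up to the factor $c$. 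An equivalent and perhaps cleaner route bypasses the abstract group law: expand both sides as explicit sums over $\Z_N$ and apply the change of summation variable $\sigma \mapsto \sigma - \tau_0$, which matches the two expressions directly.

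I expect the one genuine subtlety — and the main obstacle — to be the phase cocycle produced by the non-commutativity of the time-frequency shifts. Tracking it carefully gives $c = e^{2\pi i \tau_0(\omega-\omega_0)/N}$, so the identity is exact only up to this unimodular factor; literally, it is an equality of magnitudes, $\lvert\mathcal{A}(S,H_{\tau_0,\omega_0}S)[\tau,\omega]\rvert = \lvert\mathcal{A}(S)[\tau-\tau_0,\,\omega-\omega_0]\rvert$. This is precisely what the matched-filter scheme requires, since the detection step locates the peak of $\lvert\mathcal{A}\rvert$ and is insensitive to the phase. I would accordingly either carry the explicit phase $c$ throughout or restrict the statement to magnitudes; apart from the bookkeeping of that exponential, the argument is a routine unwinding of the definitions together with unitarity.
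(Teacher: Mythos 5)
Your proposal is correct, and it follows exactly the route the paper gestures at: the paper offers no proof of this proposition at all, only the parenthetical claim that it ``follows from unitarity,'' which is precisely your first step of moving $H_{\tau_0,\omega_0}$ across the inner product via $H^{*}=H^{-1}$ and then invoking the composition law $H_{a,b}H_{c,d}=e^{-2\pi i da/N}H_{a+c,b+d}$. More importantly, your bookkeeping of the phase cocycle is right and amounts to a genuine correction of the statement as printed: a direct computation (or your group-law argument) gives
\begin{align*}
  \mathcal{A}\bigl(S,H_{\tau_0,\omega_0}S\bigr)[\tau,\omega]
  = e^{2\pi i \tau_0(\omega-\omega_0)/N}\,\mathcal{A}(S)[\tau-\tau_0,\omega-\omega_0],
\end{align*}
so the claimed identity holds exactly only after taking absolute values; the unimodular factor $e^{2\pi i \tau_0(\omega-\omega_0)/N}$ cannot be removed in general (it is nontrivial whenever $\tau_0\neq 0$ and $\omega\neq\omega_0$). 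As you observe, this does not damage the paper's use of the proposition, since the pseudorandom and chirp methods only locate peaks of $\lvert\mathcal{A}(S,R)\rvert$, and indeed the paper's later statements (e.g.\ the chirp cross-correlation lemma and the reduction to SFFT) are phrased in magnitudes. The only other blemishes in the statement are cosmetic and you implicitly fix them too: the left-hand side should carry the evaluation point $[\tau,\omega]$, and the variable $R$ in the quantifier is never used. So: correct proof, same underlying idea as the paper intends, plus a legitimate sharpening --- either carry the explicit phase or restate the proposition for $\lvert\mathcal{A}\rvert$, as you suggest.
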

Now recall our original premise -- we would like to pick a signal $S$ such that, for $R=H_{\tau_0,\omega_0}S$, the inner products  $\<R,H_{\tau,\omega}S\>$ have a distinguished peak at $(\tau,\omega) = (\tau_0,\omega_0)$. By \cref{prop:amb2}, restating this premise using the notation of the ambiguity function, we would like to pick a signal $S$ such that $\mathcal{A}(S)$ has a distinguished peak at $(\tau,\omega) = (0,0)$. Interestingly, such signals are readily available -- if we pick a signal from $L^2(\Z_N)$ \textit{at random}, in a sense that we do not make precise here, it will satisfy this requirement almost surely. In practice, a \textit{pseudorandom} signal is used \cite{fish2013delay}, and so the resulting estimation scheme is often referred to as the \textit{pseudorandom method}. The ambiguity function of such a signal is illustrated in \cref{fig:pseudorand}.
 
\begin{figure}[H]
  \centering
     \includegraphics[width=0.6\textwidth]{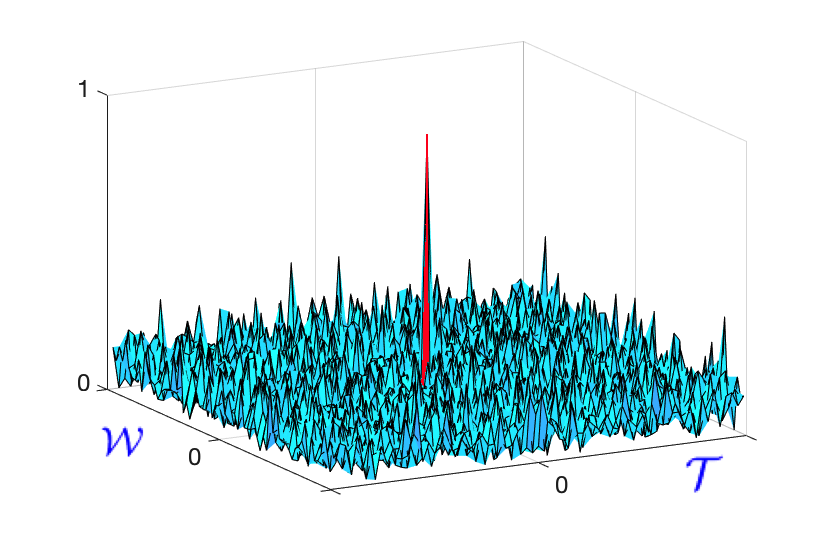}
  \caption{Ambiguity function of pseudorandom signal $S$.}
  \label{fig:pseudorand}
\end{figure}

The ambiguity function $\mathcal{A}$ also satisfies the following useful property.
\begin{prop}[Hermitian]\label{prop:amb1}
The ambiguity function $\mathcal{A}$ is a Hermitian form on $L^2(\Z_N)$. Namely, for $S_1,S_2,R \in L^2(\Z_N)$ and $\alpha \in \C$,
\begin{enumerate}
    \item $\mathcal{A}(\alpha \cdot S_1+S_2,R)= \alpha \cdot \mathcal{A}(S_1,R)+\mathcal{A}(S_2,R)$
    \item $\mathcal{A}(S_1,S_2)= \overline{\mathcal{A}(S_2,S_1)}$
\end{enumerate}

\end{prop}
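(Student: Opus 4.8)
The plan is to fix a single point $(\tau,\omega)\in\Z_N^2$ and regard the scalar $\langle H_{\tau,\omega}S_1,S_2\rangle=\mathcal{A}(S_1,S_2)[\tau,\omega]$ as an ordinary form on $L^2(\Z_N)$, so that both assertions become pointwise statements for arbitrary $(\tau,\omega)$. In this framing the only ingredients I need are the sesquilinearity and conjugate symmetry of the standard inner product $\langle\cdot,\cdot\rangle$, together with two elementary features of the shift operator that are immediate from the formula $H_{\tau,\omega}S[t]=e^{2\pi i\omega t/N}S[t-\tau]$: that $H_{\tau,\omega}$ is linear, and that it is unitary (the latter is already invoked in the preamble to \cref{prop:amb2}).

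For the first assertion I would observe that $H_{\tau,\omega}$ is the composition of pointwise multiplication by the fixed sequence $t\mapsto e^{2\pi i\omega t/N}$ with the translation $S\mapsto S[\,\cdot-\tau\,]$, both of which are linear, so that $H_{\tau,\omega}(\alpha S_1+S_2)=\alpha\,H_{\tau,\omega}S_1+H_{\tau,\omega}S_2$. Substituting this into the definition of $\mathcal{A}$ and using linearity of $\langle\cdot,\cdot\rangle$ in its first argument gives
\begin{align*}
\mathcal{A}(\alpha S_1+S_2,R)[\tau,\omega]
&=\langle \alpha\,H_{\tau,\omega}S_1+H_{\tau,\omega}S_2,\,R\rangle\\
&=\alpha\,\langle H_{\tau,\omega}S_1,R\rangle+\langle H_{\tau,\omega}S_2,R\rangle,
\end{align*}
which is precisely $\alpha\,\mathcal{A}(S_1,R)[\tau,\omega]+\mathcal{A}(S_2,R)[\tau,\omega]$. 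Since $(\tau,\omega)$ was arbitrary, this establishes the claimed linearity as an identity of functions on $\Z_N^2$.

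For the second assertion I would begin from the conjugate symmetry $\langle a,b\rangle=\overline{\langle b,a\rangle}$ of the Hermitian inner product, applied with $a=H_{\tau,\omega}S_2$ and $b=S_1$, to obtain
\begin{align*}
\overline{\mathcal{A}(S_2,S_1)[\tau,\omega]}
=\overline{\langle H_{\tau,\omega}S_2,S_1\rangle}
=\langle S_1,\,H_{\tau,\omega}S_2\rangle.
\end{align*}
It then remains to re-express the right-hand side as a value of $\mathcal{A}(S_1,\cdot)$, which I would do by transferring $H_{\tau,\omega}$ from the second slot of the inner product back to the first through its adjoint; since $H_{\tau,\omega}$ is unitary, $H_{\tau,\omega}^{*}=H_{\tau,\omega}^{-1}$, and I would compute this inverse explicitly from the defining formula. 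I expect this to be the only nonformal step, and hence the main obstacle: writing $\langle S_1,H_{\tau,\omega}S_2\rangle$ as an explicit sum over $\Z_N$ and reindexing by the shift $\tau$ forces one to track the modulation phase $e^{2\pi i\omega t/N}$ through the change of variables, so that the conjugated form is matched against the correct value of the ambiguity function. Everything else reduces to a direct appeal to the sesquilinearity and conjugate symmetry of $\langle\cdot,\cdot\rangle$ and requires no work beyond the definitions.
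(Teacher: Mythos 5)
Your treatment of the first assertion is correct and complete: linearity of $H_{\tau,\omega}$ together with linearity of $\langle\cdot,\cdot\rangle$ in its first slot is all that is required. (For reference, the paper states \cref{prop:amb1} without any proof, treating both parts as immediate, so there is no in-paper argument to compare against.)

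The second assertion is where there is a genuine gap, and it is not merely that you defer the key step (``I would compute this inverse explicitly''): the deferred computation, carried out honestly, refutes the statement in the form you committed to at the outset, namely as a pointwise identity at the \emph{same} $(\tau,\omega)$. Composing the defining formulas gives $H_{\tau,\omega}\circ H_{-\tau,-\omega}=e^{\frac{2\pi i}{N}\omega\tau}\,\mathrm{Id}$, hence $H_{\tau,\omega}^{-1}=e^{-\frac{2\pi i}{N}\omega\tau}H_{-\tau,-\omega}$, and therefore
\begin{align*}
\overline{\mathcal{A}(S_2,S_1)[\tau,\omega]}
=\langle S_1,H_{\tau,\omega}S_2\rangle
=\langle H_{\tau,\omega}^{-1}S_1,S_2\rangle
=e^{-\frac{2\pi i}{N}\omega\tau}\,\mathcal{A}(S_1,S_2)[-\tau,-\omega].
\end{align*}
So Hermitian symmetry holds only after reflecting the shift, $(\tau,\omega)\mapsto(-\tau,-\omega)$, and up to a unimodular phase; the same-point identity is false. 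A concrete counterexample: for $N=2$, $S_1=\delta_0$, $S_2=\delta_1$, $(\tau,\omega)=(1,1)$, one has $H_{1,1}\delta_0=-\delta_1$ and $H_{1,1}\delta_1=\delta_0$, so $\mathcal{A}(S_1,S_2)[1,1]=-1$ while $\overline{\mathcal{A}(S_2,S_1)[1,1]}=1$. A correct write-up must therefore terminate at the displayed identity rather than the literal statement (the paper's own formulation is loose on exactly this point). Note that the corrected identity yields $|\mathcal{A}(S_1,S_2)[\tau,\omega]|=|\mathcal{A}(S_2,S_1)[-\tau,-\omega]|$, which is all the paper ever uses: its appeals to \cref{prop:amb1} concern only magnitudes and (essential) supports, and the relevant supports are lines through the origin, which are stable under negation.
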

This suggests the following estimation scheme for the pseudorandom method. 

\subsection{Estimation scheme}

We first briefly remind the reader that the digital estimation task (\ref{DET}) assumes \cref{eq:digital}. \\ The pseudorandom method can then be described as follows -- choosing and transmitting a pseudorandom signal $S$, we perform the digital estimation task (\ref{DET}) by locating the $k$ largest values of $|\mathcal{A}(S,R)|$. Note that this follows from \cref{prop:amb1} and \cref{prop:amb2}, under appropriate assumptions on noise and sparsity $k$. 

The authors of \cite{fish2014performance} demonstrate that the pseudorandom scheme works, in a sense which they make precise, under minimal assumptions on those parameters. \\

\vspace{0.7cm}
We next consider the arithmetic complexity of the described estimation scheme.

\subsection{Computational complexity}
Without any additional information, performing the above estimation algorithm  will involve estimating $N^2$ correlations, i.e., naively, this will require $O(N^3)$ arithmetic operations. \\
However, it is well-known \cite{fish2013delay} that on any line in the $\Z_N^2$ plane (for an exposition on lines in $\Z_N^2$, see \cref{app:lines}), for any two signals $S_1,S_2\in L^2(\Z_N)$, $\mathcal{A}(S_1,S_2)$ can be expressed as a convolution, as elaborated in \cref{app:ambigline}. The \textit{Convolution Theorem} and \textit{Fast Fourier Transform (FFT) Algorithm}\footnote{A relevant introductory exposition on the \textit{discrete fourier transform} (DFT) can be found in \cref{app:ft}.} together then imply that we can compute $\mathcal{A}(S_1,S_2)$ on a line in $O(N\log N)$ operations and, consequently, on the entire $\Z_N^2$ plane in $O(N^2\log N)$ operations.


\vspace{0.7cm}

In terms of arithmetic complexity, $O(N^2\log N)$ is certainly a significant improvement over $O(N^3)$, but with $N$ on the order of $10^6$, one might hope -- and expect -- that under realistic assumptions on noise and number of targets we can do still better. We now describe one such improvement.

\section{\textbf{Method of Chirps}}

In \cref{existing:prm} we picked a signal $S$ such that, with $R= H_{\tau_0,\omega_0}S$, the evaluations $|\<H_{\tau,\omega}S,R\>|$, for $(\tau,\omega)\in \Z_N^2$, have a distinguished maximum at $(\tau,\omega) = (\tau_0,\omega_0)$. 
A natural next consideration may be to pick the signal $S$ such that, for instance, the evaluations $|\<H_{\tau,0}S,R\>|$, for $\tau\in \Z_N$, have a distinguished maximum at $\tau = \tau_0$. The shift $\tau_0$ can then be detected by evaluating just $N$ correlations. In other words, we achieve half the goal with just $N$ correlations, which should produce an improvement over the arithmetic complexity of the pseudorandom method by a factor of $N$. \\
This was the basis for methods studied in \cite{fish2012delay,fish2013delay,fish2013incidence,fish2014}.

By \cref{prop:amb2}, a signal $S$ which satisfies the above requirement will have ambiguity $\mathcal{A}(S,R)$ with distinguished maxima all along on the shifted line $\set{(\tau_0,\omega)}:{\omega\in \Z_N}$. For convenience, let's adopt the following notation,
\begin{align*}
    \mathcal{W} &= \set{(0,\omega)}:{\omega\in\Z_N}.
\end{align*}
Equivalently, the ambiguity of $S$, $\mathcal{A}(S)$, would be essentially supported on $\mathcal{W}$.

\vspace{0.7cm}
Next, we consider how to pick signals that satisfy such specifications. 
\subsection{Chirp signals}\label{existing:chirps}
If one tried to think up functions in $L^2(\Z_N)$, the first example one might come up with are $\delta$-functions:
\begin{align*}
    \delta_{\tau}[\tau']&= \begin{cases}
        1 & \text{ if }\tau'=\tau \\
        0 &\text{ otherwise.}
    \end{cases},& \tau'\in\Z_N,
\end{align*}
for $\tau \in \Z_N$. \\
A quick check shows that the ambiguity of a $\delta$-function is supported on this line $\mathcal{W}$, as illustrated in \cref{fig:delta}. \\
After a little more thought, one might then think of the complex exponentials\footnote{The factor of $1/\sqrt{N}$ in the definition of $e_\omega$ simply ensures that the function has unit norm.}
\begin{align*}
    e_\omega[\tau] &= \frac{1}{\sqrt{N}}\cdot e^{\frac{2\pi i}{N}\omega\tau}, &\tau\in \Z_N, 
\end{align*}
for $\omega \in \Z_N$. The ambiguity of $e_\omega$ is supported on $\mathcal{T}$, for $\mathcal{T}$ given by
\begin{align*}
    \mathcal{T} &= \set{(\tau,0)}:{\tau\in\Z_N}.
\end{align*}
See \cref{fig:exp} for an illustration of the support of $e_\omega$, $\omega \in \Z_N$.

Moreover, $|\mathcal{A}(\delta_\tau,e_\omega)|=\frac{1}{\sqrt{N}}$. So, for $S=\delta_\tau+e_\omega$, the ambiguity $\mathcal{A}(\delta_\tau,S)$ is essentially supported on $\mathcal{W}$ by \cref{prop:amb1}, as shown in \cref{fig:deltaandexp} and similarly, $\mathcal{A}(e_\omega,S)$ will be essentially supported on $\mathcal{T}$.

In fact, for \textit{any} line $L = \set{(\tau,a\tau)}:{\tau\in\Z_N}$, where $a\in \Z_N$, we can expect functions $S_L$ whose ambiguity is supported on $L$. We will refer to elements of $L^2(\Z_N)$ whose ambiguity is supported on a line as \textit{chirp signals} or \textit{chirp functions}, or simply just as \textit{chirps}.  \\
\begin{figure}[H]
    \centering
        \begin{subfigure}[b]{0.8\textwidth}
        \includegraphics[width=\textwidth]{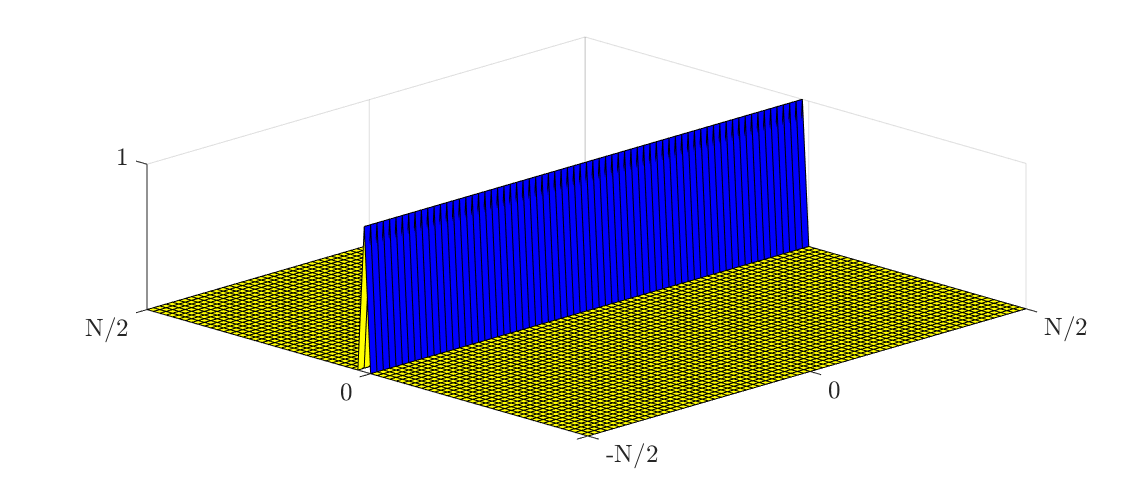}
        \caption{Ambiguity of a $\delta$-function, $\mathcal{A}(\delta_\tau)$.}\label{fig:delta}
        \end{subfigure}
        \begin{subfigure}[b]{0.8\textwidth}
        \includegraphics[width=\textwidth]{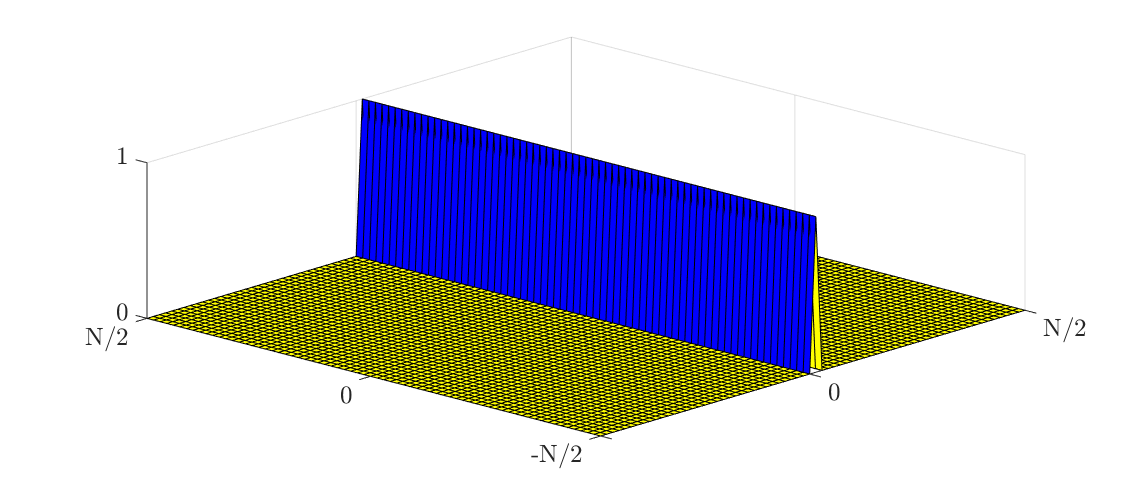}
        \caption{Ambiguity of a complex exponential, $|\mathcal{A}(e_\omega)|$.}\label{fig:exp}
        \label{fig:tshift1d}
        \end{subfigure}
        \begin{subfigure}[b]{0.8\textwidth}
        \includegraphics[width=\textwidth]{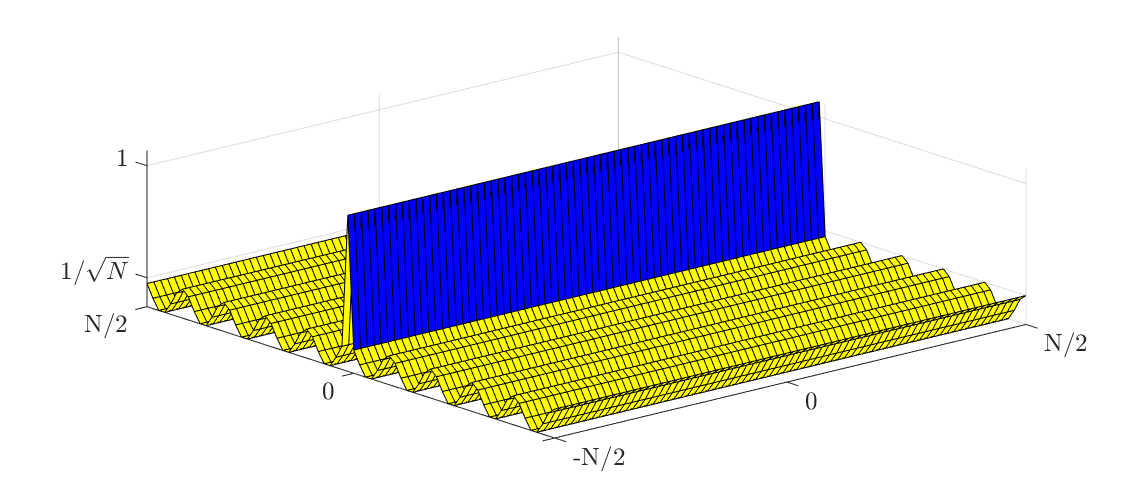}
        \caption{Real part of $\mathcal{A}(\delta_\tau,\delta_\tau+e_\omega)$.}
        \label{fig:deltaandexp}
        \end{subfigure}
        \caption{Visualizing various ambiguity functions involving $\delta_\tau$ and $e_\omega$.}
\end{figure}
The existence of such chirp functions is demonstrated in \cref{app:uaschirps}, \cref{thm:chirps}.  We construct explicit formulas for these signals (\cref{thm:chirpformula}) by utilizing the fact that operators $ \set{H_{\tau,a\tau}}:{\tau\in\Z_N}$, associated with the line $L$, can be seen as coming from special commuting subgroups the Heisenberg-Weyl group $G_{HW}$. Moreover, we show that for distinct lines $L\ne M$, $|\mathcal{A}(S_L,S_M)|=\frac{1}{\sqrt{N}}$ (\cref{lem:cross}). \\
While we don't present exact formulas for chirps at this point, we use the key fact, that they enable us to evaluate any sample $S_L[\tau]$, $\tau\in \Z_N$, in $O(1)$ arithmetic operations.

\vspace{0.5cm}
We will now detail a chirp-based estimation scheme. 


\subsection{Estimation scheme} 

For the sake of exposition, we begin this description with channel sparsity $k=1$.\\
The digital estimation process will involve choosing distinct lines $L, M$ and corresponding chirps $S_L, S_M$ and transmitting the signal $S =S_L+S_M$. The received signal $R$ will be given by \cref{eq:digital}. 
We can then locate $(\tau_1,\omega_1)$ as follows.
\begin{enumerate}
    \item Compute $|\mathcal{A}(S_L,R)|$, say, on the line $M$ and locate the distinguished peak. Let's denote the location of the peak as $(\tau_M,\omega_M)$,
    \item Compute $|\mathcal{A}(S_M,R)|$, say, on the line $L$ and locate the distinguished peak. Let's denote the location of the peak as $(\tau_L,\omega_L)$,
    \item Then $(\tau_1,\omega_1)$ is given by the point of \textit{double incidence} $(\tau_L,\omega_L)+(\tau_M,\omega_M)$.
\end{enumerate}
We now consider the case when channel sparsity $k>1$, for instance, $k=2$.\\
If we transmit, $S = S_L+S_M$, and follow the above steps to locate points of double incidence then, generically, rather than locating the two true shifts we will locate four points -- see \cref{fig:twolines} for illustration. However, we have some means of recourse to locate the true shifts. For instance,
\begin{enumerate}
    \item We might modify the estimation scheme to involve picking a third distinct line $K\subset \Z_N^2$ and corresponding chirp $S_K$, and transmit $S=S_L+S_K+S_M$ instead. As before, $R$ is given by \cref{eq:digital}. For a generic choice of lines $L,M,K$, the true shifts can be identified by points of \textit{triple incidence} (\cref{lem:genline}), as demonstrated in \cref{fig:threelines}. \\
    We can identify points of incidence three as follows: \\
    Points of double incidence can be located as before -- for instance, let $I_{LM}\subset\Z_N^2$ denote points of incidence two identified from $\mathcal{A}(S_L,R)$ and $\mathcal{A}(S_M,R)$. Similarly, we could also identify $I_{MK}$ and $I_{KL}$. Points of \textit{triple incidence} are given by their intersection -- in fact, the intersection of any two of them, e.g. $I_{LM}\cap I_{MK}$.\\
    This strategy is the basis of the \textit{Incidence Method}(IM) described in \cite{fish2013incidence,fish2013delay, fish2014}.
    \item True shifts could also be identified by using the values, rather than just the magnitude, of the ambiguity function at those points. This claim is based on the following observation: \\
    For generic $(\tau_1,\omega_1),(\tau_2,\omega_2)$, denoting $H=\sum_{j=1}^2\alpha_j\cdot H_{\tau_j,\omega_j}$, $(\tau,\omega) \in I_{LM}$ is a true shift if and only if,
    \begin{align*}
        \mathcal{A}(S_L,HS_L)[\tau,\omega] &= \mathcal{A}(S_M,HS_M)[\tau,\omega].
    \end{align*}
    The above observation coupled with the fact that, for instance, 
    \begin{align*}
        \mathcal{A}(S_L,R)[\tau,\omega] = \mathcal{A}(S_L,HS_L)[\tau,\omega] +O\p{\frac{1}{\sqrt{N}}}+ \text{Noise},    
    \end{align*}
    tells us that evaluations of $\mathcal{A}(S_L,R)$ and $\mathcal{A}(S_M,R)$ at $(\tau,\omega)\in I_{LM}$ can be used to identify true shifts.
    This strategy is the basis of the \textit{Cross Method} (CM) described in \cite{fish2012delay,fish2013incidence,fish2013delay, fish2014}. This method could be used independently or in conjunction with the Incidence Method (IM).
\end{enumerate}


\begin{figure}[H]
    \centering
    \begin{subfigure}[b]{0.48\textwidth}
        \includegraphics[width=\textwidth]{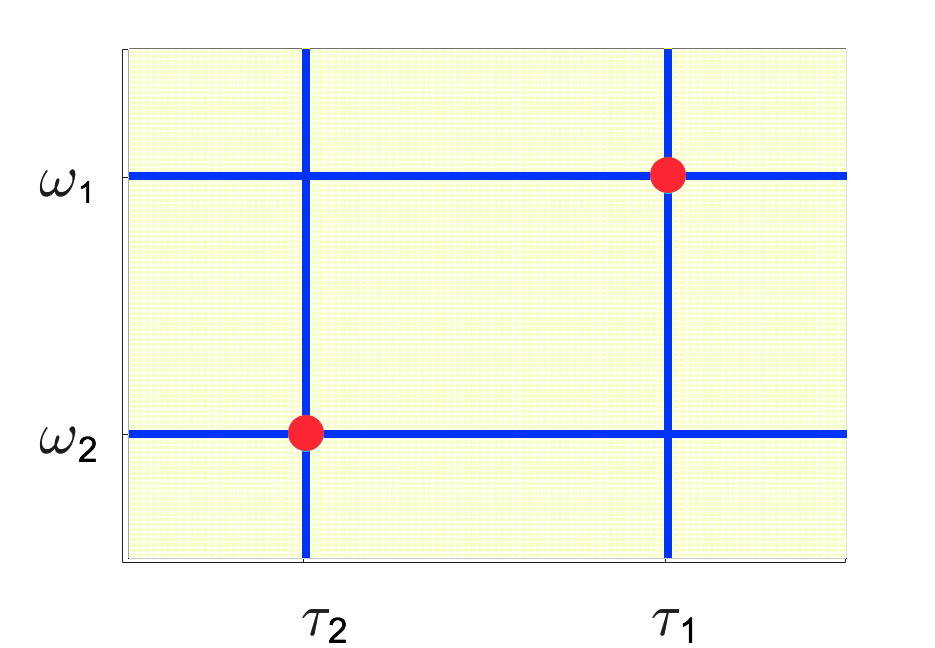}
        \caption{$S=S_\mathcal{W}+S_\mathcal{T}$, depicting essential support of $\mathcal{A}(S_\mathcal{W},R)$ and $\mathcal{A}(S_\mathcal{T},R)$ on the plane.}
        \label{fig:twolines}
    \end{subfigure}
    \begin{subfigure}[b]{0.48\textwidth}
        \includegraphics[width=\textwidth]{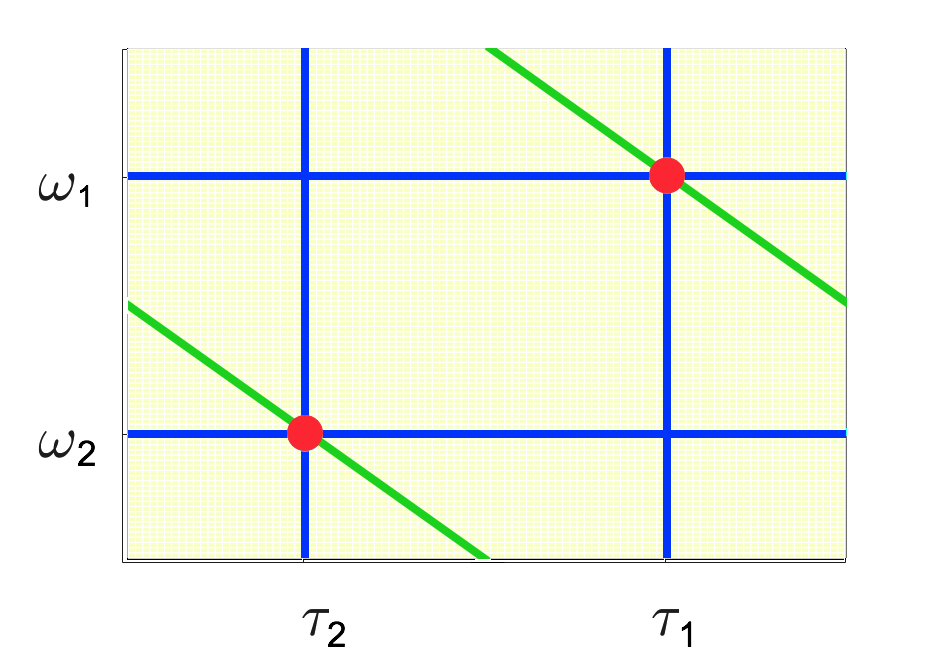}
        \caption{$S=S_\mathcal{W}+S_\mathcal{T}+S_K$, depicting essential supports of $\mathcal{A}(S_\mathcal{W},R)$, $\mathcal{A}(S_\mathcal{T},R)$, $\mathcal{A}(S_K,R)$.}
        \label{fig:threelines}
    \end{subfigure}
    \caption{Identifying true shifts $\Set{(\tau_1,\omega_1),(\tau_2,\omega_2)}$.}
\end{figure}

\vspace{0.5cm} \
Next, we demonstrate arithmetic complexity on the order of $N\log N + k^2$, for the described estimation scheme(s).

\subsection{Computational complexity}
In order to describe their computational complexity, we first note that the Incidence and Cross Methods both consist of an estimation algorithm involving two steps. As before, $S=S_L+S_M+S_K$, and $R$ will be given by \cref{eq:digital}. The arithmetic complexity of each step can then be described as follows:
\begin{enumerate}
    \item First, points of double incidence are located; for instance, $I_{LM}$ and $I_{MK}$. This can be achieved by evaluating, say, $\mathcal{A}(S_L,R)$ on $M$, $\mathcal{A}(S_M,R)$ on $L$ and $\mathcal{A}(S_K,R)$ on $M$, in a total of $O(N\log N)$ arithmetic operations.
    \item Next, points of triple incidence are located; for instance, $I_{LM}\cap I_{MK}$. For a generic choice of lines $L,M$, and $K$, both $I_{LM}$ and $I_{MK}$ consist of $k^2$ points, and so the intersection can be identified in $k^2$ operations.
\end{enumerate}
Altogether, this involves at most $c_1(N\log N+k^2)$ operations, $N$ samples of the received signal, and at most $ c_2N$ bits of storage, where $c_1$ and $c_2$ are constants independent of $N$ and $k$.

\vspace{0.7cm}

In terms of computational complexity, these methods are certainly an improvement over the pseudorandom method. However, for $N$ on the order of $10^6$ it would seem that for relevant practical levels of noise and number of targets, performing the digital estimation task (\ref{DET}) could be still more efficient.

For instance, in automotive systems bandwidth may be on the order of $GHz$ to estimate relative range with a desired resolution, however, the largest observed frequency shifts or \textit{doppler spread} can typically be assumed to be on the order of tens of $MHz$. \textit{Frequency Modulated Continuous Wave} or \textit{FMCW} radar \cite{jankiraman2018fmcw}, which is closely related to the method of chirps described here, exploits this fact about the doppler spread to work with fewer samples, in this case sampling on the order of $10^4$ rather than $10^6$ samples per millisecond. 

More interestingly, in some applications, it can also be assumed that the number of targets $k \ll N$. We refer to this as a \textit{$k$-sparse regime}, and  the digital estimation problem then becomes a \textit{$k$-sparse estimation problem}. Then, under reasonable assumptions on noise, one may hope to do still better than the above methods. This motivated the work in \cite{bar2014sub}, for instance, where the authors demonstrate an algorithm with sample complexity sublinear in $N$. 

In the next chapter, we present an estimation algorithm for the $k$-sparse regime with sampling, space \textit{and} arithmetic complexity sublinear in $N$.
%
\chapter{A Sublinear Algorithm}   
%
%
%
%
%
%

In this chapter, we describe a novel algorithm for channel estimation in the $k$-sparse regime, that we call \textit{Sparse Channel Estimation} or SCE for short. 

In order to explain the idea behind this algorithm, we first introduce another \textit{sparse estimation task}: \\
Consider $S\in L^2(\Z_N)$ such that,
\begin{align}\label{eq:sfft}
    S[\tau] & = \sum_{j=1}^k \alpha_j \cdot e_{\omega_j}[\tau] +\text{Noise},
\end{align}
where $\omega_j\in \Z_N$ and $k\ll N$. The coefficients $\alpha_j$ are closely related to the \textit{discrete fourier transform} of $S$, which we will denote as $\mathcal{F}S$ -- in particular, $\mathcal{F}S[\omega_j]$ is ``approximately" equal to $\alpha_j$. (For an introduction to fourier transforms, see \cref{app:ft}.) Since we have $k\ll N$, $\mathcal{F}S$ is \textit{essentially} $k$-sparse and the task of approximating $\mathcal{F}S$ is a \textit{sparse fourier estimation task}. This is made precise below.
\begin{task}[\textbf{\textit{Sparse fourier estimation task}}]\label{SET}
Detect $\omega_j \in \Z_N$ and estimate $\alpha_j$, for $j = 1, \dots, k$.
\end{task}

We are only interested in estimation schemes that perform the sparse fourier estimation task (\ref{SET}) with arithmetic, sampling and storage complexity at most \textit{sublinear} in $N$. Under certain assumptions on noise, it is known that this task can be performed with sampling and arithmetic complexity\footnote{Typically, the storage complexity of these schemes is simply proportional to sampling complexity, and so we may neglect to mention storage from here on out.} sublinear in $N$, and there is a large body of work on algorithms that achieve this \cite{gilbert2014recent,indyk2014nearly}. They are typically referred to as \textit{Sparse Fast Fourier Transform} or SFFT algorithms.

To see why the above discussion is relevant to our digital estimation scheme, consider the method of chirps. We will first observe that for a chirp signal $S_L$, and for any $R\in L^2(\Z_N)$, the ambiguity function $\mathcal{A}(S_L,R)$ on a line $M\ne L$ is, loosely speaking, a fourier transform of $R$ (this statement is made precise in \cref{thm:redtosfft}). It then follows that in the $k$-sparse regime, for $S=S_L$ and $R$ given by \cref{eq:digital}, the support of $\mathcal{A}(S,R)$ on $M$ is essentially $k$-sparse, and so estimating $\mathcal{A}(S,R)$ on this line becomes a sparse fourier estimation task (\ref{SET}). This is the basis of the Sparse Channel Estimation or SCE method.

\vspace{0.7cm}
In order to effectively describe Sparse Channel Estimation (SCE), we first overview the ``moving parts" of a typical SFFT algorithm.

\section{An overview of Sparse Fast Fourier Transform (SFFT) algorithms.}

Any SFFT algorithm will involve a randomized subsampling scheme and estimation procedure that utilizes the evaluated samples. Sparse Channel Estimation (SCE) does not assume a particular choice of SFFT algorithm -- SCE is \textit{modular} in this sense, and one can ``plug-in" the SFFT algorithm of their choice. \\
So, while we do not describe a specific sparse fourier transform algorithm here, we will summarize some of standard components of these methods. We describe those components by considering a couple of special cases of the sparse fourier estimation task (\ref{SET}), as we work our way up to the general case.
\begin{enumerate}
    \item \textbf{\textit{Case: $k\le1$} -- the $1$-sparse algorithm.} \\
    If $k=1$, then without noise,
    \begin{align*}
        S[\tau] = \alpha_1 \cdot e^{\frac{2\pi i}{N}\omega_1\tau}
    \end{align*}
    In this case, $\omega_1$ can be estimated from the argument of $S[1]/S[0]$, and $\alpha_1$ is simply given by $S[0]$. In other words, $\mathcal{F}S$ can be estimated with a deterministic sampling scheme of just two samples! 
    
    Returning to a model with noise, namely, (\ref{eq:sfft}) with $k\le 1$, we can perform \cref{SET} with a ``robustified" version of the above procedure. This version would involve a randomized subsampling scheme, and corresponding estimation procedure. Since the sparsity $k$ could be $1$ or $0,$ the estimation procedure comes with a positive, real-valued \textit{thresholding parameter} $\mu$ -- if the estimated coefficient $\widehat{\alpha}_1$ has magnitude less than $\mu$, $k$ is declared to be zero. For the reader's convenience, we will denote this procedure as $1\text{SFFT}_\mu$. 
    
    The theme of modularity continues here as well, since SCE does not assume a specific choice of $1$-sparse algorithm and can be implemented with a method of choice. Examples of such methods can be found in \cite{gilbert2014recent,indyk2014nearly,pawar2014robust}, for instance. However, in order to provide statistical guarantees and bounds on computational complexity, we make a choice of $1$-sparse algorithm, see \cref{app:1sparse} for the same. This algorithm takes about $\log N$ samples and arithmetic operations (\cref{lem:bitbybit,lem:thresh}).
    \item \textbf{\textit{Case: $k>1$, $\omega_j$ uniformly spread -- Discrete filter functions.}}\\
    \begin{figure}[]
    \centering
    \begin{subfigure}[b]{0.7\textwidth}
        \includegraphics[width=\textwidth]{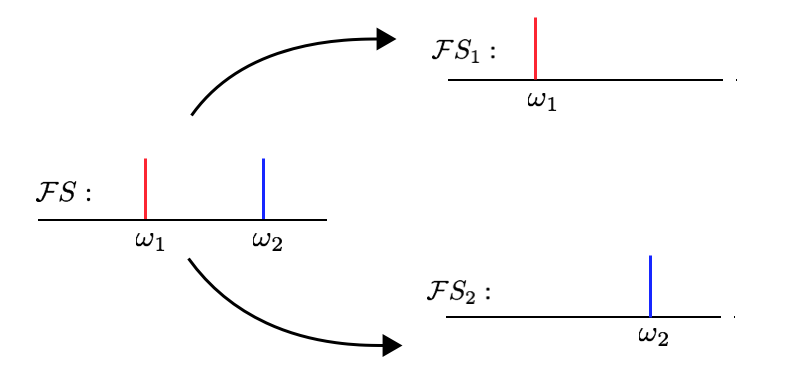}
        \caption{Schematic representation of reduction to $1$-sparse case.}
        \label{fig:schemek2}
    \end{subfigure}
    \begin{subfigure}[b]{0.7\textwidth}
        \includegraphics[width=\textwidth]{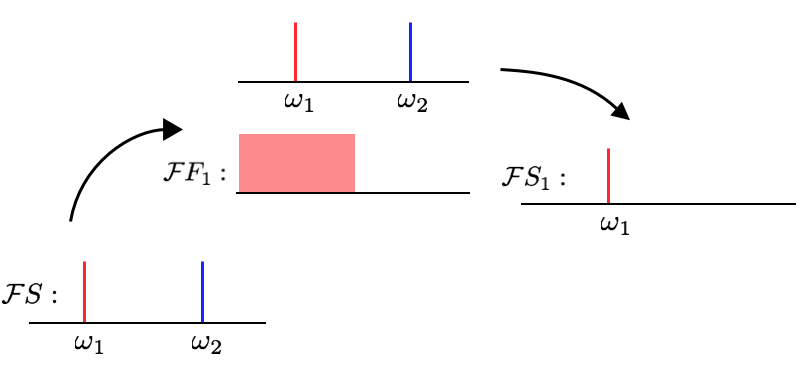}
        \caption{Schematic representation of filtering to reduce to $1$-sparse case.}
        \label{fig:filterscheme}
    \end{subfigure}
    \caption{$k=2$, $\omega_1$ and $\omega_2$ are well spread.}
    \label{fig:filtering}
\end{figure}
    We use the term \textit{uniformly spread} here to mean that each of the $k$ intervals,
    \begin{align*}
    [0,N/k], [N/k,2N/k], \dots, [(k-1)N/k,0] \subset \Z_N,
    \end{align*}
    contains only one $\omega_j$ for some $j=1, \dots, k$. In this case, \cref{SET} will be performed by reducing to the $1$-sparse case, see \cref{fig:schemek2} for a schematic illustration of this. This reduction will be achieved with certain elements of $\L^2(\Z_N)$ called \textit{filter functions}, or simply \textit{filters}. There are two prescriptions for such functions, which we enumerate below. For ease of exposition, we illustrate these requirements for $k=2$:
    \begin{itemize}
        \item We will require two filter functions $F_1$ and $F_2$, such that $\mathcal{F}F_1$ and $\mathcal{F}F_2$ are supported on intervals of length $N/2$. For instance, $\mathcal{F}F_1$ may be supported on the interval $[0,N/2]\subset{\Z_N}$, and $F_2$, given by a frequency shift,
    \begin{align*}
        F_2 = e_{N/2}\cdot F_1,
    \end{align*}
    will have $\mathcal{F}F_2$ supported on the interval $[N/2,N]\subset{\Z_N}$. 
    
    Now consider $F_1*S$ and, for simplicity, let us denote it as $S_1$. By the \textit{Convolution Theorem}, 
    \begin{align}
        \mathcal{F}(S_1)&= \mathcal{F}(F1 * S)= \mathcal{F}F_1\cdot \mathcal{F}S.
    \end{align}
     So, as illustrated in \cref{fig:filterscheme}, $\mathcal{F}(S_1)$ is $1$-sparse as desired, and so we apply the randomized sampling scheme and estimation procedure, $1\text{SFFT}_\mu$, from Case 1 here.
    
    However, for any $\tau\in \Z_N$, a single sample $S_1[\tau]$ is given by,
    \begin{align*}
        S_1[\tau] & = \sum_{\tau'\in \Z_N} F_1[\tau']\cdot S[N-\tau'].
    \end{align*}
    In other words, the number of arithmetic operations and samples of $S$ required to evaluate a single sample of $S_1$ is determined by the size of $F_1$'s support. This brings us to the second prescription.
    \item We may wish to enforce that the size of the support of $F_1$ (and, consequently, of $F_2$) be sublinear in $N$. However, the \textit{Fourier Uncertainty Principle} tells us that no function can satisfy both the above requirements simultaneously! 
    
    Fortuitously, there exist functions in $L^2(\Z_N)$ that ``approximately" satisfy both requirements. The use of the term ``approximately" here is made precise in \cref{defn:filter,defn:esssup}. More precisely, there is a family of functions $\Set{F^k}_{k\le N}$, such that $F^k$ is essentially supported on an interval of length $k\log N$, and $\mathcal{F}F^k$ is essentially supported on an interval of length $N/k$, see \cref{thm:filters} for a construction of the same. The existence of such a family of functions is mathematically non-trivial -- it follows from the existence of an eigenfunction of the continuous fourier transform that is ``highly localized", namely, the standard Gaussian.  
    
    If we now denote, 
    \begin{align*}
    S_j &= F_j^k*S &\text{ for }j=1, \dots,k,
    \end{align*}
    where $F_1^k= F^k$ then, by construction of $F^k$, we have reduced to Case 1 with each $S_j$. Moreover, for any $\tau\in \Z_N$ we can \textit{simultaneously} evaluate $S_j[\tau]$, $j=1, \dots,k$, with roughly $k$ operations and $k$ samples of $S$ (\cref{lem:filtering}). We would then like to simultaneously evaluate samples of $S_1, \dots, S_k$ using the randomized sampling scheme from Case $1$ and this, in effect, produces a subsampling scheme for $S$. 
    
    Once $S$ has been sampled, we will perform an estimation procedure which involves evaluating samples of $S_j$ and applying $1\text{SFFT}_\mu$, for $j=1, \dots, k$. For the reader's convenience, we will refer to the above procedure simply as $\text{SFFT}_\mu$.
     
    \end{itemize}

    \item \textbf{\textit{General case -- Pseudorandom spectral permutation.}} \\ 
    \textit{Pseudorandom spectral permutation} provides a means of sampling $S$ by which, intuitively speaking, we can assume that the $k$ frequencies $\omega_j$ are uniformly spread in $\Z_N$. In addition, it comes without additional cost to sampling or arithmetic complexity. 
    
    Pseudorandom spectral permutation is based on the following idea. \\
    Let $\Z_N^\times$ denote the set (in fact, group) of elements in $\Z_N$ with a multiplicative inverse modulo $N$. An element $\sigma\in \Z_N^\times$ acts on $L^2(\Z_N)$ by \textit{scaling} as follows:
    \begin{align}\label{eq:permute}
        \sigma: S&\mapsto S^{\sigma}, \\
        S^{\sigma}[\tau] & = S[\sigma \cdot \tau],
    \end{align}
    for any $S \in L^2(\Z_N)$. 
    
    This \textit{action} commutes with the discrete fourier transform as follows,
    \begin{align*}
        \mathcal{F}S^{\sigma} &= (\mathcal{F}S)^{\sigma^{-1}},\\
        \mathcal{F}S^{\sigma}[\omega] &= \mathcal{F}S[\sigma^{-1}\omega]
    \end{align*}
    Thus, scaling by a random element of $\Z_N^\times$ will randomly permute $\omega_1, \dots \omega_k$, as illustrated in \cref{fig:permute}.
    \begin{figure}[h]
        \begin{subfigure}[b]{1\textwidth}
            \centering
           \includegraphics[width=.8\textwidth]{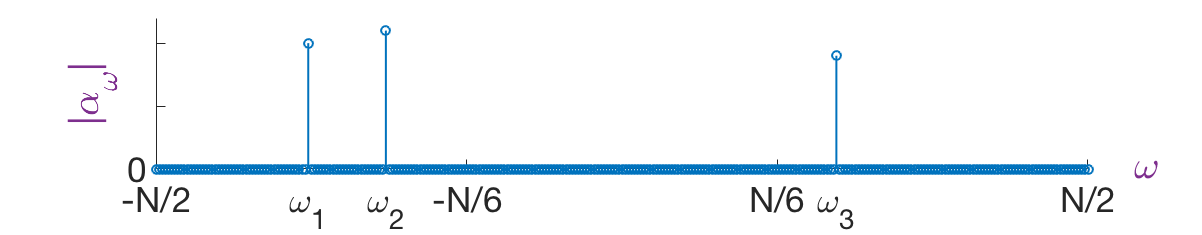}
             \caption{$\omega_j $ are not uniformly spread.}
             \label{fig:filtime} 
        \end{subfigure}
        
        \begin{subfigure}[b]{1\textwidth}
            \centering
           \includegraphics[width=.8\textwidth]{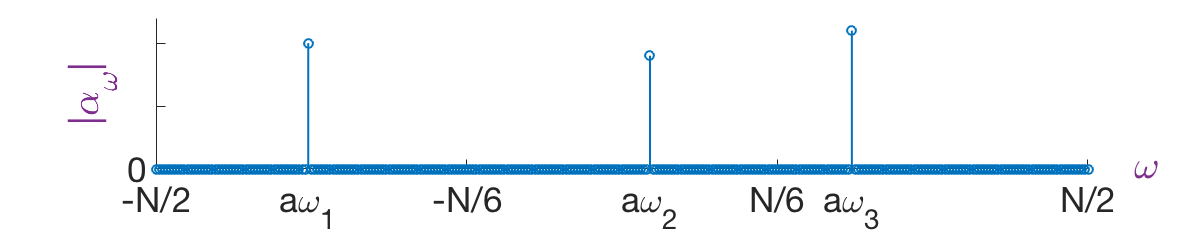}
             \caption{$a=163$, $a\omega_j $ are uniformly spread.}
             \label{fig:filfreq}
        \end{subfigure}
        \caption{Reducing to Case 2 using a pseudorandom permutation, $\omega_1 = -100$, $\omega_2 = -75$, $\omega_3=70$, $N=300$.}
        \label{fig:permute}
    \end{figure}
    Permutation by random scaling can be further ``augmented" with a frequency shift by $a\in \Z_N$ picked at random,
    \begin{align*}
        (\sigma,a)&: S\mapsto S^{\sigma,a}, \\
        S^{\sigma,a}[\tau] & = e_a[\tau]\cdot S[\sigma \cdot \tau], \\
        \mathcal{F}(e_a[\tau]\cdot S^{\sigma})[\omega] &= \mathcal{F}S[\sigma^{-1}\omega-a]
    \end{align*}
    for any $S \in L^2(\Z_N)$. 
    
    \cref{cor:isolate} tells us that with about $\log N$ random spectral permutations we can, loosely speaking, reduce the general case to Case 2. For each of those permutations, $(\sigma,a)$, we would then like to sample $S^{\sigma,a}$ using the sampling scheme from Case 2. This, in effect, produces a sampling scheme for $S$ which we will refer to as $\text{Sampling}_{\text{SFFT}}$.
    
    Once $S$ has been sampled, we will then have an estimation procedure which involves evaluating samples of $S^{\sigma,a}$ and applying $\text{SFFT}_\mu$ in order to perform \cref{SET}. Despite a minor modification, we will continue to refer to this estimation procedure as $\text{SFFT}_\mu$, for ease of exposition.

    
    
\end{enumerate}

\vspace{0.7cm}
With this overview of SFFT algorithms, we are now ready to describe the proposed method.

\section{Sparse Channel Estimation (SCE)}

The Sparse Channel Estimation (SCE) scheme employs a method of chirps, and so we transmit $S= S_L+S_M+S_K$ for three distinct randomly chosen lines $L,M,K$. $R$ is given by the digital channel model (\ref{eq:digital}) but in addition, SCE presumes the $k$-sparse regime.

We now describe the estimation algorithm. \\
The method of chirps would then involve estimating $\mathcal{A}(S_L,R)$, $\mathcal{A}(S_M,R)$ and $\mathcal{A}(S_K,R)$ on certain lines in the plane.
Recall that in the $k$-sparse regime, evaluating, for instance, $\mathcal{A}(S_L,R)$ on a line reduces to performing \cref{SET}, in this case (by \cref{thm:redtosfft}), on $S_L\cdot\overline{R}$, and so we can use an SFFT algorithm. From the previous discussion, the first step in such an algorithm would be to apply $\text{Sampling}_{\text{SFFT}}$ to $S_L\cdot\overline{R}$. This will, in effect, produce a sampling scheme for $R$. \\Once $R$ is sampled, we can then evaluate samples of $S_L\cdot\overline{R}$, $S_M\cdot\overline{R}$ and $S_K\cdot\overline{R}$, to estimate the respective ambiguities. For the reader's convenience, we denote the set of samples of $S_L\cdot \overline{R}$ as $\text{Samples}_L$ and, similarly, we also have $\text{Samples}_M$ and $\text{Samples}_K$. We will refer to this process as $\text{Sampling}_{SCE}(R)$.

We can then apply $\text{SFFT}_\mu$ to $\text{Samples}_L$, for instance, in order to estimate $\mathcal{A}(S_L,R)$ on any line (distinct from $L$) in the plane $\Z_N^2$. \cref{thm:redtosfft} provides the explicit formula that relates the output of $\text{SFFT}_\mu$ and peaks of $\mathcal{A}(S_L,R)$ on a line in $\Z_N^2$.

\vspace{0.7 cm}

We present the pseudocode for SCE below. 

\begin{algorithm}[ht]
\caption{Sparse Channel Estimation ($\text{SCE}_\mu$)}\label{alg:SCE}
\begin{algorithmic}[1]
\renewcommand{\algorithmicrequire}{\textbf{Input:}}
 \renewcommand{\algorithmicensure}{\textbf{Output:}}
\REQUIRE
Channel sparsity $k$.
\STATE Randomly choose transversal lines $K,L,M$
\STATE Transmit $S=S_K+S_L+S_M$. 
\STATE $\text{Sampling}_{SCE}(R)$ $\rightarrow$ $\text{Samples}_L, \text{Samples}_M, \text{Samples}_K$.
\STATE Locate peaks of $\mathcal{A}(S_K,R)$ on $L$ using $\text{Samples}_K$ and $\text{SFFT}_\mu$  \\ \hspace{5mm} $\rightarrow\kappa_i \in \Z_N^2$, $i \in \Set{1, \dots k}$. \\
\STATE Locate peaks of $\mathcal{A}(S_L,R)$ on $K$ using $\text{Samples}_L$ and $\text{SFFT}_\mu$\\ \hspace{5mm} $\rightarrow \ell_j \in \Z_N^2$, $j \in \Set{1, \dots k}$. 
\STATE Locate peaks of $\mathcal{A}(S_{M},R)$ on $L$ using $\text{Samples}_K$ and $\text{SFFT}_\mu$\\  \hspace{5mm}$\rightarrow m_{j'}\in \Z_N^2$, $j' \in \Set{1, \dots k}$. 
\STATE Find points of triple incidence,  $$\Set{\kappa_i+\ell_j}\cap \Set{\kappa_{i}+m_{j'}}.$$
\ENSURE Return every shift $(\tau, \omega)$ found in Step 7.
\end{algorithmic}
\end{algorithm}
%
\chapter{Complexity Bounds and Guarantees for Sparse Channel Estimation (SCE)}   
%
%
%
%
%
%

We now analyze the performance of SCE. We do so using certain standard measures to quantify the quality of such an algorithm, namely, \textit{probability of detection (PD)} and \textit{probability of false alarm (PFA)} \cite{kay1993fundamentals}.
\begin{definition} The \underline{probability of detection} (PD) of a channel estimation scheme is the probability that the $j^{th}$ target, $(\tau_j,\omega_j)$, is estimated.
\end{definition}

\begin{definition}The \underline{probability of false alarm} (PFA) of a channel estimation scheme is the probability that the $j^{th}$ shift estimated, does not correspond to any target.
\end{definition}

The guarantee for \textit{Sparse Channel Estimation} (SCE), \cref{thm:SCE}, assumes the model \cref{eq:digital} together with the following additional features. 
\begin{itemize}
\item We make an assumption on the channel sparsity.

\textbf{A1} (\textit{Sparsity}): The channel sparsity is at most $k$, where $k\ll N$ is a constant, i.e., independent of $N$.

\item We also make the following assumption on the coefficients $\alpha_j$ and their distribution.

\textbf{A2} (\textit{$\epsilon$-targets}):
There is some $A>0$ and $\epsilon\in (0,1)$ such that $(\alpha_1, \dots,\alpha_k)$ is drawn uniformly at random from the following set:
\begin{align*}
    B_{\epsilon} & = \set{x\in \C^k}:{\|x\|^2 = A \text{ and } \min_{x_j\ne 0} |x_j|\ge \epsilon\cdot \sqrt{A/k}}.
\end{align*}

\item The final assumption that we make is on the distribution of the noise $\nu_n$.

\textbf{A3} (\textit{Subgaussian}): We assume 
$\nu_\tau$ are i.i.d, mean zero and \textit{subgaussian} random variables \cite{vershynin2018high} with subgaussian parameter $\sigma^2/N$. 
\end{itemize}
For our purposes, we define \textit{signal-to-noise ratio}, or \textit{SNR} for short, to be $\text{SNR} = A/\sigma^2$. 

\begin{restatable}[SCE]{theorem}{thmSCE}\label{thm:SCE} 
Let $\mu = \kappa\cdot \epsilon\sqrt{A/k}$ for some confidence parameter $\kappa\in (0,1)$. \\ Then, under the sparsity, $\epsilon$-targets, and subgaussian assumptions there is an implementation of $\text{SCE}_\mu$ which takes
\begin{enumerate}
    \item $c_1k (\log N)^3(\epsilon^{2}\text{SNR} )^{-1}$ samples,
    \item $c_2 k (\log N)^3(\epsilon^{2}\text{SNR})^{-1}$ bits of memory, and
    \item  $c_3 k (\log N)^3(\epsilon^{2}\text{SNR})^{-1}+k^2$ arithmetic operations 
\end{enumerate}
for which $\textit{PD} \to 1$ and $\textit{PFA}\to 0$ as $N\to \infty$, where $c_1,c_2,c_3$ are constants independent of $\epsilon$, $\textit{SNR}$, $k$ and $N$. 
\end{restatable}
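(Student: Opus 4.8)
The plan is to split the argument into a \emph{correctness} part, showing that the three ambiguity computations together with the triple-incidence step recover exactly the true shifts $\Set{(\tau_j,\omega_j)}$ whenever the underlying SFFT calls succeed, and an \emph{analytic} part, showing that each SFFT call succeeds with high probability within the claimed budget. For correctness I would first expand $\mathcal{A}(S_L,R)$ on a transversal line using bilinearity (\cref{prop:amb1}) and shift covariance (\cref{prop:amb2}): writing $R=\sum_j\alpha_j H_{\tau_j,\omega_j}(S_K+S_L+S_M)+\text{Noise}$, the diagonal terms $\langle H_{\tau,\omega}S_L,H_{\tau_j,\omega_j}S_L\rangle$ reproduce $\mathcal{A}(S_L)$ shifted to pass through $(\tau_j,\omega_j)$, while all cross-chirp terms are $O(1/\sqrt N)$ by \cref{lem:cross}. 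Hence, above the $O(1/\sqrt N)$ floor, $\mathcal{A}(S_L,R)$ restricted to the transversal line $K$ has exactly $k$ peaks of height $\approx|\alpha_j|$, at the intersections of $K$ with the shifted lines $L+(\tau_j,\omega_j)$; the other two computations behave identically, and \cref{lem:genline} guarantees that for a generic (hence almost surely random) choice of $K,L,M$ the true shifts are precisely the points of triple incidence, with no spurious triple coincidences.

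\textbf{Reduction to SFFT and complexity.} By \cref{thm:redtosfft}, estimating $\mathcal{A}(S_L,R)$ on $K$ is an essentially $k$-sparse Fourier problem on $S_L\cdot\overline R$, so I would bound one SFFT call by assembling the module costs from the overview: \cref{cor:isolate} needs $O(\log N)$ spectral permutations to reach the uniformly-spread case; \cref{lem:filtering} evaluates the $k$ filtered samples simultaneously at cost $O(k)$ per sampled point of $R$; and the $1$-sparse routine (\cref{lem:bitbybit}, \cref{lem:thresh}) localizes a frequency bit-by-bit in $O(\log N)$ samples. The filtering not only sparsifies each bin to $1$ but also passes only a $1/k$ fraction of the spectrum, so the in-band noise is suppressed by a factor $k$ and the per-bin signal-to-noise ratio is $\approx\epsilon^2\text{SNR}$; consequently $O((\epsilon^2\text{SNR})^{-1})$ averaged measurement rounds drive the noise below a fixed fraction of the minimum coefficient magnitude. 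Multiplying the filter cost $O(k)$, the averaging $O((\epsilon^2\text{SNR})^{-1})$, and three $\log N$ factors (permutations, bit-localization, and a union bound over the $O(k\log N)$ estimates) gives $c\,k(\log N)^3(\epsilon^2\text{SNR})^{-1}$ samples; storage is proportional to the sample count, and the only extra arithmetic is the intersection of the two double-incidence sets in Step~7, costing $k^2$, which yields the three stated bounds.

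\textbf{Detection and false-alarm guarantees.} For the statistical claim I would condition on the event $\mathcal{E}$ that every one of the $3k$ coefficient estimates produced by the SFFT calls is within additive error $(1-\kappa)\epsilon\sqrt{A/k}$ of its true value. On $\mathcal{E}$, a genuine target satisfies $|\widehat\alpha_j|\ge|\alpha_j|-(1-\kappa)\epsilon\sqrt{A/k}\ge\kappa\epsilon\sqrt{A/k}=\mu$ by assumption \textbf{A2}, so it clears the threshold and is reported, giving $\textit{PD}\to1$; conversely, at any non-target location the true coefficient is $0$ up to the $O(1/\sqrt N)$ cross-term, so its estimate has magnitude $o(1)+(1-\kappa)\epsilon\sqrt{A/k}<\mu$ for $N$ large, while the remaining combinatorial ghosts are eliminated by triple incidence (\cref{lem:genline}), giving $\textit{PFA}\to0$. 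It remains to show $\mathbb{P}(\mathcal{E})\to1$: here I would invoke the subgaussian assumption \textbf{A3} via a Hoeffding/Bernstein tail bound on the averaged inner products, choosing the repetition count $O((\epsilon^2\text{SNR})^{-1}\log N)$ so that each estimate deviates by more than $(1-\kappa)\epsilon\sqrt{A/k}$ with probability $N^{-\Omega(1)}$, and then union-bound over the $O(k\log N)$ estimates.

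\textbf{Main obstacle.} The delicate point is establishing $\mathbb{P}(\mathcal{E})\to1$ honestly: the single received vector $R$ and the shared internal randomness of the SFFT (the permutations $(\sigma,a)$ and the filter samples) drive all $3k$ estimates, so the error events are neither independent nor cleanly separated, and the $O(1/\sqrt N)$ cross-chirp interference from \cref{lem:cross} must be controlled \emph{simultaneously} with the subgaussian noise tail. Making this union bound legitimate, and checking that the random lines almost surely meet the genericity hypotheses of \cref{lem:genline} while keeping the $k$ intersection points on each transversal line distinct so that the effective per-line sparsity is exactly $k$, is where the real work lies; the complexity bookkeeping is then routine.
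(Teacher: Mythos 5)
Your proposal matches the paper's own proof in all essentials: the paper likewise reduces each ambiguity computation to a sparse Fourier problem via \cref{thm:redtosfft}, assembles the sample/arithmetic budget by multiplying the permutation count (\cref{cor:isolate}), filter cost (\cref{lem:filtering}, with noise suppression as in \cref{lem:filnoise} giving per-bin SNR $\approx \epsilon^2\text{SNR}$), and bit-by-bit/thresholding costs (\cref{lem:bitbybit}, \cref{lem:thresh}), then gets $\textit{PD}\to 1$ by a union bound over the three lines and $\textit{PFA}\to 0$ by splitting false alarms into spurious triple intersections (\cref{lem:genline}) and threshold failures. The only organizational difference is that the paper factors the analytic part through an intermediate standalone guarantee for $\text{SFFT}_\mu$ (its Theorem on SFFT, with assumptions A21--A23 inherited from A1--A3), whereas you inline that analysis; the substance is the same.
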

\begin{remark} 
A3 is satisfied, for instance, by the standard assumption of additive white Gaussian noise (AWGN) \cite{tse2005fundamentals}, namely, that $\nu_\tau$ are i.i.d, mean zero and \textit{Gaussian} random variables with variance $\sigma^2/N$. 
\end{remark}
\begin{remark} The proof of Theorem \ref{thm:SCE} confirms that for an appropriate choice of constants $c_1,c_2,c_3$, the rate of convergence of \mbox{$\textit{PD} \to 1$} and $\textit{PFA} \to 0$ is at least polynomial in $N$. 
\end{remark}
%
\chapter{Numerical Results}\label{sec:numres}   
%
%
%
%
%
%

We now provide the experimentally observed convergence rates for $\text{PD}$ and $\text{PFA}$ in a specific case, see \cref{fig:PD,fig:PFA}.

In addition, we provide numerical comparisons for time and space complexity of SCE and the Incidence method (IM) in a specific case, see \cref{tab:my_label}. 
\begin{figure}[H]
  \centering        \includegraphics[width=0.66\textwidth]{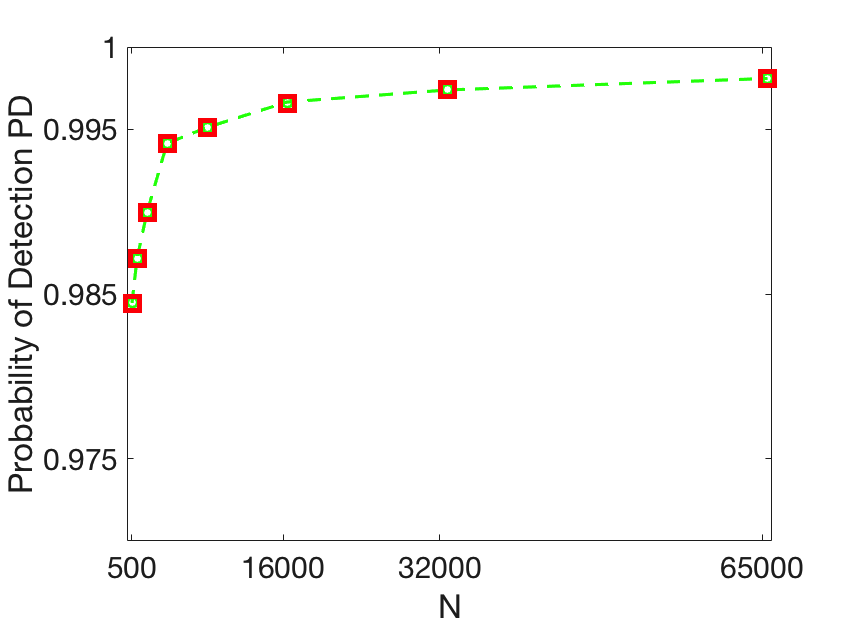}
        \caption{Experimentally observed convergence rate for PD, \mbox{$k=5$}, $\text{SNR}=10$dB, $500$ random trials.}
  \label{fig:PD}
  \vspace{-.3in}
\end{figure}
\begin{figure}[htp]
  \centering
    \includegraphics[width=0.66\textwidth]{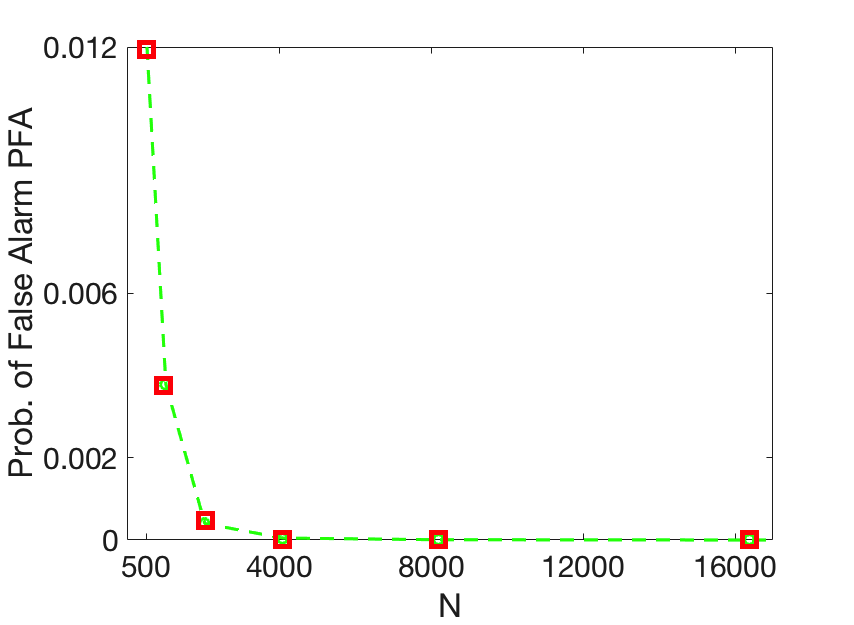}
    \caption{Experimentally observed convergence rate for PFA, \mbox{$k=5$}, $\text{SNR}=10$dB, $500$ random trials.}
    \label{fig:PFA}
    \vspace{.05in}
\end{figure}
\begin{table}[H]
    \centering
    \begin{tabular}{|c|c|c|c|c|}
    \hline
         &\multicolumn{2}{| c |}{SCE} &\multicolumn{2}{| c |}{IM}  \\
         \hline 
         N & Samples & Time (sec) & Samples & Time (sec)\\
    \hline 
   2048  &  2048  & 0.1423& 2048 &    0.0370 \\
   4096   & 4096 &  0.1250 & 4096 & 0.0740 \\
    8192   & 5468   & 0.1595  &  8192 & 0.1620 \\
     16,384   & 6242 & 0.1789 &   16,384  &  0.3230  \\
        32,768 & 7064 & 0.1992 &  32,768 & 0.5790    \\
       65,536  & 7934  & 0.2314 &  65,536   &  1.2010 \\
    \hline
    \end{tabular}
    \caption{Numerical comparison of time and sample complexity, $k=50$, $\text{SNR}=10$dB, $500$ random trials.}
    \label{tab:my_label}
    \vspace{-.15in}
\end{table}

\appendix       
\chapter{Justifications behind the Model.}

\section{The Doppler effect as a time-scale.}\label{app:timescale}

\begin{reusefigure}[H]{fig2}
  \centering    
     \includegraphics[width=0.7 \textwidth]{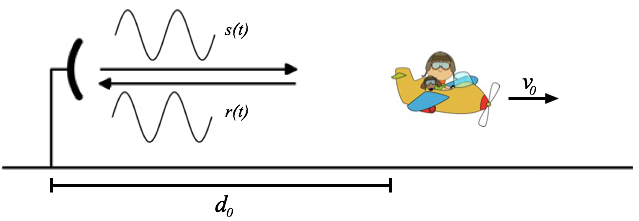}
     \caption{Illustrating range and radial velocity of object.}
     \vspace{-.15in}
\end{reusefigure}

As in \cref{sec:cont}, if the object of interest is at a radial distance of $d_0$, and has relative radial velocity $v_0$, then there is a standard model \cite{levanon2004radar} for the relationship between the signal transmitted $s$ and the signal received $r$, namely,
\begin{align}
    r(t)& =  \upalpha_0 \cdot s\p{(a_0)t-t_0}+ \text{Noise}  
\end{align}
where,
\begin{align*}
    a_0 &= 1-\frac{v_0}{v_0+c} \\
    t_0 &= \frac{2d_0}{c}.    
\end{align*}

For a detailed justification, see Section 1.1 in \cite{levanon2004radar}.

However, it in many applications, bandwidth W is significantly less than the carrier frequency $f_c$. In this case, it is standard to make the following approximation,
\begin{align}\label{eq:tstofs}
  s(a_0t) \approx e^{if_0t}s(t),  
\end{align}
where $f_0 = f_c|a_0-1|$, which leads to the \textit{continuous channel model} \ref{eq:cont}. \\
Next, we provide a mathematical basis for making the above approximation \ref{eq:tstofs} when $W/f_c\ll 1$.

\subsection{Time scale to frequency shift.}\label{app:tstofs}

We have the following statement that relates a time-scale of a signal $s$ to a frequency shift of the same, and involves the ratio $W/f_c$ of its bandwidth to carrier frequency. 
\begin{restatable}[]{theorem}{narrowband}\label{thm:narrowband}
For $s \in L^2(\R)$ with bandwidth $W$ and carrier frequency $f_c$,
\begin{align}
s(a_0t) = e^{if_0t}s(t)+ O\p{f_0t\cdot\frac{W}{f_c}}
\end{align}
where 
$f_0 = f_c|a_0-1|$.
\end{restatable}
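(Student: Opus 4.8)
The plan is to pass to the Fourier (spectral) side, where both the time-scale $s(a_0 t)$ and the frequency shift $e^{if_0 t}s(t)$ become transparent, and then to reduce the whole estimate to the elementary inequality $|e^{ix}-1|\le |x|$. First I would fix the meaning of the hypotheses: ``carrier frequency $f_c$ and bandwidth $W$'' will be taken to mean that the Fourier transform $\hat{s}$ is supported in the band $[f_c - W/2,\, f_c + W/2]$ (that is, I work with the one-sided, analytic-signal, spectrum), so that
$$s(t) = \int_{f_c - W/2}^{f_c + W/2} \hat{s}(f)\, e^{ift}\, df,$$
using the appendix's convention of absorbing the $2\pi$ into the frequency variable. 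Because $s \in L^2(\R)$ while $\hat{s}$ is supported on a bounded interval of length $W$, Cauchy--Schwarz yields $\hat{s} \in L^1(\R)$, so the quantity $C = \int |\hat{s}(f)|\, df$ is a finite, signal-dependent constant that I will absorb into the implied constant of the $O$-term.

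Next I would substitute $a_0 t$ for $t$ and factor out the carrier contribution. Writing each frequency in the band as $f = f_c + \delta$ with $|\delta| \le W/2$, I have
$$s(a_0 t) = \int \hat{s}(f)\, e^{i f a_0 t}\, df = \int \hat{s}(f)\, e^{ift}\, e^{i f(a_0-1)t}\, df,$$
and since $f(a_0 - 1) = f_c(a_0-1) + \delta(a_0-1) = f_0 + \delta(a_0-1)$ (up to the sign absorbed by the absolute value in $f_0 = f_c|a_0-1|$), the carrier piece $e^{if_0 t}$ pulls out of the integral, leaving
$$s(a_0 t) = e^{if_0 t} \int \hat{s}(f)\, e^{ift}\, e^{i\delta(a_0-1)t}\, df.$$

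The statement then amounts to showing that this residual integral equals $s(t)$ up to the claimed error. Subtracting $e^{if_0 t} s(t)$ and using $|e^{ix}-1| \le |x|$ with $x = \delta(a_0-1)t$, together with $|\delta| \le W/2$ and $|a_0-1| = f_0/f_c$, I would estimate
$$\left| s(a_0 t) - e^{if_0 t} s(t) \right| \le \int |\hat{s}(f)|\, \left| e^{i\delta(a_0-1)t} - 1 \right| df \le \frac{W}{2}\, |a_0-1|\, |t| \int |\hat{s}(f)|\, df = \frac{C}{2}\, f_0 t\, \frac{W}{f_c},$$
which is exactly $O\p{f_0 t \cdot \frac{W}{f_c}}$ with implied constant $C/2$.

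The main subtlety, and the step I would be most careful about, is the spectral-support hypothesis rather than the estimate itself, which is elementary. For a genuinely real signal the spectrum is symmetric, supported near both $+f_c$ and $-f_c$, and the factorization above produces $e^{+if_0 t}$ on the positive band but $e^{-if_0 t}$ on the negative band, so no single frequency shift approximates the time-scale; this is precisely why the narrowband model is stated for the complex envelope (analytic signal), and I would make that interpretation explicit at the outset. A secondary point to state cleanly is that the bound is meaningful only in the regime $f_0 t \cdot W/f_c \ll 1$, i.e.\ over a time window and for a Doppler shift small enough that the accumulated phase $\delta(a_0-1)t$ stays small; the $O$-notation is understood with $W/f_c \to 0$ and the signal, hence $C$, held fixed.
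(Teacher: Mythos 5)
Your proof is correct, but it reaches the estimate by a different route than the paper. The paper first writes $s(t) = e^{2\pi i f_c t} s_0(t)$ with the baseband signal $s_0$ having Fourier support in $[-W/2,W/2]$, reduces the claim to bounding $|s_0(a_0 t) - s_0(t)|$, and then finishes in the time domain by combining Bernstein's inequality, $\sup_t |s_0'(t)| \le 2\pi W \sup_t |s_0(t)|$, with the mean value theorem. Your carrier-factoring step is the same move as the paper's baseband reduction (your one-sided spectral hypothesis, $\hat{s}$ supported in $[f_c - W/2, f_c + W/2]$, is exactly equivalent to the paper's convention), but you bound the residual on the Fourier side instead: the elementary inequality $|e^{ix} - 1| \le |x|$ under the integral, with the constant $\|\hat{s}\|_{L^1}$ made finite via Cauchy--Schwarz. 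In effect you re-prove, in this special case, the content of Bernstein's inequality rather than citing it; what this buys is a fully self-contained argument with an explicit constant $\tfrac{1}{2}\|\hat{s}\|_{L^1}$, whereas the paper's proof is shorter given Bernstein as a black box and yields a constant proportional to $\sup_t|s_0(t)|$ (which is itself dominated by $\|\hat{s}\|_{L^1}$, so the two bounds are comparable). Your remarks on the analytic-signal interpretation and on the sign of $f_0$ are apt: the paper's proof silently makes the same assumptions, so neither is a gap relative to the paper.
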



\begin{remark}\label{rem:narrowband} It seems to be a standard rule of thumb to make the approximation \ref{eq:tstofs} if the signal bandwidth is less than one-tenth of the carrier frequency \cite{levanon2004radar}. In Section 1.1 of \cite{levanon2004radar}, the authors remark, ``Numerical simulations with rather complicated signals showed that the difference between the calculated performances was very small, even when the narrowband assumption was used with a signal whose bandwidth reached 40\% of the center frequency." They also note that Appendix A of \cite{di1968radar} lists errors resulting from these approximations.
\end{remark}

\section{Moving between Digital and Analog Settings.}\label{app:digital-analog}

Shannon, in his seminal work \cite{shannon1949communication} provided explicit formulas that can be recast as a linear map,
\begin{align*}
    \text{D-to-A}&: L^2(\Z_N) \to L^2(\R), \\
    S&\mapsto s,
\end{align*}
that would produce signals of duration $T$ and bandwidth $W$. Namely,
for $S\in L^2(\Z_N)$, let $s=\text{D-to-A}(S)$; then,
\begin{align}\label{eq:daformula}
    s(t) &= e^{2\pi if_ct}s_0(t)
\end{align}
where,
\begin{itemize}
    \item $s_0(t) = \sum_{\tau\in \Z_N} S[\tau]\cdot \text{sinc}_W(t-\frac{\tau}{W})$,
    \item $\text{sinc}_W(t)  = \frac{\sin(\pi Wt)}{\pi Wt}$,
    \item $f_c$ denotes the carrier frequency of the signal to be transmitted, and $f_c$ is a multiple of $W$.
\end{itemize}
\ 
In addition, he provided exact formulas that realize the following  linear map,
\begin{align*}
   \text{A-to-D}&: L^2(\R) \to L^2(\Z_N), \\
   r &\mapsto R
\end{align*}
Namely, for $r\in L^2(\R)$, let $R = \text{A-to-D}(r)$; then,
\begin{align}\label{eq:adformula}
R[\tau] &= \sum_{m\in \Z} r\p{\frac{\tau}{W}+mT} 
\end{align} 

\cref{prop:ad-da}, which is the key attribute of $\text{D-to-A}$ and $\text{A-to-D}$ from which the digital channel model \ref{eq:digital} is derived, can be immediately derived from the above formulas. For a proof, see \cref{appD}. 

\chapter{Underlying Algebraic Structure.}
\section{Computing the Ambiguity function on a Line}\label{app:ambigline}

Let $L\subset\Z_N^2$ be a line generated by  $(1,a)$ for some $a\in \Z_N$. Then,
\begin{align*}
    \mathcal{A}(S_1,S_2)(\tau,a\tau+\omega)&= (\widetilde{S_1}* \widetilde{S_2})[\tau],
\end{align*}
where,
\begin{align*}
    \widetilde{S_1}[\tau]&= e^{\frac{2\pi i}{N}\p{-a\tau^2-\omega\tau}}\cdot S_1\\
    \widetilde{S_2}[\tau]&= e^{\frac{2\pi i}{N}a\tau^2}\cdot \overline{S_2[-\tau]}.
\end{align*}
If $L$ is generated by $(0,1)$ then \cite{fish2013delay},
\begin{align*}
    \mathcal{A}(S_1,S_2)(\tau,\omega)&= \sqrt{N}\cdot(\mathcal{F}S_1* \mathcal{F}\widetilde{S_2})[\omega]\\
    &= \sqrt{N}\cdot\p{\mathcal{F}(S_1\cdot \widetilde{S_2})}[\omega]
\end{align*}
where,
\begin{align*}
    \widetilde{S_2}[\tau']&= \overline{S_2[\tau'-\tau]}.
\end{align*}
The above formulas can be verified by direct evaluation. A verification can also be found in \cite{fish2013delay}.
\section{Chirp Signals}
\label{app:uaschirps} 

In \cref{existing:chirps}, we saw that the ambiguity of a $\delta$-function $\mathcal{A}(\delta_\tau)$, was supported on the line $\mathcal{W}$. We also saw that for a complex exponential $e_\omega$, $\mathcal{A}(e_\omega)$ is supported on the line $\mathcal{T}$, and claimed that for \textit{any} line $L = \set{(\tau,a\tau)}:{\tau\in\Z_N}$, where $a\in \Z_N$, we can expect chirp functions whose ambiguity is supported on $L$. To see why, we point out that
\begin{itemize}
    \item $\set{\delta_\tau}:{\tau\in \Z_N}$ forms an orthonormal basis for the commuting operators \\ $\set{H_{0,\omega}}:{\omega\in \Z_N}$ associated with the line $\mathcal{W}$ and,
    \item Likewise, $\set{e_\omega}:{\omega\in \Z_N}$ forms an orthonormal basis for the commuting operators $\set{H_{\tau,0}}:{\tau\in \Z_N}$ associated with the line $\mathcal{T}$.
\end{itemize} 
In the same way, the operators $\set{H_{\tau,a\tau}}:{\tau\in \Z_N}$, associated with the line $L$, commute and share an orthonormal basis which we will denote as $\set{S_L^\omega}:{\omega\in \Z_N}$. It then turns out that $\mathcal{A}(S^\omega_L)$ is supported on $L$. This result (\cref{thm:chirps}) follows immediately from the \textit{Heisenberg commutation relation}, namely that,
\begin{align}\label{eq:commutation}
    H_{\tau,\omega}\circ H_{\tau',\omega'}&= e^{\frac{2\pi i}{N}(\omega\tau'-\tau\omega')} H_{\tau',\omega'}\circ H_{\tau,\omega},
\end{align}
for every $(\tau,\omega), (\tau',\omega') \in \Z^2_N$.
\begin{restatable}[Existence of Chirps]{theorem}{thmchirps}\label{thm:chirps} 
Let $L\subset \Z_N^2$ be a line. Then,
\begin{enumerate}
    \item The operators $\set{H_{\tau,\omega}}:{(\tau,\omega)\in L}$ have $N$ distinct eigenvalues, and share an orthonormal basis of eigenvectors $\mathcal{B}_L$.
    \item For $S_L \in \mathcal{B}_L$ 
$$
	|\<H_{\tau,\omega}S_L, S_L \>| = \begin{cases}
							1 & \text{ if } (\tau,\omega)\in L \\
							0 & \text{ otherwise }
						   \end{cases}
$$
\end{enumerate}
\end{restatable}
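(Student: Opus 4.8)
The plan is to exploit the fact that the operators indexed by $L$ form a commuting family, diagonalize them simultaneously, and then read off both claims from the structure of their joint spectrum. First I would record the elementary composition formula
\begin{align*}
H_{\tau,\omega}\circ H_{\tau',\omega'} = e^{-\frac{2\pi i}{N}\omega'\tau}\, H_{\tau+\tau',\,\omega+\omega'},
\end{align*}
from which the commutation relation \eqref{eq:commutation} follows. Since a line $L$ through the origin is isotropic for the symplectic form $B\big((\tau,\omega),(\tau',\omega')\big) = \omega\tau'-\tau\omega'$ — any two points of $L$ are $\Z_N$-multiples of a generator $(a,b)$, so $B$ vanishes on $L\times L$ — the relation shows that $\Set{H_{\tau,\omega} : (\tau,\omega)\in L}$ is a family of commuting unitary operators. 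Being commuting and normal on the finite-dimensional space $L^2(\Z_N)$, they are simultaneously unitarily diagonalizable, which immediately produces an orthonormal basis $\mathcal{B}_L$ of common eigenvectors.

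The heart of part (1) is showing that the joint eigenspaces are one-dimensional, equivalently that the generator $H_{a,b}$ has $N$ distinct eigenvalues. For this I would use the trace. A direct matrix computation gives $\operatorname{tr}(H_{\tau_0,\omega_0}) = N$ if $(\tau_0,\omega_0)=(0,0)$ and $0$ otherwise. Iterating the composition formula yields $H_{a,b}^{\,j} = e^{-\frac{2\pi i}{N}\binom{j}{2}ab}\,H_{ja,jb}$; since $L$ has order $N$ we have $(ja,jb)\ne(0,0)$ for $1\le j\le N-1$, so $\operatorname{tr}(H_{a,b}^{\,j})=0$ for those $j$ while $\operatorname{tr}(H_{a,b}^{\,0})=N$. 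When $N$ is odd the same formula gives $H_{a,b}^{\,N}=I$, so the eigenvalues are $N$-th roots of unity; writing $n_r$ for the multiplicity of $e^{2\pi i r/N}$, the power-sum data say that the discrete Fourier transform of $(n_r)$ is $(N,0,\dots,0)$, whence $n_r=1$ for every $r$. Thus there are exactly $N$ distinct eigenvalues and each joint eigenspace is one-dimensional, which pins down $\mathcal{B}_L$ (up to phases) and proves part (1).

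For part (2), fix $S_L\in\mathcal{B}_L$. If $(\tau,\omega)\in L$ then $S_L$ is an eigenvector, $H_{\tau,\omega}S_L=\lambda S_L$ with $|\lambda|=1$ by unitarity, so $|\langle H_{\tau,\omega}S_L,S_L\rangle| = |\lambda|=1$. If $(\tau,\omega)\notin L$, I would use that $L$ is Lagrangian (maximal isotropic), so there is $(\tau',\omega')\in L$ with $B\big((\tau',\omega'),(\tau,\omega)\big)\ne 0 \pmod N$; set $\gamma = e^{\frac{2\pi i}{N}(\omega'\tau-\tau'\omega)}\ne 1$. Writing $H_{\tau',\omega'}S_L=\mu S_L$ with $|\mu|=1$ and using unitarity of $H_{\tau',\omega'}$ together with the commutation relation,
\begin{align*}
\langle H_{\tau,\omega}S_L,S_L\rangle = \langle H_{\tau',\omega'}H_{\tau,\omega}S_L,\, H_{\tau',\omega'}S_L\rangle = \gamma\,\langle H_{\tau,\omega}S_L,S_L\rangle,
\end{align*}
and since $\gamma\ne1$ the matrix coefficient must vanish, giving the second case.

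The main obstacle is the eigenvalue-count in part (1): commutativity and diagonalizability are routine, but ruling out repeated eigenvalues requires the trace/character computation above, and it is here that the arithmetic of $\Z_N$ intervenes. Two points deserve care — that $L$ is genuinely Lagrangian ($L=L^{\perp}$), which I would take from the theory of lines in \cref{app:lines} together with a counting argument using nondegeneracy of $B$, and the parity of $N$ in the identity $H_{a,b}^{\,N}=I$ (for even $N$ one gets $H_{a,b}^{\,N}=\pm I$ and must track $2N$-th roots of unity); I would either assume $N$ odd, as is standard in this Heisenberg--Weyl setting, or adjust the root-of-unity bookkeeping accordingly.
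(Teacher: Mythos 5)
Your proposal is correct, but it reaches the crucial multiplicity-one statement in part (1) by a genuinely different route than the paper. The paper works with an auxiliary transversal line $M$: picking $H_M = H_{\tau',\omega'}$ for a generator $(\tau',\omega')\in M$, it shows via the commutation relation that $H_M$ acts as a ladder operator on the eigenspaces of $H_L$, so that from one eigenvector $v$ with eigenvalue $\lambda$ one obtains eigenvectors $v, H_Mv,\dots,H_M^{N-1}v$ with eigenvalues $\lambda,\lambda\eta,\dots,\lambda\eta^{N-1}$, where $\eta = e^{\frac{2\pi i}{N}(\omega\tau'-\tau\omega')}$; this both counts the eigenvalues and constructs the eigenbasis as an orbit. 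Your argument instead stays entirely inside the line: you compute $\operatorname{tr}(H_{a,b}^{\,j}) = N\delta_{j,0}$ for $0\le j\le N-1$ from the composition formula, and invert the resulting power-sum (DFT) relation to conclude every eigenvalue multiplicity $n_r$ equals $1$. Each approach has a virtue the other lacks. Your trace computation is airtight on the point the paper glosses over --- the paper asserts only $\eta\ne 1$, but distinctness of $\lambda,\lambda\eta,\dots,\lambda\eta^{N-1}$ actually requires $\eta$ to be a \emph{primitive} $N$-th root of unity, i.e.\ that $\omega\tau'-\tau\omega'$ be a unit mod $N$, which needs a further argument about the choice of generators; your method needs no auxiliary line at all. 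Conversely, the paper's ladder argument is parity-free, whereas your method needs $H_{a,b}^{\,N}=I$ and hence the odd-$N$ assumption (or the $2N$-th-root bookkeeping you sketch, which does go through since the extra phase $\zeta^j$ factors out of the power sums). For part (2) the two proofs are essentially the same idea in different clothing: the paper observes that $H_{\tau,\omega}S_L$ lands in a different eigenspace of $H_L$ and invokes orthogonality, while you run the equivalent matrix-coefficient identity $\langle H_{\tau,\omega}S_L,S_L\rangle = \gamma\langle H_{\tau,\omega}S_L,S_L\rangle$ with $\gamma\ne 1$; here you are again the more careful one, since the existence of $(\tau',\omega')\in L$ pairing nontrivially with $(\tau,\omega)\notin L$ is exactly the statement $L=L^{\perp}$, which you justify by the counting argument (the pairing map $(\tau,\omega)\mapsto b\tau-a\omega$ is onto $\Z_N$ when $\gcd(a,b,N)=1$, so $|L^\perp|=N=|L|$ and $L\subseteq L^\perp$ forces equality), whereas the paper leaves the nonvanishing of its phase $\eta$ implicit.
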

\ 

In fact, the commuting operators from which chirps arise can be seen to come from commutative subgroups of the \textit{Heisenberg-Weyl} group $G_{HW}$ \cite{howe2005nice, howard2006finite, fish2012delay, fish2013delay, fish2013incidence,fish2014}. We can then  utilize the structure of $G_{HW}$ in order to explicitly construct and produce formulas for these functions. 

For the remainder of this section, we assume that $N$ is odd and `` $2^{-1}$" will denote the multiplicative inverse of $2$ modulo $N$, namely $(N+1)/2$.

\begin{restatable}[Formulas for Chirps]{lemma}{thmchirpformula}\label{thm:chirpformula}
We have the following formulas for the orthonormal bases $\mathcal{B}_L$ associated with the line $L\subset \Z_N^2$.
\begin{enumerate}
\item For $a\in\Z_N$ and operators $\set{H_{\tau,a\tau}}:{\tau\in\Z_N}$ corresponding to the line \\ $L = \set{(\tau,a\tau)}:{\tau\in\Z_N}$, we have the orthonormal basis of eigenvectors:
\begin{align}\label{eq:chirp}
\mathcal{B}_L &= \set{S_L^\omega[\tau'] = \frac{1}{\sqrt{N}} \ e^{\frac{2\pi i}{N}\p{2^{-1}a\tau'^2+\omega\tau'}}}:{\omega \in \Z_N}.
\end{align} 
\item For the operators $\set{H_{0,w}}:{\omega\in\Z_N}$ corresponding to the line  $\mathcal{W} = \set{(0,\omega)}:{\omega\in\Z_N}$, we have the orthonormal basis of eigenvectors:
\begin{align}\label{eq:chirpinf}
\mathcal{B}_\mathcal{W} &= \Set{S_\mathcal{W}^\tau[n] = \delta_\tau}.
\end{align}  
\end{enumerate}
\end{restatable}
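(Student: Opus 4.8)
The plan is to verify both formulas by direct computation, since \cref{thm:chirps} already guarantees that each commuting family of operators shares an orthonormal eigenbasis; what remains is only to exhibit that basis explicitly. The strategy has three ingredients: (i) check that each proposed function is a simultaneous eigenvector of the associated operators, (ii) check orthonormality, and (iii) conclude we have a \emph{basis} by a dimension count against the $N$-dimensional space $L^2(\Z_N)$.

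For Part 1, I would apply $H_{\tau_0,a\tau_0}$ to $S_L^\omega$ and expand. Writing
\[
  H_{\tau_0,a\tau_0}S_L^\omega[\tau'] = e^{\frac{2\pi i}{N}a\tau_0\tau'}\,S_L^\omega[\tau'-\tau_0],
\]
the heart of the argument is expanding the quadratic phase via $2^{-1}a(\tau'-\tau_0)^2 = 2^{-1}a\tau'^2 - a\tau'\tau_0 + 2^{-1}a\tau_0^2$. Here is where the hypothesis $N$ odd is essential: the identity $2\cdot 2^{-1}\equiv 1 \pmod N$ turns the cross term of the square into exactly $-a\tau'\tau_0$, which cancels against the phase $e^{\frac{2\pi i}{N}a\tau_0\tau'}$ produced by the frequency-shift part of the operator. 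What survives is a $\tau'$-independent scalar, namely the eigenvalue $e^{\frac{2\pi i}{N}(2^{-1}a\tau_0^2 - \omega\tau_0)}$, confirming that $S_L^\omega$ is a simultaneous eigenvector of $\set{H_{\tau,a\tau}}:{\tau\in\Z_N}$. Orthonormality then follows because in $\<S_L^\omega,S_L^{\omega'}\>$ the quadratic phases $2^{-1}a\tau'^2$ cancel, reducing the sum to $\frac{1}{N}\sum_{\tau'}e^{\frac{2\pi i}{N}(\omega-\omega')\tau'}=\delta_{\omega,\omega'}$ by orthogonality of the additive characters of $\Z_N$. Since $\set{S_L^\omega}:{\omega\in\Z_N}$ consists of $N$ orthonormal vectors in the $N$-dimensional space $L^2(\Z_N)$, it is automatically an orthonormal basis, and hence coincides with the abstract $\mathcal{B}_L$ of \cref{thm:chirps}.

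Part 2 is the degenerate case and is immediate: applying $H_{0,\omega_0}$ to $\delta_\tau$ gives $e^{\frac{2\pi i}{N}\omega_0 n}\delta_\tau[n] = e^{\frac{2\pi i}{N}\omega_0\tau}\delta_\tau[n]$, since the exponential factor is sampled only at $n=\tau$; thus $\delta_\tau$ is an eigenvector of $H_{0,\omega_0}$ with eigenvalue $e^{\frac{2\pi i}{N}\omega_0\tau}$. The functions $\set{\delta_\tau}:{\tau\in\Z_N}$ are manifestly orthonormal and number $N$, hence form a basis.

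The main obstacle is purely bookkeeping: keeping the modular arithmetic of the quadratic expansion correct and confirming the \emph{exact} cancellation of the cross term against the frequency-shift phase — this is precisely the step that forces the appearance of $2^{-1}$ and the oddness of $N$ in the statement. There is no conceptual difficulty beyond the Heisenberg commutation relation \cref{eq:commutation}, which is what makes such a commuting family diagonalizable in a common basis in the first place; once the cross-term cancellation is verified, the orthonormality and basis claims reduce to standard character theory and a dimension count.
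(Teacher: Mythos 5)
Your proof is correct, but it takes a genuinely different route from the paper's. The paper does not verify the formula by expansion; it argues structurally through the Weil representation: having noted that the exponentials $e_\omega$ are eigenvectors of the time-shift operators $H_{\tau,0}=\pi(\tau,0)$, it invokes the chirp operator $\rho_a = \rho\left(\begin{smallmatrix}1&0\\a&1\end{smallmatrix}\right)$ and the intertwining property $\pi(\tau,a\tau)=\rho_a\circ\pi(\tau,0)\circ\rho_a^{-1}$ (\cref{def:weil}) to conclude that $\rho_a e_\omega$ is an eigenvector of $H_{\tau,a\tau}$; the explicit formula \cref{eq:chirpopformula}, $(\rho_a S)[\tau]=e^{\frac{2\pi i}{N}(2^{-1}a\tau^2)}S[\tau]$, then produces exactly the quadratic-phase functions $S_L^\omega$, with orthonormality inherited (implicitly) from the unitarity of $\rho_a$. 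Your direct computation buys self-containedness: it needs nothing beyond modular arithmetic and character orthogonality, it makes the role of $N$ odd (existence of $2^{-1}$ and the exact cross-term cancellation) completely explicit, and it produces the eigenvalue $e^{\frac{2\pi i}{N}(2^{-1}a\tau_0^2-\omega\tau_0)}$ --- the same one the paper records --- together with an honest orthonormality and dimension-count argument that the paper leaves implicit. What the paper's route buys in exchange is conceptual: it explains \emph{where} the formula comes from (a chirp operator applied to the Fourier basis), and it sets up the Weil-operator machinery that is reused immediately afterwards in the proof of \cref{lem:cross}, where conjugation by elements of $SL_2(\Z_N)$ moves one line to another; your verification, being tied to the specific formula, does not generalize in that direction. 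One small caveat: your closing claim that your family ``coincides with the abstract $\mathcal{B}_L$ of \cref{thm:chirps}'' should be read up to phases and ordering (the eigenvalues in \cref{thm:chirps} are distinct, so eigenvectors are determined only up to unimodular scalars); this does not affect correctness, since the lemma's content is precisely to exhibit an orthonormal eigenbasis.
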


\begin{restatable}[Chirp cross-correlation.]{lemma}{lemcross}\label{lem:cross}
For distinct lines $L\ne M$, and any choice of chirps $S_L\in \mathcal{B}_L$, $S_M\in \mathcal{B}_M$,
\begin{align*}
    |\mathcal{A}(S_L,S_M)| = \frac{1}{\sqrt{N}}.
\end{align*}
\end{restatable}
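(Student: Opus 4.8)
The plan is to split into two cases according to whether one of the two lines is the vertical line $\mathcal{W}$, and to reduce the generic case to the evaluation of a single quadratic Gauss sum in the summation variable of the ambiguity function.

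First, the easy case: suppose one of the lines, say $M=\mathcal{W}$. By \cref{thm:chirpformula}, any $S_M\in\mathcal{B}_\mathcal{W}$ is a $\delta$-function $\delta_{\tau_0}$, while by part 2 of \cref{thm:chirps} (or directly from the explicit formula) every coordinate of the chirp $S_L$ has magnitude $1/\sqrt N$. Then
\[
\mathcal{A}(S_L,\delta_{\tau_0})[\tau,\omega]=\langle H_{\tau,\omega}S_L,\delta_{\tau_0}\rangle=e^{\frac{2\pi i}{N}\omega\tau_0}\,S_L[\tau_0-\tau],
\]
whose magnitude equals $1/\sqrt N$ for every $(\tau,\omega)$, settling this case immediately. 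Since $L\neq M$, at most one of the two lines can be $\mathcal{W}$, so the remaining work is the case of two ``ordinary'' lines.

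Now suppose $L$ and $M$ are ordinary lines of slopes $a\neq b$. I would substitute the explicit formulas of \cref{thm:chirpformula} into the definition $\mathcal{A}(S_L,S_M)[\tau,\omega]=\langle H_{\tau,\omega}S_L,S_M\rangle=\sum_{\tau'}e^{\frac{2\pi i}{N}\omega\tau'}S_L[\tau'-\tau]\overline{S_M[\tau']}$, expand $(\tau'-\tau)^2$, and collect the exponent as a quadratic polynomial in $\tau'$. The pure second-order term is $2^{-1}(a-b)\tau'^2$; the linear and constant terms depend on $(\tau,\omega)$ and the chirp indices but the constant term factors out with unit modulus and the linear term only changes a coefficient, giving
\[
\mathcal{A}(S_L,S_M)[\tau,\omega]=\frac{1}{N}\,e^{\frac{2\pi i}{N}c_0}\sum_{\tau'\in\Z_N}e^{\frac{2\pi i}{N}\left(2^{-1}(a-b)\tau'^2+c_1\tau'\right)}
\]
for some $c_0,c_1\in\Z_N$. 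Hence $|\mathcal{A}(S_L,S_M)[\tau,\omega]|=\frac1N|G|$, where $G$ is the displayed quadratic Gauss sum, and it remains only to show $|G|=\sqrt N$.

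The crux is this Gauss sum evaluation. Because $N$ is odd and the lines are distinct (transversal), the leading coefficient $\alpha=2^{-1}(a-b)$ is a unit modulo $N$; completing the square by the substitution $\tau'\mapsto\tau'-(2\alpha)^{-1}c_1$ removes the linear term at the cost of a unit-modulus phase and reduces $G$ to the standard quadratic Gauss sum $\sum_{\tau'}e^{\frac{2\pi i}{N}\alpha\tau'^2}$, whose magnitude is the classical value $\sqrt N$ for odd $N$ with $\alpha$ a unit. Combining gives $|\mathcal{A}(S_L,S_M)[\tau,\omega]|=\frac1N\sqrt N=\frac{1}{\sqrt N}$ for every $(\tau,\omega)$, as claimed. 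The only genuinely non-elementary ingredient is this Gauss sum magnitude, and the place where the hypotheses are really used is the invertibility of $a-b$ modulo $N$, which is exactly where distinctness/transversality of $L,M$ together with the oddness of $N$ enter. As a conceptual alternative that avoids the explicit sum, one can note that since $S_L$ (resp. $S_M$) is a joint eigenvector of the commuting family $\{H_{\tau',\omega'}\mid(\tau',\omega')\in L\}$ (resp. $M$), the commutation relation \cref{eq:commutation} forces $|\mathcal{A}(S_L,S_M)|$ to be invariant under translation of its argument by any element of $L$ or of $M$; transversality makes these translations generate all of $\Z_N^2$, so the magnitude is constant, and its value is pinned to $1/\sqrt N$ by the Parseval identity $\sum_{(\tau,\omega)}|\mathcal{A}(S_L,S_M)[\tau,\omega]|^2=N$ arising from the orthogonality of $\{N^{-1/2}H_{\tau,\omega}\}$ in the Hilbert–Schmidt inner product.
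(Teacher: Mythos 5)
Your proposal is correct in substance, but it takes a genuinely different route from the paper's. The paper also begins with the case where one line is $\mathcal{W}$ --- its base case is $\mathcal{A}(\delta_\tau,e_\omega)$, extended to $\mathcal{A}(\delta_\tau,S_M)$ for arbitrary $M\neq\mathcal{W}$ using the constant modulus of chirps, essentially as you do --- but for the general case it never touches an exponential sum: it chooses $g\in SL_2(\Z_N)$ fixing $M$ and sending $(1,a)\mapsto(0,1)$, and uses unitarity of the Weil operator $\rho_g$ together with the covariance property $\rho(g)\circ\pi(\tau,\omega)\circ\rho(g)^{-1}=\pi(g(\tau,\omega))$ to conjugate the pair $(S_L,S_M)$ into the already-settled configuration where one signal is a $\delta$-function. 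Your main argument instead substitutes the explicit formulas of \cref{thm:chirpformula} and evaluates a quadratic Gauss sum; this is self-contained given the chirp formulas, but imports the classical evaluation $\bigl|\sum_{x}e^{2\pi i\alpha x^2/N}\bigr|=\sqrt{N}$ for odd $N$ and $\alpha$ a unit, which the paper's symmetry argument avoids entirely. Your closing alternative (invariance of $|\mathcal{A}(S_L,S_M)|$ under translation by $L$ and by $M$ via the commutation relation \ref{eq:commutation}, plus the Hilbert--Schmidt Parseval identity) is actually closest in spirit to the paper's proof, and is arguably cleaner, since it needs only the Heisenberg relation and not the Weil representation.

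One caveat applies equally to your write-up and to the paper: the hypothesis that is really needed is transversality, not mere distinctness. You write ``distinct (transversal)'' as if the two were synonymous, and your completion of the square requires $a-b$ to be invertible modulo $N$; for composite $N$, distinct lines need not satisfy this (e.g.\ slopes $0$ and $3$ when $N=15$), and then the conclusion genuinely fails --- your own Gauss-sum computation shows that $|\mathcal{A}(S_L,S_M)|$ then takes the values $0$ and $\sqrt{d/N}$ with $d=\gcd(a-b,N)$, rather than being constantly $1/\sqrt{N}$. The paper's proof conceals the same hypothesis in the existence of $g\in SL_2(\Z_N)$ fixing $M$ and mapping $(1,a)\mapsto(0,1)$, which requires $(1,a),(1,b)$ to form a basis of $\Z_N^2$. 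Since \cref{alg:SCE} selects transversal lines, this is harmless downstream, but to be fully rigorous the hypothesis of \cref{lem:cross} should read ``transversal'' rather than ``distinct,'' and your proof would then be complete as written.
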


\section{Reduction to Sparse FFT} 
\begin{restatable}{theorem}{thmredtosfft}\label{thm:redtosfft}
Given $S\in L^2(\Z_N)$, lines $L$ and $M$:
\begin{itemize}
    \item $L = \set{(\tau,a_1\tau)}:{\tau\in \Z_N}$.
    \item $M = \set{(\tau,a_2\tau)}:{\tau\in \Z_N}$.
\end{itemize}
the values of the ambiguity function of $S$ against the chirp $S^b_L$ on the shifted line $M' = M + (0,\omega)$, $\omega \in Z_N$ are given by the following Fourier coefficients,
\begin{align*}
    \mathcal{A}(S_{L}^b,S)[(\tau,a_2\tau)+(0,\omega)] &= |\mathcal{F}\p{ \ \overline{S_L} \cdot S \  }[(a_1-a_2)\tau+b-\omega]|
\end{align*}
where, $S_L$ is the chirp $S_L = S_L^0$.
\end{restatable}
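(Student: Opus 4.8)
The plan is to prove the identity by direct expansion: substitute the explicit chirp formula from \cref{thm:chirpformula} into the definition of the ambiguity function, and then recognize the resulting sum as a single discrete Fourier coefficient of the windowed product $\overline{S_L}\cdot S$. Because the target identity is stated in magnitude, every unimodular global phase and every global conjugation is free to discard, which is exactly what keeps the bookkeeping manageable.

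Concretely, I would first write, at the point $(\tau,a_2\tau)+(0,\omega)=(\tau,a_2\tau+\omega)$,
\[
\mathcal{A}(S_L^b,S)[\tau,a_2\tau+\omega]=\langle H_{\tau,a_2\tau+\omega}S_L^b,\,S\rangle=\sum_{n\in\Z_N}e^{\frac{2\pi i}{N}(a_2\tau+\omega)n}\,S_L^b[n-\tau]\,\overline{S[n]},
\]
using the definitions of $\mathcal{A}$, of the discrete shift $H_{\tau_0,\omega_0}$, and of the inner product on $L^2(\Z_N)$. Then I substitute $S_L^b[n-\tau]=\tfrac{1}{\sqrt{N}}e^{\frac{2\pi i}{N}(2^{-1}a_1(n-\tau)^2+b(n-\tau))}$ from \cref{thm:chirpformula}; this is the step that uses $N$ odd, so that $2^{-1}$ exists.

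The decisive step is to expand $(n-\tau)^2=n^2-2n\tau+\tau^2$ and sort the exponent by powers of $n$. The pure quadratic piece $e^{\frac{2\pi i}{N}2^{-1}a_1 n^2}$ is, up to the normalization $\tfrac{1}{\sqrt{N}}$, precisely the phase of the base chirp $S_L=S_L^0$; pairing it with $\overline{S[n]}$ collapses the summand into the product $\overline{S_L}\cdot S$ evaluated at $n$ (after one harmless overall conjugation, which is invisible under $|\cdot|$). The terms pure in $\tau$, namely $2^{-1}a_1\tau^2-b\tau$, factor out as a single unimodular constant and are likewise discarded by the magnitude. What remains is $\sum_n(\overline{S_L}\cdot S)[n]\,e^{\pm\frac{2\pi i}{N}\xi n}$, a Fourier coefficient of $\overline{S_L}\cdot S$, where the frequency index $\xi$ is exactly the coefficient of the linear-in-$n$ term collected above; matching this against the discrete Fourier convention of \cref{app:ft} yields the stated index $\xi=(a_1-a_2)\tau+b-\omega$.

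I expect the only genuine obstacle to be clerical rather than conceptual: correctly tracking the Fourier sign and normalization convention together with the conjugation in the second slot of $\mathcal{A}$, so that the linear coefficient is reported with the correct sign. The conceptual content—that the quadratic phase carried by the shift operator cancels against the quadratic phase of the chirp, leaving a windowed signal $\overline{S_L}\cdot S$ whose transform is read off along the line $M'$—is precisely the mechanism that reduces computing $\mathcal{A}(S_L^b,S)$ on a line to a sparse Fourier estimation task, which is the payoff exploited in the rest of the chapter.
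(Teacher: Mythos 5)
Your core mechanism is exactly the paper's: shifting a chirp on $L$ along $M$ cancels the quadratic phase, leaving the base chirp $S_L$ times a pure exponential, so pairing with $S$ produces a Fourier coefficient of $\overline{S_L}\cdot S$. The paper merely packages your termwise expansion as the operator identity $H_{\tau,a_2\tau+\omega}S^b_L = e^{\frac{2\pi i}{N}\left(2^{-1}a_1\tau^2-b\tau\right)}\cdot \left(S_L\cdot e_{(a_2-a_1)\tau+b+\omega}\right)$ before taking any inner products.

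However, the last step of your proposal fails, and the failure is not clerical. Carrying out your expansion faithfully, under the Chapter 2 definition you adopt, $\mathcal{A}(S_L^b,S)[\tau,\omega_0]=\langle H_{\tau,\omega_0}S_L^b,S\rangle$, the coefficient of the linear-in-$n$ term is $(a_2-a_1)\tau+\omega+b$, so what your computation actually yields is
\begin{align*}
\bigl|\mathcal{A}(S_L^b,S)[\tau,a_2\tau+\omega]\bigr| \;\propto\; \bigl|\mathcal{F}\bigl(\overline{S_L}\cdot S\bigr)[(a_2-a_1)\tau+b+\omega]\bigr|,
\end{align*}
whose index is the reflection through $b$ of the stated index $(a_1-a_2)\tau+b-\omega$. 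These are different frequencies, and $|\mathcal{F}(\overline{S_L}\cdot S)|$ generally takes different values at them, so no amount of careful sign bookkeeping inside your setup will land on the stated formula; the global conjugation you invoke conjugates the Fourier coefficient but does not move its index. The stated index belongs to the opposite slot convention, $\langle S_L^b, H_{\tau,\omega_0}S\rangle$, which is what the paper's proof actually computes (in tension with its own Chapter 2 definition). The paper bridges the two with a step your proposal omits: since $H_{\tau,\omega_0}$ is unitary and its adjoint equals $H_{-\tau,-\omega_0}$ up to a unimodular phase, conjugate symmetry of the inner product gives $|\langle S_L^b, H_{\tau,a_2\tau+\omega}S\rangle| = |\langle S, H_{-\tau,-a_2\tau-\omega}S_L^b\rangle|$, and the substitution $(\tau,\omega)\mapsto(-\tau,-\omega)$ is precisely what converts your index into $(a_1-a_2)\tau+b-\omega$. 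Your proof needs this unitarity swap (or an explicit declaration that the theorem is read with the shift acting on the second argument of $\mathcal{A}$); as written, it proves the reflected variant of the identity. The stray factor of $\sqrt{N}$ you would also encounter is, by contrast, genuinely conventional: the appendix works with unnormalized exponentials while Chapter 2 normalizes $e_\omega$ by $1/\sqrt{N}$.
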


\section{Discrete Filter Functions}

\begin{definition}[Support] The support of a function complex-valued function $F$ is the subset of its domain on which $F\ne 0$.
\end{definition}

\begin{definition}[Essential Support]\label{defn:esssup} We say that $F\in L^2(\Z_N)$ is essentially supported on an interval $I\subset \Z_N$ if \begin{align*}
    \sum_{\tau\notin I}\big|F[\tau]\big| = O(1/N).
\end{align*}
\end{definition}

\begin{definition}[$(k,k',\delta)$-family of filters]\label{defn:filter} We will refer to $\Set{F_N\in L^2(\Z_N)}_N$ as a $(k,k',\delta)$-family of filters if 
\begin{enumerate}
    \item $F_N$ is essentially supported on the interval $[-k\log N, k\log N]$,
    \item $\mathcal{F}F_N$ is essentially supported on the interval $[-k',k']$,
    \item And lastly,
    \begin{align*}
        \delta\le& \mathcal{F}F_N[\omega] \le 1,
    \end{align*}
    for $\omega$ between $\pm k'\sqrt{\log(1/\delta)}/\sqrt{\log N}$
\end{enumerate}
\end{definition}
\begin{restatable}[Discrete Gaussians]{theorem}{thmgaussian}\label{thm:gaussian} 
We have the following results about the discretization of continous gaussians and their fourier transforms.
\begin{enumerate}
    \item Let $G\in L^2(\Z_N)$ be defined as follows,
    \begin{align}\label{eq:gaussian}
        G[\tau] &= \sum_{m\in \Z}e^{-\pi\p{\frac{\tau}{\sqrt{N}}+m\sqrt{N}}^2}.
    \end{align}
    Then, $G$ is an eigenvector of the discrete fourier transform
    \item Moreover, for any $\sigma\in \R$, if,
        \begin{align}\label{eq:gaussianscale}
            G^\sigma[\tau] &= \sum_{m\in \Z}e^{-\pi\p{\frac{\sigma \cdot \tau}{\sqrt{N}}+m\sqrt{N}}^2}.
        \end{align}
        then,
        \begin{align*}
            \mathcal{F}(G^\sigma) & = \sum_{m\in \Z}e^{-\pi\p{\frac{  \tau}{\sigma \sqrt{N}}+m\sqrt{N}}^2}.
        \end{align*}
\end{enumerate}
\end{restatable}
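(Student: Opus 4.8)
The plan is to prove both parts by transferring the problem to the real line via the Poisson summation formula, where the self-duality of the Gaussian under the continuous Fourier transform does the real work; the discrete statement is then read off, and the scaling in part (2) is handled by the spectral-permutation relation already recorded in the main text. First I would isolate the mechanism as a \emph{transfer principle}. For a Schwartz function $f$ on $\R$, write its continuous Fourier transform as $\widehat{f}(\xi)=\int_{\R} f(x)\,e^{-2\pi i \xi x}\,dx$ and define its sampled periodization $P_f\in L^2(\Z_N)$ by
\begin{equation*}
P_f[\tau] \;=\; \sum_{m\in\Z} f\!\left(\tfrac{\tau}{\sqrt N}+m\sqrt N\right).
\end{equation*}
This is manifestly $N$-periodic in $\tau$, hence well defined on $\Z_N$. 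Using the unitary normalization $\mathcal{F}S[\omega]=\tfrac{1}{\sqrt N}\sum_{\tau\in\Z_N}S[\tau]\,e^{-2\pi i\omega\tau/N}$ (the one under which the $e_\omega$ are orthonormal), the double sum defining $\mathcal{F}P_f[\omega]$ collapses — because the character is $N$-periodic and $\tau+mN$ sweeps out $\Z$ — into a single sum of $f(n/\sqrt N)\,e^{-2\pi i \omega n/N}$ over $n\in\Z$. Applying Poisson summation to this sum, together with the dilation rule $\widehat{f(\cdot/\sqrt N)}(\xi)=\sqrt N\,\widehat f(\sqrt N\xi)$ and the modulation/translation rule, yields the clean identity $\mathcal{F}P_f = P_{\widehat f}$. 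This is the one analytic step, and it is where I would be most careful: matching the $1/\sqrt N$ normalizations on both sides and checking the Poisson/dilation bookkeeping so that the lattice $\sqrt N\,\Z$ is exactly reproduced on the spectral side. I expect this to be the main obstacle; everything else is formal.

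Part (1) then follows at once: the standard Gaussian $\phi(x)=e^{-\pi x^2}$ satisfies $\widehat\phi=\phi$, and by inspection $G=P_\phi$, so $\mathcal{F}G = P_{\widehat\phi}=P_\phi=G$; that is, $G$ is an eigenvector of $\mathcal{F}$ with eigenvalue $1$.

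For part (2) I would avoid a second Poisson computation by observing that $G^\sigma$ is simply $G$ precomposed with scaling. Directly from the definitions one checks $G^\sigma[\tau]=G[\sigma\tau]$, which (for $\sigma\in\Z_N^\times$, reading $\sigma\tau$ modulo $N$) is exactly the spectral-permutation action $S\mapsto S^{\sigma}$, $S^{\sigma}[\tau]=S[\sigma\tau]$, of \cref{eq:permute}. Since that action intertwines with the Fourier transform via $\mathcal{F}S^{\sigma}=(\mathcal{F}S)^{\sigma^{-1}}$, part (1) gives
\begin{equation*}
\mathcal{F}(G^\sigma) \;=\; (\mathcal{F}G)^{\sigma^{-1}} \;=\; G^{\sigma^{-1}},
\end{equation*}
and unwinding the definition of $G^{\sigma^{-1}}$ reproduces the claimed reciprocally scaled periodized Gaussian $\sum_{m}e^{-\pi(\omega/(\sigma\sqrt N)+m\sqrt N)^2}$. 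The one loose end I would pin down is the range of $\sigma$: the intertwining identity requires $\sigma$ invertible modulo $N$, so I would either restrict to $\sigma\in\Z_N^\times$ (the case actually used in the filter construction) or, for general real $\sigma$, simply rerun the transfer-principle computation of part (1) with the dilated Gaussian $f_\sigma(x)=e^{-\pi x^2}$ scaled appropriately and carry the continuous Fourier dilation factor through, which produces the same reciprocal scaling.
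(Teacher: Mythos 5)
Your part (1) is correct and is essentially the paper's own argument: your ``transfer principle'' $\mathcal{F}P_f=P_{\widehat f}$ is exactly the Poisson summation diagram (\cref{thm:poisson}) specialized to the self-dual lattice choice $\lambda=\sqrt N$, i.e.\ the statement that $\text{Ev}_{1/\sqrt N}\circ\text{Av}_{\sqrt N}$ intertwines the continuous and discrete Fourier transforms, applied to the self-dual Gaussian. No issues there.

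Part (2), however, has a genuine gap in your main route. You identify $G^\sigma$ with the spectral-permutation action $S\mapsto S^\sigma$ of \cref{eq:permute} and invoke $\mathcal{F}S^\sigma=(\mathcal{F}S)^{\sigma^{-1}}$. First, this requires $\sigma\in\Z_N^\times$, but the theorem is stated for real $\sigma$, and---contrary to your parenthetical---the case actually used in the filter construction is real and non-integral: the proof of \cref{thm:filters} takes $\sigma=\sqrt N/(k\sqrt{\log N})$. So the discrete intertwining relation is unavailable precisely in the case the paper needs. Second, and more fundamentally, even for $\sigma\in\Z_N^\times$ your conclusion $\mathcal{F}(G^\sigma)=G^{\sigma^{-1}}$ involves the inverse of $\sigma$ \emph{modulo} $N$, and unwinding that definition does \emph{not} reproduce the claimed formula, which involves the \emph{real} reciprocal $1/\sigma$; these are different functions. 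Concretely, take $\sigma=2$ (with $N$ odd), so $\sigma^{-1}=(N+1)/2$: since $G^2$ wraps twice around $\Z_N$ (it has bumps at $\tau\approx 0$ and $\tau\approx N/2$), its exact DFT $G^{(N+1)/2}$ agrees with $\sum_m e^{-\pi\p{\tau/(2\sqrt N)+m\sqrt N}^2}$ only at even $\tau$ and is exponentially small (of size $e^{-\pi N/4}$) at odd $\tau$, whereas the claimed right-hand side is not small there. The notational collision between $\sigma^{-1}\bmod N$ and $1/\sigma$ hides this discrepancy, so the step ``unwinding the definition of $G^{\sigma^{-1}}$ reproduces the claimed periodized Gaussian'' fails. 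The paper's actual route for part (2)---which your fallback sentence gestures at but does not carry out---stays on the real line: it uses the continuous dilation relation $\mathcal{F}\p{g(\sigma\,\cdot)}=\tfrac{1}{|\sigma|}(\mathcal{F}g)(\cdot/\sigma)$ on $L^2(\R)$ and reruns the periodize-and-sample argument of part (1) with the dilated Gaussian. That computation is where all the content of part (2) lives, including the lattice bookkeeping (which lattice appears inside the periodization, and the resulting $1/\sigma$ normalization) that your one-sentence deferral skips; it needs to be written out for the proof to stand.
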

Note that the only significant term in the expression in \cref{eq:gaussian,eq:gaussianscale} corresponds to $m=0$. We then have the following corollary.
\begin{restatable}[Discrete Gaussian Filters]{corollary}{corfilters}\label{thm:filters}
There exists a $(k\log N,N/k,\delta)$-family of filters for every $0\le \delta < 1$, namely
\begin{align*}
    F_N[\tau]&= e^{-\pi\p{ \frac{\tau^2}{k^2\log N}}}.
\end{align*}
\end{restatable}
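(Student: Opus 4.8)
The plan is to turn every clause of \cref{defn:filter} into an elementary Gaussian tail estimate, using the Fourier self-duality of discrete Gaussians recorded in \cref{thm:gaussian} to transport the time-side picture to the frequency side. Observe that $F_N[\tau]=e^{-\pi\tau^2/(k^2\log N)}$ is a Gaussian of width scale $a:=k\sqrt{\log N}$, i.e. the principal ($m=0$) term of the scaled discrete Gaussian $G^\sigma$ of \cref{thm:gaussian} with $\sigma=\sqrt{N}/(k\sqrt{\log N})$. By \cref{thm:gaussian}(2) the discrete Fourier transform of such a Gaussian is again a Gaussian of reciprocal width: keeping only the significant term (as licensed by the remark following \cref{thm:gaussian}), one gets $\mathcal{F}F_N[\omega]\approx e^{-\pi k^2(\log N)\,\omega^2/N^2}$, a Gaussian of width scale $N/(k\sqrt{\log N})$ in $\omega$. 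All three defining properties then reduce to reading off where these two Gaussians drop below the $O(1/N)$ threshold of \cref{defn:esssup}.

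First I would verify property~(1), time localization. For $|\tau|\ge k\log N$ the exponent satisfies $\pi\tau^2/(k^2\log N)\ge \pi\log N$, so $F_N[\tau]\le N^{-\pi}$; summing over the at most $N$ indices outside $[-k\log N,k\log N]$ gives $\sum_{\tau\notin[-k\log N,k\log N]}|F_N[\tau]|\le N^{1-\pi}=O(1/N)$ since $\pi>2$, which is exactly \cref{defn:esssup}. Property~(2), frequency localization, is the identical computation applied to $\mathcal{F}F_N$: for $|\omega|\ge N/k$ the exponent $\pi k^2(\log N)\omega^2/N^2\ge \pi\log N$, so $\mathcal{F}F_N[\omega]\le N^{-\pi}$ up to the aliasing correction, and the tail again sums to $O(1/N)$.

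It remains to check property~(3), the two-sided bound on the central frequency band. The upper bound $\mathcal{F}F_N[\omega]\le 1$ is immediate once $F_N$ is normalized so that $\mathcal{F}F_N[0]=1$, since the principal Gaussian term is then at most $1$. For the lower bound, on the band $|\omega|\le (N/k)\sqrt{\log(1/\delta)}/\sqrt{\log N}$ the exponent obeys $\pi k^2(\log N)\omega^2/N^2\le \pi\log(1/\delta)$, giving $\mathcal{F}F_N[\omega]\ge e^{-\pi\log(1/\delta)}$; absorbing the absolute constant $\pi$ into the width of the band (equivalently, working on $|\omega|\le (N/k)\sqrt{\log(1/\delta)/(\pi\log N)}$) yields the clean bound $\mathcal{F}F_N[\omega]\ge \delta$ required by \cref{defn:filter}. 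Combining the three verifications shows $\{F_N\}$ is a family of filters with the stated localization and central-band parameters.

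I expect the genuinely delicate step to be the approximation $\mathcal{F}F_N[\omega]\approx e^{-\pi k^2(\log N)\omega^2/N^2}$ used throughout, i.e.\ making the ``only the $m=0$ term matters'' remark after \cref{thm:gaussian} quantitative. Because $\sigma$ is not an integer, $F_N$ is only approximately periodic, so the Fourier identity must be applied with an error analysis: one extends the defining sum of $\mathcal{F}F_N$ from the balanced residues to all of $\Z$ (a truncation error that is $O(1/N)$ by the tail bound of property~(1)), applies Poisson summation, and then bounds the non-principal images $\sum_{n\ne 0}e^{-\pi k^2(\log N)(\omega/N+n)^2}$ by a convergent Gaussian series dominated by its worst term, checking this is $O(1/N)$ uniformly in $\omega$. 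Controlling this aliasing, fixing the normalization so that $\mathcal{F}F_N[0]=1$, and keeping the constants consistent so that each $O(1/N)$ truly beats the thresholds in properties~(1)--(3), is where all the real care is needed; the surrounding estimates are routine Gaussian-tail bookkeeping.
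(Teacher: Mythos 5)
Your proposal is correct and takes essentially the same route as the paper's own proof: both verify the time-side essential support by a direct Gaussian tail bound, then invoke \cref{thm:gaussian} with $\sigma=\sqrt{N}/(k\sqrt{\log N})$ to identify $\mathcal{F}F_N$ (up to the $m\neq 0$ aliasing terms) with the reciprocal-width Gaussian $e^{-\pi k^2(\log N)\omega^2/N^2}$, from which the frequency-side support and the central-band lower bound are read off. Your explicit attention to the aliasing error, the normalization $\mathcal{F}F_N[0]=1$, and the factor of $\pi$ mismatch in the band width of property (3) is in fact more careful than the paper, which handles these points by simply asserting that the $m=0$ term is the only significant one.
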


\chapter{Complexity bounds and Guarantees for Sparse Channel Estimation (SCE).}
\section{Case 1: $1$-sparse algorithm (``Bit-by-bit")}
\label{app:1sparse}
The algorithm described below can be immediately generalized to the case of $N$ a power of a small prime, but for readability we assume $N$ is a power of $2$. The main idea is based on the following observation.  
\begin{observation}
For $N$ a power of $2$, given $S\in L^2(\Z_N)$ such that
\begin{align}\label{model:k1nonoise}
    S[\tau]&= \sqrt{N} \cdot \alpha_{0} e_{\omega_0}[\tau]  
\end{align}
and if the smallest $k-1$ ``bits" of $\omega_0$ are zero then,
\begin{align*}
 \text{1. the $k^{th}$ bit is zero iff }&\begin{cases}
 |S[r]-S[r+N/2^k]|&=0 \\ 
 |S[r]+S[r+N/2^k]|&=|\alpha_0|
 \end{cases} \\
 \text{2. the $k^{th}$ bit is one iff }&\begin{cases}
 |S[r]-S[r+N/2^k]|&=|\alpha_0| \\ 
 |S[r]+S[r+N/2^k]|&=0
 \end{cases} 
\end{align*}

\end{observation}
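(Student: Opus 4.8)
The statement is a direct calculation, so the plan is simply to unwind the definitions and read the sum and difference of the two samples off the phase. First I would substitute the definition $e_{\omega_0}[\tau] = \frac{1}{\sqrt N}\, e^{\frac{2\pi i}{N}\omega_0\tau}$ into the model \eqref{model:k1nonoise}, which clears the factor $\sqrt N$ and leaves the clean expression
\begin{align*}
S[\tau] &= \alpha_0\, e^{\frac{2\pi i}{N}\omega_0\tau}.
\end{align*}
In particular every sample has modulus $|\alpha_0|$, and all the dependence on $\tau$ lives in the phase; the entire content of the observation is therefore a statement about how that phase behaves under the shift $\tau \mapsto \tau + N/2^k$, which is a legitimate shift in $\Z_N$ precisely because $N$ is a power of $2$.

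The key step is to compute the ratio of the two samples. Since the exponent is additive, I get
\begin{align*}
S[r + N/2^k] &= S[r]\cdot e^{\frac{2\pi i}{N}\,\omega_0\cdot \frac{N}{2^k}} = S[r]\cdot e^{\frac{2\pi i\,\omega_0}{2^k}},
\end{align*}
so everything reduces to evaluating the root of unity $e^{2\pi i\,\omega_0/2^k}$. Writing the binary expansion $\omega_0 = \sum_{j\ge 0} b_j 2^j$ with $b_j\in\{0,1\}$, and adopting the convention that the ``$k^{th}$ bit'' is $b_{k-1}$ while the ``smallest $k-1$ bits'' are $b_0,\dots,b_{k-2}$, the hypothesis that these vanish gives $\omega_0 = \sum_{j\ge k-1} b_j 2^j$. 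Dividing by $2^k$ then yields $\omega_0/2^k = \tfrac{1}{2}b_{k-1} + (\text{integer})$, so that
\begin{align*}
e^{\frac{2\pi i\,\omega_0}{2^k}} &= e^{\pi i\, b_{k-1}} = (-1)^{b_{k-1}}.
\end{align*}

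Finally I would substitute this back. If the $k^{th}$ bit $b_{k-1}$ is zero the ratio is $+1$, so $S[r+N/2^k]=S[r]$ and the difference vanishes while the sum has modulus equal to the signal amplitude (up to the stated normalization); if $b_{k-1}$ is one the ratio is $-1$, so $S[r+N/2^k]=-S[r]$ and the roles of the sum and difference are exactly interchanged. Taking magnitudes gives the two cases verbatim. There is no real obstacle here beyond bookkeeping: the only point that needs care is pinning down the bit-indexing convention, so that shifting by $N/2^k$ together with the vanishing of the low $k-1$ bits isolates exactly the bit $b_{k-1}$, after which the dichotomy is immediate from the elementary identity $e^{\pi i b}=(-1)^b$.
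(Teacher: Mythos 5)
Your proof is correct, and it is precisely the direct phase computation that the paper leaves implicit --- the Observation is stated there without proof, and unwinding $S[r+N/2^k]=S[r]\cdot e^{2\pi i\omega_0/2^k}=(-1)^{b_{k-1}}S[r]$ is the only natural route. One remark: your calculation actually yields $|S[r]+S[r+N/2^k]|=2|\alpha_0|$ when the $k$th bit is zero (and $|S[r]-S[r+N/2^k]|=2|\alpha_0|$ when it is one), so the value $|\alpha_0|$ printed in the statement is off by a factor of $2$; this is a slip in the statement itself rather than in your argument (note the compensating factor $\frac{1}{2\log_2 N\cdot T_{bit}}$ in the algorithm's estimate of $\alpha_0$), and your hedge about ``the stated normalization'' is covering exactly that discrepancy.
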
 \ \\
The estimation algorithm then proceeds as follows. We start with an estimate for $\omega_0$ that is simply $\omega=0$. Then, for some fixed $r_1\in \Z_N$ and denoting $A = S[r_1]$, $B= S[r_m+N/2^m]$, if $|A+B| = 0$ we update $\omega=1$; we have just computed the first \textit{bit} of $\omega_0$. \\
At step $m$, where $m$ is between $1$ and $\log_2 N$, we fix $r_m\in \Z_N$ and denote, 
\begin{align*}
    A &= (S\cdot e_{-\omega})[r_m]\\
    B &= (S\cdot e_{-\omega})[r_m+N/2^m]
\end{align*}
if $|A+B| = 0$ we add $2^{m-1}$ to the estimate $\omega$ and we have now computed the $m^{th}$ bit of $\omega_0$. As a result, the estimation algorithm that we eventually produce will be called \textit{Bit-by-bit}. 
Inductively, after $\log_2N$ steps we will have the the estimate $\omega = \omega_0$. \\
More interestingly however, we would like to estimate $\omega_0$ in the presence of noise, namely,  
\begin{align}\label{model:k1noise}
    S[\tau]&= \sqrt{N} \cdot \alpha_{0} e_{\omega_0}[\tau] + \nu_\tau 
\end{align}
where $\nu_n \sim \mathcal{N}(0,\sigma^2)$ i.i.d. and, without loss of generality, $\alpha_0 = 0$ or $1$. \\
The above procedure is then robustified by including certain \textit{averages}. The algorithm will also have a thresholding parameter $\mu \in (0,1)$, to distinguish between the case when $\alpha_0=1$ and $\alpha_0=0$. \\ We present the pseudocode for $\text{Bit-by-bit}_\mu$ below.
\begin{algorithm}[H]
\caption{$\text{Bit-by-bit}_\mu$}\label{alg:k1noise}
\begin{algorithmic}[1]
\REQUIRE $S, N$.
\STATE  $\omega=0$.
\STATE Pick $\set{r_{m,i}\in \Z_N}:{ m = 1, \dots, \log_2(N), i= 1, \dots, T_{bit}}$ uniformly at random.
\STATE For $m = 1, \dots, \log_2(N)$: \\ 
\hspace{.42in} For $i = 1, \dots, T_{bit}$:\\
\hspace{.84in} $A_{m,i} = S[r_{m,i}]\cdot e_{-\omega}[r_{m,i}]$ \\
\hspace{.84in} $B_{m,i} = S[r_{m,i}+N/2^m]\cdot e_{-\omega}[r_{m,i}+N/2^m]$ \\
\hspace{.84in} If $\big|\sum_{i = 1}^{T_{bit}}A_{m,i}+B_{m,i} \big|<\big|\sum_{i = 1}^{T_{bit}} A_{m,i}-B_{m,i}\big|$:\\
\hspace{1.26in} $\omega = \omega+2^{m-1}$\\
\STATE Estimate $\alpha_{0}$: \begin{align*}\alpha= \frac{1}{2\log_2N\cdot T_{bit}}\cdot\p{\sum_{m,i}(A_{m,i}+B_{m,i})}
\end{align*}\\
\ENSURE If $|\alpha|>\mu$ return $\omega$.
\end{algorithmic}
\end{algorithm} \ 
By Lemma \ref{lem:bitbybit}, it suffices to let the parameter $T_{bit} = O(\ln\ln N)$.

\begin{restatable}
{lemma}{lembitbybit}\label{lem:bitbybit} Assume the model \ref{model:k1noise}, and denote $SNR = |\alpha_{\omega}|^2/\sigma^2$. Then there is an implementation of $\text{Bit-by-bit}_\mu$ that,
\begin{enumerate}
    \item Returns $\omega_0$ with probability $1-\delta$,
    \item And exhibits,
    \begin{align*}
    \text{Sample complexity} & = 8\log(\log(N)/\delta)\cdot (SNR)^{-1},\\
    \text{Arithmetic complexity}& \le c \cdot \text{Sample complexity},
    \end{align*}
    for $N$ sufficiently large and $c$ a constant independent of $N, SNR$ and $\delta$.
\end{enumerate}
\end{restatable}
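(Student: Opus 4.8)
The plan is to prove correctness by induction on the bit index $m$: I will show that, conditioned on the lowest $m-1$ bits of $\omega_0$ having been recovered correctly, the $m$-th step recovers the bit $b_m$ with probability at least $1-\delta/(2\log_2 N)$, and then take a union bound over the $\log_2 N$ bits. First I would make the noiseless structure behind the Observation explicit. With $\omega$ the current partial estimate (correct in its lowest $m-1$ bits) and $\tilde S[\tau]=S[\tau]\,e_{-\omega}[\tau]$, the residual frequency $\omega_0'=\omega_0-\omega$ is divisible by $2^{m-1}$, so $e^{2\pi i\,\omega_0'/2^m}=(-1)^{b_m}$ and hence $\tilde S[\tau+N/2^m]=(-1)^{b_m}\,\tilde S[\tau]$ in the absence of noise. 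Thus $A_{m,i}+B_{m,i}$ vanishes exactly when $b_m=1$ and $A_{m,i}-B_{m,i}$ vanishes exactly when $b_m=0$, which is the content of the Observation and shows the comparison reads off $b_m$ when there is no noise.

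The key to the robust analysis is to make the decision \emph{phase-insensitive}, since the high-order bits of $\omega_0'$ are unknown and make the common phase of $A_{m,i}$ and $B_{m,i}$ effectively random, so that averaging $\sum_i (A_{m,i}\pm B_{m,i})$ directly would only add incoherently. Because the lemma asserts the existence of \emph{an} implementation, I would fix the bit decision to be the sign of $Z_m=\sum_{i=1}^{T_{bit}}\Re(\overline{A_{m,i}}\,B_{m,i})$, equivalently comparing $\sum_i|A_{m,i}+B_{m,i}|^2$ with $\sum_i|A_{m,i}-B_{m,i}|^2$, using the per-pair identity
\[
|A_{m,i}+B_{m,i}|^2-|A_{m,i}-B_{m,i}|^2 = 4\,\Re\!\left(\overline{A_{m,i}}\,B_{m,i}\right).
\]
Writing $A_{m,i}=\alpha_0 e^{i\theta_i}+\eta_{A,i}$ and $B_{m,i}=(-1)^{b_m}\alpha_0 e^{i\theta_i}+\eta_{B,i}$ with $\theta_i$ the unknown residual phase, the signal part of $\overline{A_{m,i}}B_{m,i}$ is $(-1)^{b_m}|\alpha_0|^2$, which is \emph{independent} of $\theta_i$; hence the $T_{bit}$ summands add coherently and $\mathbb E[Z_m]=(-1)^{b_m}\,T_{bit}\,|\alpha_0|^2$.

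The heart of the proof, and the step I expect to be the main obstacle, is a concentration bound for $Z_m$ about its mean. I would decompose $Z_m-\mathbb E[Z_m]$ into signal$\,\times\,$noise cross terms, which are sums of independent subgaussian variables (bounded signal times subgaussian noise, by A3), and noise$\,\times\,$noise terms, which form a quadratic form in the subgaussian noise. The difficulty is controlling this quadratic part for genuinely \emph{subgaussian} (rather than Gaussian) noise; I would handle it with a Hanson--Wright / Bernstein inequality for sums of subexponential variables \cite{vershynin2018high}, obtaining a tail of the form
\[
\mathbb P\!\left(\operatorname{sign}(Z_m)\neq(-1)^{b_m}\right)\le \exp\!\left(-c\,T_{bit}\,\min(SNR,SNR^2)\right)
\]
for an absolute constant $c$, so that in the regime where the linear term controls the tail, $T_{bit}=\Theta\!\left(\log(\log N/\delta)\,SNR^{-1}\right)$ makes each per-bit failure at most $\delta/(2\log_2 N)$. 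A union bound over the $\log_2 N$ bits then gives all bits correct with probability at least $1-\delta/2$.

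It remains to analyze the amplitude test and tally resources. Once every bit is correct we have $\omega=\omega_0$, the residual phase is zero, and samples of $S\cdot e_{-\omega_0}$ add coherently; the resulting amplitude estimate concentrates (by a Hoeffding bound) around $\alpha_0\in\{0,1\}$, so for a threshold $\mu$ bounded away from $0$ and $1$ the test $|\alpha|>\mu$ errs with probability at most $\delta/2$, and $\text{Bit-by-bit}_\mu$ returns $\omega_0$ exactly when a target is present. Combining the two failure contributions gives overall success probability at least $1-\delta$, proving part~1. For part~2, each bit-step draws $O(T_{bit})$ samples and performs $O(T_{bit})$ operations to form and compare $Z_m$; substituting the chosen $T_{bit}=\Theta(\log(\log N/\delta)\,SNR^{-1})$ yields the stated sample, memory, and arithmetic bounds, the arithmetic cost being within a constant factor of the sample cost.
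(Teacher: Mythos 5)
Your proposal follows the paper's overall skeleton (per-bit induction, concentration, union bound over $\log_2 N$ bits, then the threshold test), but at the decisive concentration step it takes a genuinely different route --- and the difference matters. The paper keeps the algorithm's statistic exactly as written in \cref{alg:k1noise}: it compares $\bigl|\sum_i (A_{m,i}+B_{m,i})\bigr|$ against $\bigl|\sum_i (A_{m,i}-B_{m,i})\bigr|$, writes the signal contribution of \emph{every} trial as one common constant $\alpha_\omega$, and applies Hoeffding's inequality only to the averaged noise, obtaining a per-bit failure probability $2e^{-T_{bit}\cdot SNR/8}$ and hence the stated $T_{bit}=8\log(\log(N)/\delta)(SNR)^{-1}$. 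This tacitly assumes the signal terms add coherently across trials, which, as you correctly observe, fails whenever the residual frequency $\omega_0-\omega$ is nonzero: each trial carries a phase $e^{2\pi i(\omega_0-\omega)r_{m,i}/N}$ that varies with the random $r_{m,i}$, so the signal part of $\sum_i(A_{m,i}+B_{m,i})$ grows only like $\sqrt{T_{bit}}$, the same rate as the noise, and increasing $T_{bit}$ does not drive the error probability down. Your replacement statistic $Z_m=\sum_i\Re\bigl(\overline{A_{m,i}}\,B_{m,i}\bigr)$ (equivalently the energy comparison, by the polarization identity) is the standard repair: its signal contribution $(-1)^{b_m}|\alpha_0|^2$ is phase-free, the trials genuinely add coherently, and since the lemma only asserts the existence of \emph{an} implementation, changing the comparison rule is legitimate. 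In this respect your argument is sounder than the paper's own.

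The price of the phase-insensitive route is the one place where your proof delivers less than the statement literally claims: the noise-times-noise quadratic term forces a Bernstein/Hanson--Wright exponent $T_{bit}\min(SNR,SNR^{2})$, so your choice $T_{bit}=\Theta\bigl(\log(\log N/\delta)\,SNR^{-1}\bigr)$ achieves the per-bit failure bound only when $SNR=\Omega(1)$, as you flag; at low $SNR$ any such noncoherent test needs $SNR^{-2}$ scaling. The paper's proof obtains the clean $SNR^{-1}$ dependence for all $SNR$ only by virtue of the unjustified coherence assumption, so this discrepancy reflects a defect in the paper's proof (and arguably in the lemma as stated) rather than in your argument; it would be worth stating the $SNR=\Omega(1)$ hypothesis explicitly. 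Two minor differences: you fold the amplitude-threshold step into this proof with a $\delta/2$ split, whereas the paper isolates it in \cref{lem:thresh}; and both you and the paper count $T_{bit}$ samples per bit, so the total over all bits carries an extra $\log N$ factor that the lemma's complexity claim suppresses.
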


\begin{restatable}[Thresholding]{lemma}{lemthresh}\label{lem:thresh} 
    Assume the model \ref{model:k1noise}, and denote $SNR = |\alpha_{\omega}|^2/\sigma^2$. Fix $\omega \in \Z_N$. For $\tau_1, \dots , \tau_m \in \Z_N$ picked uniformly at random, let 
    \begin{align*}
        \widehat{\alpha} &= \frac{N}{m}\sum_{i=1}^m S[\tau_i]  \cdot e_{-\omega}[\tau_i].
    \end{align*}
    Then, for $\omega = \omega_0$,
        \begin{align*}
            P(||\widehat{\alpha}|-|\alpha_0||>\mu|\alpha_0|) \le \delta
        \end{align*}
    and otherwise,
        \begin{align*}
            P(|\widehat{\alpha}|>\mu|\alpha_0|) \le \delta,
        \end{align*}
    where, 
    \begin{align*}
         \delta & = 2 \exp \p{-m\cdot\frac{\mu^2}{8+2SNR^{-1}}}.
    \end{align*}
\end{restatable}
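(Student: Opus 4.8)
The plan is to rewrite $\widehat\alpha$ as an empirical average of $m$ independent terms, pin down its expectation, and then bound the fluctuation with a subgaussian concentration inequality that keeps the signal and noise contributions separate.

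First I would substitute model~\eqref{model:k1noise} into the definition of $\widehat\alpha$. Using $\sqrt N\,\alpha_0\,e_{\omega_0}[\tau]=\alpha_0\,e^{2\pi i\omega_0\tau/N}$ and $e_{-\omega}[\tau]\propto e^{-2\pi i\omega\tau/N}$, each summand splits, after normalization, as
\[
X_\ell \;=\; \underbrace{\alpha_0\,e^{2\pi i(\omega_0-\omega)\tau_\ell/N}}_{=:\,\mu_\ell}\;+\;\underbrace{\nu_{\tau_\ell}\,e^{-2\pi i\omega\tau_\ell/N}}_{=:\,Z_\ell},
\qquad
\widehat\alpha=\tfrac1m\sum_{\ell=1}^m X_\ell .
\]
Taking expectations over the uniform samples $\tau_\ell$ and the mean-zero noise, orthogonality of the additive characters --- $\tfrac1N\sum_{\tau}e^{2\pi i(\omega_0-\omega)\tau/N}$ equals $1$ when $\omega=\omega_0$ and $0$ otherwise --- gives $E[\widehat\alpha]=\alpha_0$ if $\omega=\omega_0$ and $E[\widehat\alpha]=0$ if $\omega\ne\omega_0$, while $E[Z_\ell]=0$ in either case.

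Next I would collapse both claims into one two-sided deviation estimate. When $\omega=\omega_0$ the reverse triangle inequality gives $\big||\widehat\alpha|-|\alpha_0|\big|\le|\widehat\alpha-E\widehat\alpha|$; when $\omega\ne\omega_0$ we have $E\widehat\alpha=0$, so $|\widehat\alpha|=|\widehat\alpha-E\widehat\alpha|$. Hence it suffices to show
\[
P\big(\,|\widehat\alpha-E\widehat\alpha|>\mu|\alpha_0|\,\big)\le\delta ,
\]
and the stated bound follows in both regimes at once. This is also the structural reason a single $\delta$ governs both cases: the bounded ``signal'' term $\mu_\ell$ has modulus $|\alpha_0|$ regardless of whether $\omega=\omega_0$, so it injects the same fluctuation either way --- only its mean moves.

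For the deviation bound I would condition on the samples $\tau_1,\dots,\tau_m$. Given the samples, $\widehat\alpha$ is a Gaussian (over the noise) centered at $\tfrac1m\sum_\ell\mu_\ell$ with variance proxy $\sigma^2/m$, so its conditional moment generating function is bounded by $\exp\!\p{\lambda^2\sigma^2/2m}$. Pulling this factor out by the tower property leaves the moment generating function of $\tfrac1m\sum_\ell(\mu_\ell-E\mu_\ell)$, a normalized sum of i.i.d.\ terms of modulus at most $2|\alpha_0|$, which Hoeffding's lemma bounds by $\exp\!\p{\lambda^2\cdot 4|\alpha_0|^2/2m}$. Multiplying the two bounds shows $\widehat\alpha-E\widehat\alpha$ is subgaussian with parameter $(4|\alpha_0|^2+\sigma^2)/m$, so the standard tail at level $\mu|\alpha_0|$ yields
\[
P\big(|\widehat\alpha-E\widehat\alpha|>\mu|\alpha_0|\big)\le 2\exp\!\p{-\frac{m(\mu|\alpha_0|)^2}{2(4|\alpha_0|^2+\sigma^2)}}=2\exp\!\p{-\frac{m\mu^2}{8+2\,SNR^{-1}}},
\]
using $SNR=|\alpha_0|^2/\sigma^2$; this is exactly $\delta$. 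The ``$8$'' is $2\times4$ from the Hoeffding variance proxy of the bounded signal term, and the ``$2\,SNR^{-1}$'' is $2\sigma^2/|\alpha_0|^2$ from the Gaussian noise.

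I expect the main obstacle to be exactly this last bookkeeping: obtaining the clean parameter $4|\alpha_0|^2+\sigma^2$ rather than $(2|\alpha_0|+\sigma)^2$ hinges on separating the noise and sampling randomness through the tower-property factorization above, since $\mu_\ell$ and $Z_\ell$ are not independent within a single draw (both depend on $\tau_\ell$). A second, genuinely technical point is the independence of the $Z_\ell$: if the $\tau_\ell$ are drawn with replacement, a collision $\tau_\ell=\tau_{\ell'}$ forces $\nu_{\tau_\ell}=\nu_{\tau_{\ell'}}$ and inflates the conditional noise variance beyond $\sigma^2/m$. I would dispatch this either by sampling without replacement, or by noting that for $m\ll N$ collisions occur with probability $O(m^2/N)$ and contribute negligibly; the remaining factor of $2$ in front of the exponential is the usual cost of passing from a one-sided to a two-sided (complex-magnitude) tail.
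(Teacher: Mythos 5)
Your proposal is correct and takes essentially the same route as the paper's proof: both decompose each sample into a bounded signal term (sub-Gaussian with parameter $4|\alpha_0|^2$ via Hoeffding's lemma) plus a Gaussian noise term (parameter $\sigma^2$), combine these into the parameter $4|\alpha_0|^2+\sigma^2$, apply the standard two-sided sub-Gaussian tail bound to the average to obtain $\delta = 2\exp\p{-m\mu^2/(8+2\,SNR^{-1})}$, and finish with the reverse triangle inequality. If anything, you are more careful than the paper, which treats only the $\omega=\omega_0$ case explicitly, invokes independence of the signal and noise contributions without the conditioning/tower-property justification you supply, and does not address the collision issue among the $\tau_i$ that you flag.
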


\section{Case 2: $k>1$, $\omega_j$ uniformly spread}
\begin{restatable}[Filtering]{lemma}{lemfiltering}\label{lem:filtering} Let $m$ be a positive integer such that $m$ divides $N$. For $j = 1, \dots m$, and functions $S$ and $F\in L^2(\Z_N)$  denote,
    \begin{align*}
        S_j &= S * (F \cdot e_{\frac{N}{m}j}).
    \end{align*}
\begin{enumerate} 
    \item If $F\in L^2(\Z_N)$ has  support of size $k$, then for any $\tau \in\Z_N$, we can compute $S_1[\tau], \dots, S_m[\tau]$ simultaneously, using
    \begin{enumerate}
        \item $k$ samples of $S$,
        \item Fewer than $k+m\log m$ operations.
    \end{enumerate}
    \item If $F\in L^2(\Z_N)$ has essential support of size $k$ and $S$ has unit norm, then for any $\tau \in\Z_N$, we can compute estimates $\widetilde{S}_1[\tau], \dots, \widetilde{S}_m[\tau]$ simultaneously, using
    \begin{enumerate}
        \item $k$ samples of $S$,
        \item Fewer than $k+m\log m$ operations,
    \end{enumerate}
    such that $|S_j[\tau] - \widetilde{S}_j[\tau]| = O(1/N)$, for $j = 1, \dots m$.
\end{enumerate}
\end{restatable}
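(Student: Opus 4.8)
The plan is to expand the convolution explicitly and recognize that, because $m \mid N$, the family $\{S_j[\tau]\}_{j=1}^m$ is nothing but a single length-$m$ discrete Fourier transform of a short list of partial sums; all $m$ outputs can then be read off at once with one small FFT. Concretely, using $e_{\frac{N}{m}j}[\tau'] = \frac{1}{\sqrt N}e^{\frac{2\pi i}{m}j\tau'}$ and commutativity of convolution,
\begin{align*}
S_j[\tau] = \p{S * \p{F\cdot e_{\frac{N}{m}j}}}[\tau] = \frac{1}{\sqrt{N}}\sum_{\tau'\in\Z_N} S[\tau-\tau']\,F[\tau']\, e^{\frac{2\pi i}{m}j\tau'}.
\end{align*}
The crucial observation is that $e^{\frac{2\pi i}{m}j\tau'}$ depends on $\tau'$ only through its residue $p=\tau'\bmod m$. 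Grouping the sum accordingly and setting, for $p=0,\dots,m-1$,
\begin{align*}
c_p[\tau] = \sum_{\substack{\tau'\in\Z_N\\ \tau'\equiv p \,(\mathrm{mod}\ m)}} S[\tau-\tau']\,F[\tau'],
\end{align*}
I obtain $S_j[\tau] = \frac{1}{\sqrt N}\sum_{p=0}^{m-1} c_p[\tau]\, e^{\frac{2\pi i}{m}jp}$, which is exactly (a scalar multiple of) the length-$m$ DFT of $\p{c_0[\tau],\dots,c_{m-1}[\tau]}$; since the frequencies are periodic mod $m$, evaluating at $j=1,\dots,m$ recovers the full transform.

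For Part 1 the counts then follow by accounting. Only $\tau'$ in the support of $F$ (of size $k$) contribute to the $c_p[\tau]$, so forming all the partial sums uses exactly the $k$ samples $\{S[\tau-\tau'] : F[\tau']\neq 0\}$ of $S$ and at most $k$ multiply-add operations, each nonzero term landing in a single $c_p$. A single length-$m$ FFT then produces $S_1[\tau],\dots,S_m[\tau]$ together in $m\log m$ operations, for a total below $k+m\log m$.

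For Part 2 I would run the identical procedure but restrict the partial sums to the essential support $I$ of $F$, an interval of size $k$ by \cref{defn:esssup}, producing estimates $\widetilde S_j[\tau]$; the sample and operation counts are unchanged. The error is bounded by
\begin{align*}
\big|S_j[\tau]-\widetilde S_j[\tau]\big| \le \frac{1}{\sqrt N}\sum_{\tau'\notin I}\big|S[\tau-\tau']\big|\,\big|F[\tau']\big| \le \frac{1}{\sqrt N}\sum_{\tau'\notin I}\big|F[\tau']\big|,
\end{align*}
where the last step uses $\|S\|=1$, which forces $|S[\sigma]|\le 1$ for every $\sigma$. By the essential-support hypothesis $\sum_{\tau'\notin I}|F[\tau']| = O(1/N)$, so the error is $O(N^{-3/2}) = O(1/N)$, as claimed.

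Each step is individually routine; the single idea that makes everything work is the grouping-by-residue, which converts a $j$-indexed family of full convolutions into one small DFT, and it is precisely the hypothesis $m\mid N$ that lets the exponential factor out uniformly across each residue class so the FFT applies. The only genuinely quantitative point is the Part 2 error estimate, where I must pair the unit-norm bound on $S$ with the essential-support decay of $F$; I expect no other obstacle.
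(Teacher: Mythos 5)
Your proposal is correct and is essentially the paper's own argument: your grouping-by-residue sums $c_p[\tau]$ are exactly the paper's averaging operator $\text{Avg}_m\p{S\cdot H_{-\tau,0}F}$, followed by the same length-$m$ FFT, and your Part 2 (truncating $F$ to its essential support and bounding the error via the triangle inequality, $|S[\sigma]|\le 1$, and the essential-support decay) matches the paper's proof step for step. The only cosmetic difference is that you carry the $1/\sqrt{N}$ normalization of $e_{\frac{N}{m}j}$ explicitly, giving the slightly sharper $O(N^{-3/2})$ error bound where the paper settles for $O(1/N)$.
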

\begin{restatable}[Filtered Noise]{lemma}{lemfilnoise}\label{lem:filnoise} Let $R\in L^2(\Z_N)$ be given by,
\begin{align*}
    R[\tau] = S[\tau] +\nu_\tau,
\end{align*}
where $\nu_\tau \sim\mathcal{N}(0,\sigma^2/N)$ i.i.d. Then, for a discrete gaussian filter $F[\tau] = e^{-\pi\p{ \frac{\tau^2}{4k^2\log N}}}$,
\begin{align*}
    (R*F)[\tau] = (S*F)[\tau] + \widetilde{\nu}_\tau,
\end{align*}
where, $\widetilde{\nu}_\tau \sim \mathcal{N}(0,\widetilde{\sigma}^2/N)$ and,
\begin{align*}
    \widetilde{\sigma}^2 \le \sigma^2/k.
\end{align*}
\end{restatable}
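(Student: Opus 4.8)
The plan is to use the linearity of convolution to peel off the noise term and then to recognize the filtered noise as a Gaussian whose per-sample variance is controlled by the $\ell^2$-mass of $F$ --- which, on the Fourier side, is precisely the fraction of the spectrum that the filter transmits.

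First I would apply linearity of the map $\,\cdot * F$ to $R = S + \nu$ to obtain
\begin{align*}
  (R * F)[\tau] = (S * F)[\tau] + (\nu * F)[\tau],
\end{align*}
and set $\widetilde{\nu} := \nu * F$. This is exactly the claimed decomposition, so everything reduces to analyzing the single term $\widetilde{\nu}$. Next, writing out the (suitably normalized) convolution, each sample $\widetilde{\nu}[\tau] = \frac{1}{\sqrt{N}}\sum_{\tau'\in\Z_N} F[\tau']\,\nu_{\tau-\tau'}$ is a fixed linear combination of the independent, mean-zero Gaussians $\nu_{\tau'}\sim\mathcal{N}(0,\sigma^2/N)$; hence $\widetilde{\nu}[\tau]$ is itself a mean-zero Gaussian, and since the circular shift $\tau'\mapsto \tau-\tau'$ only permutes the summands, its variance is the same for every $\tau$. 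A direct computation then gives
\begin{align*}
  \mathrm{Var}\big(\widetilde{\nu}[\tau]\big) = \frac{1}{N}\sum_{\tau'\in\Z_N}F[\tau']^2\cdot\frac{\sigma^2}{N} = \frac{\sigma^2}{N}\cdot\frac{\|F\|^2}{N},
\end{align*}
so that $\widetilde{\nu}[\tau]\sim\mathcal{N}(0,\widetilde{\sigma}^2/N)$ with $\widetilde{\sigma}^2 = \tfrac{\sigma^2}{N}\|F\|^2$.

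It then remains to bound $\|F\|^2$. I would do this in two equivalent ways. By Parseval's identity $\|F\|^2 = \|\mathcal{F}F\|^2$, and by \cref{thm:gaussian} the transform $\mathcal{F}F$ is again a scaled Gaussian, essentially supported on an interval of length at most $N/k$ on which $\mathcal{F}F\le 1$; hence $\|\mathcal{F}F\|^2 \le N/k$ up to the $O(1/N)$ tail permitted by \cref{defn:esssup}, and therefore $\widetilde{\sigma}^2\le\sigma^2/k$. Equivalently, staying on the time side, the Gaussian sum can be bounded by an integral comparison, $\|F\|^2 = \sum_{\tau}e^{-\pi\tau^2/(2k^2\log N)} \le 1 + k\sqrt{2\log N} \le N/k$, which again yields $\widetilde{\sigma}^2\le\sigma^2/k$ since $k\ll N$.

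The linearity and the variance bookkeeping are routine; the only point requiring care is the constant. The factor $1/k$ is exactly the fraction of the $N$ frequencies that the filter passes, so the conceptually clean route to $\widetilde{\sigma}^2\le\sigma^2/k$ is through Parseval together with the frequency-localization of $\mathcal{F}F$ from \cref{thm:gaussian,thm:filters}. Correspondingly, the subtle step is fixing the normalization of the convolution so that the Convolution Theorem $\mathcal{F}(F*S)=\mathcal{F}F\cdot\mathcal{F}S$ holds (the $1/\sqrt{N}$ factor above); with any other normalization the variance, and hence the claimed bound, would come out off by a power of $N$.
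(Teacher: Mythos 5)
Your proof is correct, and its key quantitative inputs are the same as the paper's: Parseval's identity together with the frequency localization of the Gaussian filter ($|\mathcal{F}F|\le 1$ on an essential support of length $N/k$). The organization, however, differs, and yours is the tighter of the two. The paper stays entirely on the Fourier side: it writes $\mathcal{F}\widetilde{\nu}=\mathcal{F}\nu\cdot\mathcal{F}F$, bounds $\mathbb{E}\|\mathcal{F}\widetilde{\nu}\|^2\le \sigma^2/k$ by restricting to the essential support of $\mathcal{F}F$, and then passes back via $\mathbb{E}\|\widetilde{\nu}\|^2\ge\widetilde{\sigma}^2$ with the remark that ``covariances are positive, so ignored'' --- a remark that is vacuous, since $\mathbb{E}\|\widetilde{\nu}\|^2=\sum_\tau \mathrm{Var}(\widetilde{\nu}_\tau)$ has no cross terms and the step is an equality. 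You instead work sample-by-sample: each $\widetilde{\nu}[\tau]$ is a fixed linear combination of independent Gaussians, which actually proves the Gaussianity of the marginals that the lemma asserts but the paper never verifies; shift-invariance gives a $\tau$-independent variance and the exact identity $\widetilde{\sigma}^2=\sigma^2\|F\|^2/N$, after which Parseval and the passband size recover precisely the paper's bound. Your second, purely time-side estimate $\|F\|^2\le 1+k\sqrt{2\log N}$ is a genuinely more elementary alternative that bypasses the filter formalism entirely and shows the claimed bound $\sigma^2/k$ holds with a large margin. Your closing caveat is also well taken: the constant in the lemma is normalization-dependent, and both your argument and the paper's implicitly require the unitary DFT paired with the $1/\sqrt{N}$-normalized convolution, so that $\mathcal{F}(F*S)=\mathcal{F}F\cdot\mathcal{F}S$ and $|\mathcal{F}F|\le1$ can hold simultaneously --- a convention the paper never pins down explicitly.
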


\section{General Case}
\begin{restatable}{lemma}{lemisolate}\label{lem:isolate}
Given 
$$\Set{\omega_1, \dots, \omega_k}\subset \Z_N$$
for $\sigma$ chosen uniformly at random from $ \Z_N^*$,  $\sigma \omega_1$ is isolated with probability at most $2Ck/N$.
\end{restatable}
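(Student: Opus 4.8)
The plan is to set up the relevant event explicitly, reduce it to a per-pair collision estimate, and then close with a union bound. Fix the target frequency $\omega_1$. After scaling by $\sigma \in \Z_N^*$, the frequency $\sigma\omega_1$ \emph{fails} to be isolated precisely when some other scaled frequency $\sigma\omega_j$, $j \ge 2$, lands in the same bin (the interval of length $N/k$ containing $\sigma\omega_1$); allowing a constant slack at the bin boundary, this is contained in the event that $\sigma\omega_1 - \sigma\omega_j = \sigma(\omega_1 - \omega_j)$ lies in a fixed window $W \subset \Z_N$ about $0$ of constant length. The quantity to control is therefore $\Pr_\sigma\big[\exists\, j \ge 2 : \sigma(\omega_1 - \omega_j) \in W\big]$, and the first move is to bound it by $\sum_{j=2}^{k} \Pr_\sigma[\sigma d_j \in W]$ with $d_j := \omega_1 - \omega_j \ne 0$, a sum of at most $k-1$ terms all of the same type.

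The heart of the argument is the single-pair estimate: for a fixed nonzero $d \in \Z_N$ and $\sigma$ uniform on $\Z_N^*$, I would show $\Pr_\sigma[\sigma d \in W] \le C\,|W|/N$, with $C$ absorbing the factor $N/\varphi(N)$. When $\gcd(d,N)=1$ the map $\sigma \mapsto \sigma d$ is a bijection of $\Z_N^*$, so $\sigma d$ is uniform on $\Z_N^*$ and the probability is exactly $|W \cap \Z_N^*|/\varphi(N) \le |W|/\varphi(N)$; for $N$ a power of $2$ (the regime of the Bit-by-bit algorithm in \cref{app:1sparse}) one has $\varphi(N) = N/2$, which is exactly the source of the factor $2$ in the claimed bound $2Ck/N$. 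Summing the per-pair estimate over the $k-1$ values of $j$ and taking $|W| = O(1)$ then gives failure probability at most $2Ck/N$, so $\sigma\omega_1$ is isolated with the complementary probability.

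The step I expect to be the main obstacle is the per-pair estimate when $\gcd(d,N) = g > 1$, i.e.\ when $d$ is not invertible. There the image $\{\sigma d : \sigma \in \Z_N^*\}$ no longer fills $\Z_N^*$ but concentrates on multiples of $g$, and I must argue that this concentration only \emph{helps} --- that $\#\{\sigma \in \Z_N^* : \sigma d \in W\}$ is still $O(|W|/N)\cdot\varphi(N)$ \emph{uniformly} in $d$. The clean route is the Chinese Remainder Theorem: factor $N = \prod_p p^{e_p}$ and observe that $\sigma d \bmod p^{e_p}$ is uniform on units when $p \nmid d$ and is supported on an arithmetic progression of step $p^{\min(e_p,\, v_p(d))}$ (where $v_p$ denotes the $p$-adic valuation) when $p \mid d$; in every case the count of residues of $\sigma d$ meeting $W$ is bounded by $C|W|\varphi(N)/N$. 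For $N$ a power of $2$ this collapses to the elementary observation that, writing $d = 2^a u$ with $u$ odd, $\sigma d$ ranges uniformly over $2^a\cdot(\text{odd residues})$, whose spacing is $2^{a+1}$, so at most $\lceil |W|/2^{a+1}\rceil$ of them meet $W$. This bookkeeping is where I expect the delicacy to lie, and I would restrict to the power-of-two case consistent with \cref{app:1sparse} unless the general modulus is genuinely needed.
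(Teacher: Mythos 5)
Your proposal is correct and follows essentially the same route as the paper's own proof: contain the failure of isolation in the pairwise collision events $\sigma(\omega_1-\omega_j)\in[-C/2,C/2]$, bound each such event's probability by $2C/N$, and finish with a union bound over the $k-1$ other frequencies. In fact you go further than the paper, which asserts the per-pair bound $\Pr\p{\sigma(\omega_i-\omega_j)\in[-C/2,C/2]}\le 2C/N$ in a single line without justification; your unit-group bijection argument for invertible differences, and the $2$-adic valuation bookkeeping for non-invertible $d=2^au$ (where the window being centered at $0$ is what kills the case of spacing $2^{a+1}$ exceeding the window length), supply exactly the missing justification, including the origin of the factor $2$ as $N/\varphi(N)$ when $N$ is a power of two.
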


\begin{restatable}{corollary}{corisolate}\label{cor:isolate}
If $C\le N/8k$, with $O(\ln  k/\delta)$ choices of $\sigma$ uniformly at random from $ \Z_N^*$, for all $i=1, \dots ,k$, $\sigma \omega_i$ is isolated at least once with probability $1-\delta$.
\end{restatable}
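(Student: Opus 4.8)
The plan is to derive the corollary from \cref{lem:isolate} by a standard probability-amplification argument: boost the per-trial isolation probability using the hypothesis on $C$, repeat with independent scalings, and close with a union bound over the $k$ frequencies.

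First I would fix a single index $i$ and a single uniformly random $\sigma\in\Z_N^*$. By \cref{lem:isolate}, applied with $\omega_i$ in place of $\omega_1$ (legitimate since the frequencies enter symmetrically), the probability that $\sigma\omega_i$ \emph{fails} to be isolated is at most $2Ck/N$. Invoking the hypothesis $C\le N/8k$, this failure probability is at most $2\cdot(N/8k)\cdot k/N = 1/4$, so a single random scaling isolates $\sigma\omega_i$ with probability at least $3/4$.

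Next I would draw $t$ independent uniform scalings $\sigma_1,\dots,\sigma_t\in\Z_N^*$. By independence, the probability that $\sigma_s\omega_i$ fails to be isolated for \emph{every} $s=1,\dots,t$ is at most $(1/4)^t$. A union bound over the $k$ indices then shows that the probability that some $\omega_i$ is never isolated across the $t$ trials is at most $k\,(1/4)^t$. Choosing $t$ so that $k\,(1/4)^t\le\delta$, i.e. $t\ge\log_4(k/\delta)=O(\ln(k/\delta))$, yields the claimed bound on the number of scalings and guarantees that, with probability at least $1-\delta$, every $\omega_i$ is isolated at least once.

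The substance of the argument is already carried by \cref{lem:isolate}; the corollary itself is routine amplification, so I expect no serious obstacle. The one point demanding care is the \emph{direction} of the inequality: \cref{lem:isolate} must be read as bounding the probability of the non-isolated (collision) event, so that the hypothesis $C\le N/8k$ drives the per-trial failure probability strictly below $1$ --- here to $1/4$ --- which is precisely what makes repetition produce geometric decay. I would also note that the $t$ scalings can be taken as independent uniform draws within the same sampling budget, consistent with the pseudorandom-permutation setup, so that the $O(\ln(k/\delta))$ repetitions do not inflate the overall sample or arithmetic complexity.
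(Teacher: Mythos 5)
Your proposal is correct and follows essentially the same route as the paper: invoke \cref{lem:isolate} together with $C\le N/8k$ to bound the per-trial failure probability by $1/4$, amplify over $T=\log_4(k/\delta)$ independent scalings to get $(1/4)^T=\delta/k$, and finish with a union bound over the $k$ frequencies. Your remark about reading \cref{lem:isolate} as bounding the \emph{non}-isolation probability is also the reading the paper itself adopts (its statement is misworded), so there is nothing to correct.
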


\chapter{Background.}

\section{Fourier Transforms}\label{app:ft}

An element $S \in L^2(\Z_N)$ can be represented it in terms of its samples or function values as $S[\tau], \tau \in \Z_N$. This corresponds to representing $S$ in the orthonormal basis of $\delta$-functions, $\set{\delta_\tau}:{\tau\in\Z_N}$. The evaluations $S[\tau]$ may also be referred to as the coordinates of $S$ in \textit{time}. 

We can also express $S$ in the orthonormal basis of exponential functions, $\set{e_\omega}:{\omega\in \Z_N}$. Then, the evaluations $\< S, e_\omega\>$ may be referred to as the coordinates of $S$ in \textit{frequency}. The linear transformation which maps the coordinates of $S$ in time to its coordinates in frequency is known as the \textit{discrete fourier transform}, or DFT for short, and we will denote it by $\mathcal{F}$,
\begin{align*}
\mathcal{F}S[\omega] &= \< S, e_\omega\> & \omega \in \Z_N.
\end{align*}
In terms of the samples $S[\tau]$, $\mathcal{F}S$ can be expressed as,
\begin{align*}
\mathcal{F}S[\omega]&=\sum_{\tau\in \Z_N} S[\tau] \cdot \overline{e_{\omega}[\tau]} & \omega \in \Z_N.
\end{align*}
\
If the coordinates of $S$ are given to us in time, we could naively compute the discrete fourier transform of $S$ in $O(N \cdot \text{supp}(S))$ operations, where $\text{supp}(S)$ denotes the size of the support of $S$ in time. However, with the \textit{Fast Fourier Transform} (FFT) algorithm we can compute $\mathcal{F}(S)$ in $O(N\log N)$ operations even if $S$ on all of $\Z_N$.

Analogously, the \textit{continuous fourier transform} of a function $s\in L^2(\R)$, is a complex valued function on the real line which we will denote\footnote{The same symbol $\mathcal{F}$ is used for both the continuous and discrete fourier transforms; which one is meant will be evident from the context.} by $\mathcal{F}s$. This function is given by,
\begin{align*}
    \mathcal{F}s(f) &= \int_\R s(t)e^{-2\pi i ft} \ dt & f\in\R
\end{align*}


\section{Lines in $\Z_N$.} \label{app:lines}

As discussed, a line in $\Z_N^2$ is defined similar to lines in the real plane $\R^2$ -- a line in $\R^2$ is the $\R$-linear span of some point in the plane, we use ``line" to refer to what may specifically called ``lines through the origin" in other contexts.  

However, consider the following example.
\begin{example}
Consider $N = 4$. The points $(1,2),(2,2)$ lie in the $Z_4^2$ plane. The $\Z_4$-span of $(1,2)$ is given by:
\begin{align*}
\set{(n,2n)}:{n \in \Z_4} & = \Set{(0,0),(1,2),(2,0), (3,2)}
\end{align*}
\begin{align*}
\set{(2n,2n)}:{n \in \Z_4} & = \Set{(0,0), (2,2)}
\end{align*}
The $Z_4$-span of certain points may consist of fewer than $N=4$ elements. 
\end{example}

For our purposes, we will use the term \textit{generic line} to refer to a set of $N$ points which is the $\Z_N$-span of a point. For details on more general sets that could be used see Sec \ref{app:lines}

\begin{definition}\label{def:genline} Generic line: A generic line $L$ is the $\Z_N$-span of a point $(a,b) \in \Z_N^2$, which consists of $N$ points. We say $L$ is generated by $(a,b)$. 
\end{definition}

\begin{lemma}\label{lem:lines} The $\Z_N$-span of a point $(a,b) \in \Z_N^2$ consists of $N/\gcd (a,b,N)$ points.
\end{lemma}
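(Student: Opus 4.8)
The plan is to realize the $\Z_N$-span as the image of a group homomorphism and then invoke the first isomorphism theorem, so that counting the span reduces to counting a kernel. Concretely, I would define the additive-group homomorphism $\phi\colon \Z_N \to \Z_N^2$ by $\phi(n) = (na,nb)$, with all arithmetic taken modulo $N$. A one-line check shows $\phi(n+m)=\phi(n)+\phi(m)$, and its image is exactly the $\Z_N$-span of $(a,b)$. Since $\phi$ is a homomorphism of finite abelian groups, the image has size $N/|\ker\phi|$, so the entire problem collapses to establishing that $|\ker\phi| = \gcd(a,b,N)$.

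Next I would compute the kernel explicitly. By definition $n\in\ker\phi$ precisely when $N \mid na$ and $N \mid nb$ hold simultaneously. The crucial step is to fuse these two divisibility conditions into a single one: writing $g=\gcd(a,b)$ for the integer gcd of fixed representatives of $a$ and $b$, I would show that $N\mid na$ and $N\mid nb$ together are equivalent to $N\mid ng$. The forward direction uses $\gcd(na,nb)=n\,g$ for $n\ge 0$, so that $N$ divides the gcd of $na$ and $nb$; the reverse direction follows because $g\mid a$ and $g\mid b$ give $na=(ng)a'$ and $nb=(ng)b'$ for integers $a',b'$, whence $N\mid ng$ propagates to both products. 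I would then count the solutions of the single congruence $N\mid ng$ for $n$ ranging over $\{0,1,\dots,N-1\}$: this is the standard count yielding $\gcd(N,g)=\gcd(N,\gcd(a,b))=\gcd(a,b,N)$ solutions. Combining this with the size relation gives $|\ker\phi|=\gcd(a,b,N)$ and hence the image has $N/\gcd(a,b,N)$ elements.

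The only genuine subtlety — and the step I expect to be the main obstacle — is the reduction of the two simultaneous conditions $N\mid na$ and $N\mid nb$ to the single condition $N\mid n\gcd(a,b)$; once that equivalence is secured, both the kernel count and the final formula are routine. A minor point to handle carefully is that the gcd is being taken among residues modulo $N$, so I would fix integer representatives of $a$ and $b$ at the outset and note that all the divisibility statements, as well as the value $\gcd(a,b,N)$ (which already contains $N$ as an argument), are independent of that choice.
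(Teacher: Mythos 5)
Your proof is correct and takes essentially the same approach as the paper: the paper's (very terse) proof also treats the span as a cyclic group and simply asserts it is isomorphic to $\Z_{N/\gcd(a,b,N)}$, which is precisely what your first-isomorphism-theorem argument establishes. Your kernel computation, including the reduction of the two divisibility conditions to $N \mid n\gcd(a,b)$, supplies exactly the details the paper hides behind ``one can verify.''
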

\begin{proof} The $\Z_N$-span of $(a,b)$ is a group under addition. For $N' = N/\gcd (a,b,N)$, one can verify that the span is isomorphic to $\Z_N'$. 
\end{proof}
Therefore, we have the following corollary.
\begin{corollary}
A generic line $L$ is the $\Z_N$-span of a point $(a,b) \in \Z_N^2$, where $\gcd(a,b,N) =1$.
\end{corollary}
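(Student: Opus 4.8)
The plan is to read the corollary off directly from Lemma \ref{lem:lines} together with Definition \ref{def:genline}. The key observation is that the definition of a generic line stipulates that the span consist of exactly $N$ points, whereas the lemma computes the size of an \emph{arbitrary} span of $(a,b)$ as $N/\gcd(a,b,N)$. Matching these two counts against each other is what pins down the arithmetic condition on the generator.

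First I would invoke Definition \ref{def:genline}: by definition a generic line $L$ is the $\Z_N$-span of some point $(a,b)\in\Z_N^2$ whose span has exactly $N$ elements. Next I would apply Lemma \ref{lem:lines} to this same generator $(a,b)$, which tells us the span has precisely $N/\gcd(a,b,N)$ points. Equating the two expressions for the cardinality yields $N/\gcd(a,b,N)=N$, and since $N>0$ this forces $\gcd(a,b,N)=1$. For the converse direction, I would run Lemma \ref{lem:lines} the other way: when $\gcd(a,b,N)=1$ the span has $N/1=N$ points, which is exactly the condition in Definition \ref{def:genline} for $L$ to be generic.

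There is essentially no obstacle here, since all of the content is already packaged in Lemma \ref{lem:lines}; the corollary is simply the specialization of that count to the extremal case in which the span is as large as it can be. The only thing to keep track of is the logical direction — the statement is really a characterization of which generators $(a,b)$ produce generic lines — so I would make sure both implications (generic line $\Rightarrow \gcd(a,b,N)=1$, and $\gcd(a,b,N)=1 \Rightarrow$ generic line) are recorded, even though each follows in one line from the cardinality formula.
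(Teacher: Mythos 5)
Your proposal is correct and matches the paper's reasoning exactly: the paper states this corollary as an immediate consequence of Lemma \ref{lem:lines} (with no separate written proof), precisely because equating the cardinality $N/\gcd(a,b,N)$ from the lemma with the $N$ points required by Definition \ref{def:genline} forces $\gcd(a,b,N)=1$, and conversely. Your care in recording both implications is a slight elaboration, but the substance is the same one-line argument the paper intends.
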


From now forth, for brevity we will use the terms \textit{generic line} and \textit{line} interchangeably. 

In the real plane, distinct lines always intersect at exactly one point, namely the origin. Consider the following example in $\Z_4$.

\begin{example} Let $L$ be the line generated by $(1,2)$ and $M$ the line generated by $(1,0)$ in $\Z_4^2$. A point lies in the intersection of $L$ and $M$ is there is some $\ell,\mu \in \Z_4$ such that:
\begin{align*}
& (\ell, 2\ell) = (\mu,0) \\
\text{i.e. }& \ell = \mu \text{ and  } 2\ell = 0  
\end{align*}
So $L$ and $M$ intersect at two points $(0,0)$ and $(2,0)$.
\end{example}

In order to identify the time-frequency shift, we need the lines to have a unique point of intersection. We have the following definition.

\begin{definition} A pair of lines $L$ and $M$ in $Z_N^2$ are said to be transversal if they intersect only at the origin.
\end{definition}

We use the following lemma and corollary.

\begin{lemma}\label{lem:linesint} If two lines $L$ and $M$ in $\Z_N^2$ are generated by $(a_1,b_1)$ and $(a_2,b_2)$ respectively and
\begin{align*}
\gcd \p{\begin{vmatrix}
a_1 & a_2 \\
b_1 & b_2
\end{vmatrix},N}
\end{align*}
then $L$ and $M$ are transversal.
\end{lemma}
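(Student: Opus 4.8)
The plan is to reduce the transversality question to the invertibility of the determinant $D = a_1 b_2 - a_2 b_1$ modulo $N$, which — reading the hypothesis as $\gcd(D,N)=1$, the obviously intended condition — is exactly what makes $D$ a unit in $\Z_N$. First I would take an arbitrary point $(x,y)$ in the intersection $L \cap M$. By \cref{def:genline}, membership in $L$ means $(x,y) = s\cdot(a_1,b_1)$ for some $s \in \Z_N$, and membership in $M$ means $(x,y) = t\cdot(a_2,b_2)$ for some $t \in \Z_N$. Equating the two coordinate pairs yields the congruences $s a_1 = t a_2$ and $s b_1 = t b_2$ in $\Z_N$.

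The key step is to eliminate $t$ by a Cramer's-rule-style cross multiplication: multiplying the first congruence by $b_2$ and the second by $a_2$ and subtracting gives
\[
s\,(a_1 b_2 - a_2 b_1) = 0 \pmod N,
\]
that is, $s\cdot D = 0$. Since $\gcd(D,N)=1$, the element $D$ is invertible in $\Z_N$, so multiplying by $D^{-1}$ forces $s = 0$. Consequently $(x,y) = s\cdot(a_1,b_1) = (0,0)$.

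Finally, the origin does lie in both lines (it is the $0$-multiple of either generator), so the intersection is precisely $\Set{(0,0)}$, which is exactly the definition of $L$ and $M$ being transversal, completing the argument.

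I do not anticipate a genuine obstacle, since the whole thing is a one-line linear elimination; the only points requiring care are bookkeeping. The displayed hypothesis is missing the stipulation that the listed $\gcd$ equals $1$, and I would state this explicitly, since that is precisely what places $D$ in $\Z_N^\times$. I would also note for symmetry that the same elimination carried out with multipliers $b_1$ and $a_1$ gives $t\cdot D = 0$ and hence $t=0$, confirming that neither generator reaches a shared nonzero point.
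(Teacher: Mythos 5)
Your proof is correct and takes essentially the same route as the paper's: both reduce a point of $L\cap M$ to a solution of the homogeneous system $sa_1 = ta_2$, $sb_1 = tb_2$ over $\Z_N$ and invoke invertibility of the determinant modulo $N$ to force the trivial solution, with your version simply carrying out the Cramer-style elimination explicitly where the paper leaves it implicit. You are also right that the stated hypothesis is missing ``$=1$'' after the $\gcd$; that is a typo in the statement, and the intended condition $\gcd\p{a_1b_2-a_2b_1,N}=1$ is exactly what both arguments use.
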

\begin{proof} 
For each point of intersection of $L$ and $M$ there is some $l,m$ in $\Z_N$ such that 
\begin{align*}
    la_1 - ma_2 &= 0 \\
    lb_1 - mb_2 &= 0
\end{align*}
This has a unique solution if $\begin{vmatrix}
a_1 & a_2 \\
b_1 & b_2
\end{vmatrix}$ is invertible.
\end{proof}

\begin{restatable}{lemma}{genline}\label{lem:genline}
For $k$ points $\Set{p_1, \dots , p_k}$ in $\Z_N^2$, and three distinct pairwise transversal lines $L,M,K$ with slopes chosen uniformly at random, the triple intersections of the shifted lines 
\begin{align*}
    p_i+L, \  p_i+M, \ p_i+K && i= 1, \dots, k
\end{align*}
uniquely identify the points with probability at least $1 - \frac{k^3}{N-2}$.
\end{restatable}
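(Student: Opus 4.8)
The plan is to show that, with high probability over the random slopes, the only points lying simultaneously on one line from each of the three families $\{p_i+L\}_i$, $\{p_i+M\}_i$, $\{p_i+K\}_i$ are the true points $p_1,\dots,p_k$. One direction is deterministic and free: since $L,M,K$ each pass through the origin, the point $p_i$ lies on $p_i+L$, $p_i+M$ and $p_i+K$ simultaneously, so every $p_i$ is a triple intersection and is always recovered. It therefore remains to bound the probability that some \emph{spurious} triple intersection --- a triple-incidence point not equal to any $p_i$ --- occurs.

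Next I would reduce a spurious triple intersection to a combinatorial condition on a triple of indices. Suppose $q$ lies on $p_a+L$, $p_b+M$ and $p_c+K$. Because $L$ and $M$ are transversal generic lines, $L\cap M=\{0\}$ and $L\oplus M=\Z_N^2$ (\cref{def:genline,lem:linesint}), so the point $q_{ab}:=(p_a+L)\cap(p_b+M)$ is uniquely determined by $a,b$ and the slopes of $L,M$; thus $q=q_{ab}$. If $a=b$ then $q_{aa}=p_a$, a true point, so no spurious intersection arises. If $c=a$ then $q-p_c=q-p_a\in L$, while $q-p_c\in K$; transversality of $L,K$ forces $q-p_c\in L\cap K=\{0\}$, i.e. $q=p_a$ is again a true point, and the case $c=b$ is symmetric. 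Hence a spurious triple intersection can only arise from a triple of \emph{distinct} indices $a,b,c$ for which the nonzero vector
\begin{align*}
    w_{abc} &:= q_{ab}-p_c
\end{align*}
lies on $K$.

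I would then bound each such event and union over them. Fix the slopes of $L$ and $M$; this fixes $q_{ab}$, hence the nonzero vector $w_{abc}$. A nonzero vector lies on at most one line through the origin, so when $K$ is drawn uniformly from the (at least $N-2$) admissible slopes --- those distinct from the slopes of $L$ and $M$ --- at most one such slope yields $w_{abc}\in K$. Averaging over $L,M$, the event $\{w_{abc}\in K\}$ has probability at most $1/(N-2)$ for each fixed distinct triple $(a,b,c)$. Taking a union bound over the at most $k^3$ ordered triples of distinct indices gives
\begin{align*}
    \Pr(\text{some spurious triple intersection}) &\le \frac{k^3}{N-2},
\end{align*}
which is precisely the claimed failure probability.

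The main obstacle, and the step needing the most care, is the index reduction in the second paragraph: one must verify that \emph{every} spurious intersection is captured by a distinct triple (so the union bound ranges over the correct events) and that the associated vector $w_{abc}$ is genuinely nonzero, since otherwise the per-event probability would fail to be $O(1/N)$. This is exactly where transversality is invoked repeatedly --- first to make $q_{ab}$ well defined, and then to rule out each of the coincidences $a=b$, $c=a$, $c=b$, which all collapse back to true points. A secondary bookkeeping point is to confirm that the admissible-slope count produces the denominator $N-2$; I would pin this down by parametrizing lines by slope and excluding the two slopes already used by $L$ and $M$.
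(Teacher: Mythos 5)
Your proposal is correct and takes essentially the same route as the paper's (much terser) proof: the paper counts the at most $2\cdot\frac{k(k-1)}{2}\cdot 2 = k(k-1)$ spurious intersections of the shifted $L$- and $M$-lines, then union-bounds over the at most $k^2(k-1)\le k^3$ slope choices for $K$ whose shifts pass through one of them, with the same denominator $N-2$ for admissible slopes. Your write-up is more careful where the paper is silent — the deterministic recovery of the true points, the transversality argument reducing to distinct index triples, and the nonzeroness of $w_{abc}$ — though note that both arguments rest on the same implicit assumption that a nonzero vector of $\Z_N^2$ lies on at most one line through the origin, which holds for prime $N$ but can fail for composite $N$ (e.g., $(2,0)$ lies on the lines generated by $(1,0)$ and $(1,2)$ in $\Z_4^2$).
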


\chapter{Proofs.}\label{appD}

\section{Justifications behind the Model.}
\subsection{Time scale to frequency shift.}
\narrowband*
\begin{proof} Note that by the definition, 
$$s(t) = e^{2\pi if_ct} s_0(t)$$
where, $f_c$ is the carrier frequency and $s_0 \in L^2(\R)$ (often known as the \textit{baseband signal}) has fourier support in the interval $[-W/2,W/2]$.

One can check that it suffices to show, 
$$|s_0(a_0t) - s_0(t)| \le O\p{f_0t\cdot \frac{W}{f_c}}$$

By Bernstein's Inequality (Theorem 2.3.17, \cite{pinsky2008introduction})
\begin{align}\label{eq:bi}
|s'_0(t)| \le 2\pi W \sup_{x\in \R}|s_0(t)| 
\end{align}

By the Mean value theorem and \cref{eq:bi},
\begin{align*}
|s_0(a_0t) - s_0(t)| &\le const \cdot |(a_0-1)t|W \\
& = const \cdot |f_c(a_0-1)t|\cdot \frac{W}{f_c} \\
& = const \cdot |f_0t|\cdot \frac{W}{f_c} 
\end{align*}

\end{proof}

\section{Moving between Digital and Analog settings.}
\propadda*
\begin{proof} \ \\
Let $S\in L^2(\Z_N)$. Then, by formula \ref{eq:daformula},
\begin{align*}
    \{h_{t_0,f_0}\circ \text{D-to-A}(S)\}(t) &= e^{2\pi if_0t}\cdot e^{2\pi if_c(t-t_0)} \cdot \p{\sum_{\tau\in \Z_N} S[\tau]\cdot \text{sinc}_W\p{t-\frac{\tau}{W}-t_0}}
\end{align*}
Consider the natural map, 
\begin{align*}
    &\Lambda_{T,W} = \frac{1}{W}\Z\times \frac{1}{T} \Z \ \to \ \Z\times \Z 
\end{align*}
If $(\overline{\tau}_0,\overline{\omega}_0) \in \Z^2$ denotes the image of $(t_0,f_0)$ under this map then,
\begin{align*}
    \{h_{t_0,f_0}\circ \text{D-to-A}(S)\}(t) &= e^{2\pi i\p{\frac{\overline{\omega}_0}{T}}t}\cdot e^{2\pi i f_c\p{t-\frac{\overline{\tau}_0}{W}}} \cdot \p{\sum_{\tau\in \Z_N} S[\tau]\cdot \text{sinc}_W\p{t-\frac{\tau+\overline{\tau}_0}{W}}}    
\end{align*}
If we denote $R = \text{A-to-D} \circ h_{t_0,f_0}\circ \text{D-to-A}(S)$, by formula \ref{eq:adformula},
\begin{align}\label{eq:1311}
   R[\tau] &= \sum_{m\in \Z} \{h_{t_0,f_0}\circ \text{D-to-A}(S)\} \p{\frac{\tau}{W}+mT}. 
\end{align}
Now consider another natural map, 
\begin{align*}
    \Z\times \Z \ \to \ \Z_N\times \Z_N
\end{align*}
Let $(\tau_0,\omega_0) \in \Z_N^2$ denote the image of $(\overline{\tau}_0,\overline{\omega}_0)$ under the above map. Then,
\begin{align}\label{eq:1312}
    \sum_{m\in \Z} \{h_{t_0,f_0}\circ \text{D-to-A}(S)\} \p{\frac{\tau}{W}+mT} &= e^{2\pi i \frac{\omega_0\tau}{N}} S[\tau - \tau_0]
\end{align}
where equality follows from the fact that:
\begin{enumerate}
    \item $N=TW$, and so $e^{2\pi i\p{\frac{\overline{\omega}_0\tau}{TW}}} = e^{2\pi i\frac{\omega_0\tau}{N}}$,
    \item $f_c$ is a multiple of $W$, and so $ e^{2\pi i f_c\p{\frac{\tau}{W}-\frac{\overline{\tau}_0}{W}}} = 1$, 
    \item and lastly,
    \begin{align*}
        \sum_{\tau'\in \Z_N} S[\tau]\cdot \text{sinc}_W\p{\frac{\tau}{W}-\frac{\tau'+\overline{\tau}_0}{W}}&= S[\tau - \tau_0]
    \end{align*}
    since, 
   \begin{align*}
    \text{sinc}_W\p{\frac{\tau}{W}} = \begin{cases}
   1 &\text{ if } \tau =0 \\
   0 &\text{ otherwise}
   \end{cases}
   \end{align*}
\end{enumerate}
Following along the equalities in \cref{eq:1311} and \cref{eq:1312}, we see that for any $S\in L^2(\Z_N)$,
\begin{align*}
    \{\text{A-to-D} \circ h_{t_0,f_0}\circ \text{D-to-A}(S)\}[\tau] & = e^{2\pi i \frac{\omega_0\tau}{N}} S[\tau - \tau_0] \\
    &= \{H_{\tau_0,\omega_0}S\}[\tau].
\end{align*}
\end{proof}

\section{Underlying Algebraic Structure.}
\subsection{Construction of Chirps.}

\thmchirps*
\begin{proof}\ \\
    Fix a line $L$. Let $H_L=H_{\tau,\omega}$ for some $(\tau,\omega)\in L$. Fix another line $M\ne L$ and a generator $(\tau',\omega')\in M$; let $H_M = H_{\tau',\omega'}$. \\
    We use the fact that if $v$ is an eigenvector of $H_L$, then $H_Mv$ is also an eigenvector of $H_L$ and, moreover, $H_Mv\ne v$. This statement is verified below:\\
    It follows from the Heisenberg commutation relation \ref{eq:commutation} that,
    \begin{align*}
        H_L(H_Mv) &= (H_L\circ H_M)v\\
                  &= e^{\frac{2\pi i}{N}(\omega\tau'-\tau\omega')}(H_M\circ H_L)v \\
                  &= e^{\frac{2\pi i}{N}(\omega\tau'-\tau\omega')} \cdot H_M ( H_Lv) \\
                  &=  e^{\frac{2\pi i}{N}(\omega\tau'-\tau\omega')} \cdot \lambda \cdot H_M v.
    \end{align*}
    Therefore, $H_Mv$ is also an eigenvector of $H_L$. Denoting $e^{\frac{2\pi i}{N}(\omega\tau'-\tau\omega')}$ by $\eta$, since $M\ne L$, we have that $\eta\ne 1$ and so these eigenvalues -- and the corresponding eigenvectors -- are distinct.
    \begin{enumerate}
        \item 
        We now show that the operators $\set{H_{\tau,\omega}}:{(\tau,\omega) \in L}$, each have $N$ distinct eigenvalues:\\
        Every operator on a finite, positive dimensional complex vector space has an eigenvector, so let $v$ denote an eigenvector of $H_L$ of eigenvalue $\lambda$.\\
        By a similar argument as the above, we can show that $v, H_Mv, H_M^2v, \dots, H_M^{N-1}v$ are eigenvectors with distinct eigenvalues $\lambda\eta^2, \cdots, \lambda\eta^{N-1}$, respectively. \\
        Next, we have that by the commutation relation \ref{eq:commutation}, the operators $\set{H_{\tau,\omega}}:{(\tau,\omega) \in L}$, commute (since the expression $\omega\tau'-\tau\omega'$ vanishes for $(\tau,\omega)$, $(\tau',\omega)$ on the same line). \\
        It then follows that they share an orthonormal eigenbasis.
        \item Fix $S_L \in \mathcal{B}_L$. Let $(\tau,\omega)\in L$. \\ Then, $H_{\tau,\omega}S_L =\lambda S_L$ for some $\lambda \in \C$. Since $(H_{\tau,\omega})^N$ is the identity operator, we must have $|\lambda|=1$. Therefore,
        \begin{align*}
            |\<S_L,H_{\tau,\omega}S_L\>| & = |\lambda||\<S_L,S_L\>|= 1
        \end{align*}
        For $(\tau,\omega)\notin L$, then by a similar argument as above, $H_L(H_{\tau,\omega}S_L) = \lambda\eta H_{\tau,\omega}S_L$, for some $\eta \ne 1$. So it follows that,
        \begin{align*}
            |\<S_L,H_{\tau,\omega}S_L\>| = 0.
        \end{align*}
    \end{enumerate}
\end{proof}
The chirp signals were realized as elements of eigenbases corresponding to certain collections of commuting operators. These operators can be seen to come from certain commuting subgroups of the Heisenberg-Weyl group $G_{HW}$ \cite{howe2005nice, howard2006finite,fish2012delay,fish2013delay,fish2013incidence,fish2014performance,fish2014}, in particular, via a \textit{group representation} of $G_{HW}$. 

While we do not go into the details of this representation, we will adopt some notation that is ``inspired" by it, in order to conveniently produce formulas for chirps.\\
Note that $N$ is assumed to be odd throughout the remainder of this section, and $2^{-1}$ denotes the multiplicative inverse of ``$2$" modulo $N$, i.e. $2^{-1} = (N+1)/2$.
\begin{definition}[Heisenberg Operators] For $(\tau,\omega)\in \Z_N^2$, we define the Heisenberg operator $\pi(\tau,\omega)$ as,
\begin{align*}
    \pi(\tau,\omega) &= e^{\frac{2\pi i}{N} (-2^{-1}\tau \omega)} H_{\tau,\omega}.
\end{align*}
\end{definition}
We would also like to introduce \textit{chirp operators} on $\Z_N$ and since they are members of a larger collection, namely the \textit{Weil operators} on $L^2(\Z_N)$, we first define those. Moreover, in describing Weil operators, we will denote the \textit{special linear group} of order $2$ over $\Z_N$ as $SL_2(\Z_N)$,
\begin{align*}
    SL_2(\Z_N)= \set{\renewcommand\arraystretch{.5}\begin{pmatrix}a & b \\ c & d\end{pmatrix}}:{a,b,c,d \in \Z_N \text{ and } ad-bc = 1}
\end{align*}
\begin{definition}[Weil Operators]\label{def:weil} These are the unique collection of operators \\ $\set{\rho(g)}:{g\in SL_2(\Z_N)}$ such that,
\begin{enumerate}
    \item $\rho(gh) = \rho(g) \circ \rho(h)$
    \item $\rho(g)\circ \pi(\tau,\omega)\circ \rho(g)^{-1} = \pi(g(\tau,\omega))$
\end{enumerate}
\end{definition}
A justification for the existence and uniqueness of such operators can be found in Section II.A of \cite{fish2014}. \\
We can now define the following.
\begin{definition}[Chirp Operators] For $a \in \Z_N$, we define the chirp operator $\rho_a$ as,
\begin{align*}
    \rho_a &= \rho\renewcommand\arraystretch{.5}\begin{pmatrix}1 & 0 \\ a & 1\end{pmatrix}
\end{align*}
\end{definition}
One can check \cite{fish2014} that, by definition,
\begin{align}\label{eq:chirpopformula}
    \p{\rho_a S} [\tau] = e^{\frac{2\pi i}{N}(2^{-1}a\tau^2)} S[\tau]
\end{align}

\thmchirpformula*
\begin{proof} \ \\
    When we think of elements in $L^2(\Z_N)$, the first ones that we might write down maybe $\delta$-functions, namely,
    \begin{align*}
        \delta_\tau [\tau'] 
        =\begin{cases}
            1 & \text{ if }\tau = \tau' \\
            0 & \text{ otherwise.}
        \end{cases}
    \end{align*}
    It's a quick check that $\delta_\tau$ are eigenvectors of the frequency shift operators, $\set{H_{0,\omega}}:{\omega\in \Z_N}$, with eigenvalue $e^{\frac{2\pi i}{N} \omega\tau}$. \\
    Perhaps, after a little bit more thought, we might recall the discrete fourier transform  and write down the complex exponentials,
    \begin{align*}
        e_{\omega}[\tau'] = e^{\frac{2\pi i}{N}\omega \tau'}
    \end{align*}
    It is then a quick check that $e_\omega$ are eigenvectors of the time shift operators, $\set{H_{\tau,0}}:{\tau\in \Z_N}$, with eigenvalue $e^{-\frac{2\pi i}{N} \omega\tau}$. \\
    For $a\in \Z_N$, by \cref{def:weil}, \begin{align}\label{eq:weil}
        \pi\p{\tau,a\tau} &= \rho_a\circ  \pi\p{\tau,0}\circ \rho_a^{-1}.
    \end{align}
    Using \cref{eq:weil} and the fact that $e_\omega$ is an eigenvector of $H_{\tau,0}= \pi(\tau,0)$, we have,
    \begin{align*}
        \pi\p{\tau,a\tau}\p{\rho_a e_\omega} & = \rho_a\circ  \pi\p{\tau,0}\circ \rho_a^{-1}\p{\rho_a e_\omega} \\
        & = \rho_a\p{  \pi\p{\tau,0}e_\omega} \\
        &= e^{-\frac{2\pi i}{N} \omega\tau}\cdot \rho_a e_\omega
    \end{align*}
    It follows that $\rho_ae_\omega$ is an eigenvector of $H_{\tau,a\tau}$ (in fact, with eigenvalue $e^{\frac{2\pi i}{N}(2^{-1}a\tau^2-\omega\tau)}$).
\end{proof}

\lemcross*
\begin{proof} \ \\
    We first show this is true for $L=\mathcal{W}$ and $M=\mathcal{T}$. Then,
    $S_L = \delta_\tau$ for some $\tau \in\Z_N$ and $S_M = e_\omega$ for some $\omega\in \Z_N$ and,
    \begin{align}\label{eq:cross}
        |\<\delta_\tau, e_\omega\>| = 1/\sqrt{N}
    \end{align}
    Moreover, since \cref{eq:cross} is true for any $\tau$ and $\omega$ in $\Z_N$ we have,
    \begin{align*}
        \mathcal{A}(\delta_\tau,e_\omega)[\tau', \omega'] & = |\<\delta_\tau,H_{\tau',\omega'}e_\omega\>|\\
        &= |\<\delta_\tau, e_{\omega+\omega'}\>|\\
        &= 1/\sqrt{N}.
    \end{align*}
    A similar argument will show that $|\mathcal{A}(\delta_\tau,S_M)| = |\mathcal{A}(S_M,\delta_\tau)|= 1/\sqrt{N}$, for any $\tau \in \Z_N$ and $S_M \in \mathcal{B}_M$, $M\ne \mathcal{W}$.
    \\
    Now, more generally, let $L = (\tau,a\tau)$, for $a\in \Z_N$, let $g \in SL_2(\Z_N)$ be an element that fixes $M$ and maps,
    \begin{align*}
        g: (1,a)\mapsto (0,1).
    \end{align*}
    The Weil operator $\rho_g$ then maps $S_L \mapsto \delta_\tau$ for some $\tau \in \Z_N$. In other words, $\rho_g S_L\in  \mathcal{B}_{\mathcal{W}}$ as demonstrated below.
    \begin{align*}
        H_{0,\omega}(\rho_g S_L) & =  \rho_g H_{(g^{-1}(0,\omega))}S_L &\text{(definition of Weil operator)}\\
                               & =   \rho_g H_{(\tau,a\tau)}S_L &(\text{for some }\tau\in \Z_N) \\
                               &=  \lambda \cdot \rho_gS_L & (S_L\in \mathcal{B}_L).
    \end{align*}    
    Since the Weil operators are unitary we have,
    \begin{align*}
        |\mathcal{A}(S_L,S_M)| 
        & = |\mathcal{A}(\rho_gS_L,\rho_gS_M)| \\
        & = |\mathcal{A}(\delta_\tau, S_M)|\\
        & = 1/\sqrt{N}.
    \end{align*}
\end{proof}

\subsection{Reduction to SFFT.}
\thmredtosfft*
\begin{proof} \ \\
As given in \ref{eq:chirp},
\begin{align*}
    S^b_L[\tau] &= \frac{1}{\sqrt{N}} \ e^{\frac{2\pi i}{N}\p{2^{-1}a_1\tau^2+b\tau}}
\end{align*}
One can check that,
\begin{align*}
    H_{\tau, a_2\tau}S^b_L
    & = e^{\frac{2\pi i}{N}\p{2^{-1}a_1\tau^2-b\tau}} \cdot \p{S_L \cdot e_{(a_2-a_1)\tau+b}}
\end{align*}
It then immediately follows that,
\begin{align*}
    H_{\tau, a_2\tau+\omega}S^b_L
    & = e^{\frac{2\pi i}{N}\p{2^{-1}a_1\tau^2-b\tau}} \cdot \p{S_L \cdot e_{(a_2-a_1)\tau+b+\omega}}
\end{align*}
We then have,
\begin{align*}
    |\mathcal{A}(S,S^b_{L})[(\tau,a_2\tau+\omega)]| &= |\<S, H_{\tau, a_2\tau+\omega}S^b_L\>|\\
    &= \underbrace{|e^{\frac{2\pi i}{N}\p{2^{-1}a_1\tau^2-b\tau}}|}_{=1}\cdot |\<S,S_L \cdot e_{(a_2-a_1)\tau+b+\omega}\>|\\
    & = |\<S\cdot \overline{S_L} \ , \  e_{(a_2-a_1)\tau+b+\omega}\>|\\
    & = |\mathcal{F}(S\cdot \overline{S_L})[(a_2-a_1)\tau+b+\omega]| &\text{ (by definition)}
\end{align*}
It then follows that,    
\begin{align*}
    |\mathcal{A}(S^b_{L},S)[(\tau,a_2\tau +\omega)]| &= |\<S^b_L, H_{\tau, a_2\tau+\omega}S\>| \\ 
    & = |\<H_{(-\tau, -a_2\tau-\omega)}S^b_L, S\>| & (H_{\tau,\omega}\text{ is unitary.})\\
    & = |\<S,H_{(-\tau, -a_2\tau-\omega)}S^b_L\>| \\
    & = |\mathcal{A}(S,S^b_{L})[(-\tau,-a_2\tau-\omega)]| \\ 
    & = |\mathcal{F}(\overline{S_L}\cdot S )[(a_1-a_2)\tau+b-\omega]|.
\end{align*}

\end{proof}

\subsection{Discrete Filter Functions.}
The proof of \cref{thm:gaussian} is based on the well-known Poisson summation formula, so we would like to present a statement for the same. However, in order to do so we will first introduce some notation. \\
For any positive $\lambda\in\R$, we denote the space of $\lambda$-periodic functions on the real line as $L^2(\R/\lambda\Z)$. We can then define the following operators:
\begin{definition}[Averaging and Evaluation operators] \ 
\begin{enumerate}
    \item For any positive $\lambda\in\R$, we define an averaging operator,
    \begin{align*}
        &\text{Av}_\lambda : L^2(\R) \to L^2(\R/\lambda\Z) \\
        &\p{\text{Av}_{\lambda}s} (t) = \sum_{m\in \Z} s(t+m\lambda).
    \end{align*}
    Analogously, we have an averaging operator, $\text{Av}_{N}:L^2(\Z) \to L^2(\Z_N)$. 
    \item For any positive $\lambda\in\R$, we define an evaluation operator,
    \begin{align*}
    &\text{Ev}_{\lambda} : L^2(\R) \to L^2(\Z) \\
    &\p{\text{Ev}_{\lambda}s} [n] = s\p{\lambda n}
    \end{align*}
    We can restrict the above evaluation operator to $L^2(\R/\lambda\Z) \subset L^2(\R)$ and, in fact, we have,
    \begin{align*}
        &\restriction{\text{Ev}_{\lambda/N}}{L^2(\R/\lambda\Z)} : L^2(\R/\lambda\Z) \to L^2(\Z/N) 
    \end{align*}
    For readability, we will simply denote $\restriction{\text{Ev}_{\lambda/N}}{L^2(\R/T\Z)}$ as $\text{Ev}_{\lambda/N}$.
\end{enumerate}
\end{definition}
\begin{theorem}[Poisson Summation]\label{thm:poisson} The following diagram commutes:
\begin{figure}[H]
    \centering
\begin{tikzpicture}[descr/.style={fill=white}]
\matrix(m)[matrix of math nodes, row sep=3em, column sep=2.8em,
text height=1.5ex, text depth=0.25ex]
{L^2(\mathbb{R})&&&& L^2(\mathbb{R}) \\ &\\ L^2(\R/\lambda\Z)&&&& L^2(\mathbb{\Z})\\&\\ L^2\mathbb{(Z}_N)&&&& L^2(\mathbb{Z}_N)\\};
7
\path[->,font=\scriptsize]
(m-1-1) edge node[right] {$\text{Av}_{\lambda\Z}$} (m-3-1)
(m-1-5) edge node[right] {$\text{Ev}_{1/\lambda}$} (m-3-5)
(m-1-1) edge node[above] {Continuous Fourier transform} (m-1-5)
(m-3-1) edge node[above] {Fourier Series} (m-3-5)
(m-5-1) edge node[above] {Discrete Fourier Transform} (m-5-5)
(m-3-1) edge node[right] {$\text{Ev}_{\lambda/N}$} (m-5-1)
(m-3-5) edge node[right] {$\text{Av}_{N\Z}$} (m-5-5);
\end{tikzpicture}

\end{figure}
\end{theorem}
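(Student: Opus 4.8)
The plan is to verify that the two stacked squares commute separately; commutativity of the full diagram then follows by pasting them along the middle row $L^2(\R/\lambda\Z)\to L^2(\Z)$. Throughout I would first establish the operator identities on a regular dense subspace—Schwartz functions on $\R$ and their periodizations—so that every integral and series below converges absolutely and all interchanges are legitimate, and then extend the identities to all of $L^2$ by continuity and density. This is essential because pointwise evaluation and lattice summation are not defined on $L^2$ equivalence classes, so the diagram must be pinned down on genuine functions before being read as a statement about the listed spaces.

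\textbf{Top square.} Fix $s$ in the Schwartz subspace of $L^2(\R)$ and an integer $n$. Unfolding the periodization $\mathrm{Av}_\lambda s$ against the character $e^{-2\pi i n t/\lambda}$ gives
\begin{align*}
\int_0^\lambda (\mathrm{Av}_\lambda s)(t)\, e^{-2\pi i n t/\lambda}\,dt
&= \sum_{m\in\Z}\int_0^\lambda s(t+m\lambda)\, e^{-2\pi i n t/\lambda}\,dt \\
&= \int_\R s(u)\, e^{-2\pi i n u/\lambda}\,du = \mathcal{F}s(n/\lambda),
\end{align*}
where the substitution $u=t+m\lambda$ uses $e^{2\pi i nm}=1$ and the interchange of sum and integral is justified by absolute convergence. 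The left side is (up to the chosen Fourier-series normalization) the $n$-th Fourier coefficient of $\mathrm{Av}_\lambda s$, while the right side is $(\mathrm{Ev}_{1/\lambda}\,\mathcal{F}s)[n]$. Hence $\mathrm{FourierSeries}\circ\mathrm{Av}_{\lambda} = \mathrm{Ev}_{1/\lambda}\circ\mathcal{F}$, which is exactly the periodization–sampling duality at the heart of Poisson summation.

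\textbf{Bottom square.} Now take $f\in L^2(\R/\lambda\Z)$ with Fourier series $f(t)=\sum_{n} c_n e^{2\pi i n t/\lambda}$, where $c=\mathrm{FourierSeries}(f)$; note $\mathrm{Ev}_{\lambda/N}$ is well defined into $L^2(\Z_N)$ precisely because $f$ is $\lambda$-periodic. Sampling at the grid points $\lambda k/N$, $k\in\Z_N$, and applying the discrete Fourier transform,
\begin{align*}
\mathcal{F}(\mathrm{Ev}_{\lambda/N}f)[\omega]
&= \sum_{k\in\Z_N} f(\lambda k/N)\, e^{-2\pi i \omega k/N}
= \sum_{n\in\Z} c_n \sum_{k\in\Z_N} e^{2\pi i (n-\omega)k/N}.
\end{align*}
The inner character sum equals $N$ when $n\equiv\omega \pmod N$ and $0$ otherwise, so the right side collapses to $N\sum_{m\in\Z} c_{\omega+mN} = N\,(\mathrm{Av}_{N\Z}c)[\omega]$. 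Thus $\mathcal{F}\circ\mathrm{Ev}_{\lambda/N}=\mathrm{Av}_{N\Z}\circ\mathrm{FourierSeries}$, the discrete aliasing identity, whose only nontrivial ingredient is orthogonality of characters on $\Z_N$.

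\textbf{Main obstacle.} The genuine difficulty is not the algebra but the analysis and the bookkeeping. First, I must fix a dense class of regular functions on which the absolutely convergent forms of both Poisson identities hold and on which all four named arrows make literal pointwise sense, then argue the extension to $L^2$; this is where the classical convergence subtleties of Poisson summation live. Second, each unfolding above produces a normalization constant—a factor $1/\lambda$ in the top square and a factor $N$ (together with the $1/\sqrt N$ built into the paper's $\mathcal F=\langle\,\cdot\,,e_\omega\rangle$ convention) in the bottom square. I would track these constants explicitly and confirm that they are absorbed by the normalizations fixed in the definitions of the Fourier-series, averaging, and evaluation operators, so that both squares, and hence the whole diagram, commute exactly as drawn.
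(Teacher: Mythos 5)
You should know at the outset that the paper never proves this statement: it is recorded as the ``well-known Poisson summation formula'' and then used as a black box in the proof of \cref{thm:gaussian}, so there is no in-paper argument to compare yours against. Your proof supplies what the paper omits, and it is the standard argument for this operator form of Poisson summation: the top square is the periodization--sampling duality (unfold $\mathrm{Av}_\lambda s$ against the character $e^{-2\pi i nt/\lambda}$ and resum over the lattice), the bottom square is the aliasing identity whose only ingredient is orthogonality of characters on $\Z_N$, and the full diagram follows by pasting along the middle row. Both computations are correct, and your insistence on tracking the normalization constants (the $1/\lambda$ upstairs, and the factor $N$ which becomes $\sqrt{N}$ under the paper's convention $\mathcal{F}S[\omega]=\<S,e_\omega\>$) is warranted, since the paper leaves the normalization of the ``Fourier Series'' arrow unspecified; for the paper's actual use of the theorem (producing an eigenvector of the DFT) these scalars are harmless.

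One step of your plan, however, would fail as stated: you cannot ``extend the identities to all of $L^2$ by continuity and density.'' The operators $\mathrm{Av}_\lambda$ and the evaluation maps are not bounded on $L^2$ --- pointwise evaluation is discontinuous in the $L^2$ norm, and the periodization $\sum_{m}s(t+m\lambda)$ of a general $L^2(\R)$ function need not converge at any point (take $s(t)=(1+|t|)^{-0.6}$) --- so no continuity argument can propagate an identity from a dense subspace to the whole space. This is really a defect of the theorem as drawn, whose arrows are likewise not literally defined on the labelled spaces, rather than of your squares. The correct resolution is to assert commutativity only on a subspace of sufficiently regular, sufficiently decaying functions (Schwartz functions upstairs and their images downstairs), which is exactly where your two computations already live, and which is all the application requires: the only function the paper ever feeds into this diagram is the Gaussian of \cref{thm:gaussian}, which is Schwartz. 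With that restriction replacing the extension step, your argument is complete.
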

\thmgaussian*
\begin{proof} \ 
\begin{enumerate}
    \item Consider the continuous Gaussian, which we will denote by $g\in L^2(\R)$,
\begin{align*}
    g(t) = \frac{1}{\sqrt{2\pi}} \ e^{-\pi t^2}
\end{align*}
$g$ is an eigenfunction of the continuous fourier transform,\footnote{The existence of such an eigenfunction follows from the irreducibility of the continuous \textit{Heisenberg-Weyl} representation on $L^2(\R)$.}
so we have,
\begin{align*}
    \mathcal{F}g = g
\end{align*}
It seems natural to utilize this eigenfunction in order to produce an eignevector of the discrete fourier transform. Moreover, recall that Shannon provides a family of maps, which we call ``$\text{A-to-D}$" (\ref{eq:adformula}), that can produce an element of $L^2(\Z_N)$ from a function in $L^2(\R)$. \\
Note that $\text{A-to-D} = \text{Ev}_{T/N}\circ \text{Av}_{T\Z}$. Further, Poisson summation \ref{thm:poisson} tells us that,
\begin{align*}
    \mathcal{F}(\text{Ev}_{T/N}\circ \text{Av}_{T}g)&= \text{Av}_{N}\circ \text{Ev}_{1/T}g.
\end{align*}
If the ``lattices" $\frac{T}{N}\Z$ and $\frac{1}{T}\Z$ are equal, then we have that
\begin{align*}
    \text{Av}_{N}\circ \text{Ev}_{1/T}=\text{Ev}_{T/N}\circ \text{Av}_{T}.
\end{align*}
The above lattices will be equal if $T/N = 1/T$, i.e. $T = \sqrt{N}$. In other words,
\begin{align*}
    \text{Ev}_{1/\sqrt{N}}\circ \text{Av}_{\sqrt{N}}g
\end{align*}
is an eigenvector of the discrete fourier transform.

\item In order to see why the second statement is true, we first note how scaling interacts with the fourier transform. In particular, the following diagram commutes,
\begin{figure}[H]
    \centering
\begin{tikzpicture}[descr/.style={fill=white}]
\matrix(m)[matrix of math nodes, row sep=3em, column sep=2.8em,
text height=1.5ex, text depth=0.25ex]
{L^2(\mathbb{R})&& L^2(\mathbb{R}). \\ };
7
\path[->,font=\scriptsize]
(m-1-1) edge node[above] {$\mathcal{F}$} (m-1-3)
(m-1-1) edge [loop above]  node{$\sigma$} (m-1-1)
(m-1-3) edge [loop above] node {$\sigma^{-1}$} (m-1-3);
\end{tikzpicture}
\end{figure}
In other words,
\begin{align*}
    \mathcal{F}(G^{\sigma}) & = (\mathcal{F}G)^{\sigma^{-1}}. 
\end{align*}
Now, the result follows immediately by the same argument as in part 1.
\end{enumerate}
\end{proof}
\corfilters*

\begin{proof} \ \\
We first examine the essential support of $F_N$:\\
Let $I$ denote the interval $[-k\log N,k\log N]\subset \Z_N$
\begin{align*}
    \sum_{\tau \notin I}|F_N[\tau]| &\le 2e^{-\pi \p{\log(N)}}\cdot \sum_{\tau = 0}^{N/2}  e^{-\pi \p{\frac{\tau^2}{k^2\log N}}} \\
    & \le \frac{1}{N^\pi} \cdot \sum_{\tau = 0}^{N/2} e^{-\p{\frac{\pi }{k^2\log N}}\tau}\\ 
    & = \frac{1}{N^\pi} \cdot \p{\frac{1-e^{- \p{\frac{\pi N}{2k^2\log N}}}}{1-e^{- \p{\frac{\pi}{k^2\log N}}}}} \approx \frac{1}{N^\pi}\p{1+e^{- \p{\frac{\pi}{k^2\log N}}}}
\end{align*}
Therefore, for $N$ sufficiently large.
\begin{align*}
    \sum_{\tau \notin I}|F_N[\tau]|& \le \frac{2}{N^\pi} = O(1/N). 
\end{align*}
Next let's consider the support of $\mathcal{F}F_N$:\\
Let $\widetilde{F}$ be defined as, 
\begin{align*}
    \widetilde{F}_N[\tau] &= e^{-\pi \p{\frac{k^2\log N\tau^2}{N^2}}}
\end{align*}
Then, simply by replacing $k$ in the first argument by $N/(k\log N)$, it follows that $\widetilde{F}_N$ is essentially supported on the interval $[-N/k,N/k]$. In other words,
\begin{align}\label{eq:rando1}
    \sum_{\tau \notin I}|\widetilde{F}_N[\tau]|& = O(1/N). 
\end{align}
\\ 
Moreover,
\begin{align}\label{eq:rando2}
    \widetilde{F}_N\b{\omega} \ge  \delta 
\end{align}
for $\tau$ in $\Z_N$ between $\pm N\sqrt{\log(1/\delta)}/\pi k\sqrt{\log N}$. 
\\
By \cref{thm:gaussian}, for $\sigma = \sqrt{N}/(k\sqrt{\log N})$,
\begin{align*}
    G^\sigma &= \sum_{m\in \Z}e^{-\pi\p{\frac{\tau}{(k\sqrt{\log N})}+m\sqrt{N}}^2}\\
    \mathcal{F}(G^\sigma) & = \sum_{m\in \Z}e^{-\pi\p{\frac{ k\sqrt{\log N} \tau}{N}+m\sqrt{N}}^2}.
\end{align*}
The only significant term in the first sum above is $F_N[\tau]$, corresponding to $m=0$, and the only significant term in the second sum is $\widetilde{F}_N[\tau]$, again corresponding to $m=0$. Now, since \cref{eq:rando1,eq:rando2} hold true for $\widetilde{F}_N$, the same will be true for $\mathcal{F}F_N$ as well and the result follows.
\end{proof}
    
\section{Complexity bounds and guarantees for SCE.}

\subsection{Case 1: $1$-sparse algorithm (``Bit-by-bit")}
\lembitbybit*
\begin{proof} \ \\
First consider the case that $T_{bit} =1$ in \cref{alg:k1noise}. Fix $r_1\in\Z_N$, denote
\begin{align*}
    A_1&= S[r_1]\cdot e_{-\omega}[r_1]
\end{align*}
Fix a positive integer $m$, $1\le m\le \log_2N$, and let
\begin{align*}
    B_1&=S[r_1+N/2^m]\cdot e_{-\omega}[r_1+N/2^m]
\end{align*}
Denote,
\begin{align*}
    \mu^A_1 & = \mu_{r_1} \\
    \mu^B_1 & = \mu_{r_1+N/2^m}\cdot e_{-\omega}[N/2^m]
\end{align*}
If the $m^{th}$ smallest bit of $\omega$ is zero: 
\begin{align*}
    A_1+B_1 &= \alpha_\omega+\mu^A_1+\mu^B_1 \\
    A_1-B_1 &= \mu^A_1-\mu^B_1
\end{align*}
$|A_1+B_1|<|A_1-B_1|$, if $2|\mu^A_1|>|\alpha_\omega|$ or $2|\mu^B_1|>|\alpha_\omega|$. In other words, \cref{alg:k1noise} wrongly estimates this bit if $2|\mu^A_1|>|\alpha_\omega|$ or $2|\mu^B_1|>|\alpha_\omega|$. \\ It is an exercise to check that the same is true of the algorithm, if the $m$ smallest bit of $\omega$ is $1$.
\\
Let's consider $T_{bit}>1$ trials and fix $r_1, \dots r_{T_{bit}} \in \Z_N$. Then, 
\begin{align*}
    P\p{2\cdot \bigg|\frac{1}{T_{bit}}\sum_{i=1}^{T_{bit}}\mu_i\bigg|>|\alpha_\omega|} &\le e^{-\frac{T_{bit}\cdot SNR}{8}}  &\text{(by Hoeffding's inequality)}.
\end{align*}
Therefore if we compare the averages,
\begin{align*}
    \bigg|\sum_{i=1}^{T_{bit}} A_i+B_i\bigg|, \bigg|\sum_{i=1}^{T_{bit}} A_i-B_i\bigg|,
\end{align*}
the probability that \cref{alg:k1noise} wrongly estimates the $m^{th}$ bit is,
\begin{align*}
    P(m^{th}\text{ bit is wrongly estimated}) &\le 2 \cdot e^{-\frac{T_{bit}\cdot SNR}{8}}  &\text{(by a union bound.)}
\end{align*}
For $T_{bit} \ge  8\log(\log(N)/\delta)\cdot (SNR)^{-1}$, the probability the $m^{th}$ bit is wrongly estimated is at most
\begin{align*}
    P(m^{th}\text{ bit is wrongly estimated}) \le \frac{\delta}{\log N}.
\end{align*}
By a union bound again, the probability that any bit is wrongly estimated is at most $\delta$. \\ Since sample and arithmetic complexity are both a constant multiple of $T_{bit}$, the result follows.
\end{proof}

The justification we present for the next lemma will make use of the following definition \cite{wainwright2019high}.
\begin{definition}[Sub-Gaussian random variable] A random variable $X$ with mean $\mu = \mathbb{E}(X)$ is sub-Gaussian if there is a positive number $\sigma$ such that 
\begin{align*}
    \mathbb{E}(e^{\lambda|X-\mu|})\le e^{\sigma^2\lambda^2/2}.
\end{align*}
We then say $X$ is sub-Gaussian with parameter $\sigma^2$.
\end{definition}
We are interested in sub-Gaussian random variables because they have ``small tails", namely, 
\begin{align*}
    P(|X-\mu|> t) \le 2e^{-t^2/2\sigma^2}
\end{align*}
Below we list other facts around sub-Gaussian random variables that will be useful to us. We state them without proof, justifications can be found in Chapter 2 of \cite{wainwright2019high}. \\
\textit{\textbf{Useful facts:}}
\begin{enumerate}
    \item A normal random variable $\nu \sim \mathcal{N}(0,\sigma^2)$ is sub-Gaussian with parameter $\sigma^2$.
    \item A bounded random variable $|X|\le \sigma$ is sub-Gaussian with parameter $4\sigma^2$.
    \item If $X_1$ and $X_2$ are independent sub-Gaussian random variables, with parameters $\sigma_1^2$ and $\sigma_2^2$ respectively, then $X_1+X_2$ are sub-Gaussian with parameter $\sigma_1^2+\sigma_2^2$.
    \item If $X_1, \dots X_m$ are i.i.d. mean zero sub-Gaussian random variables with parameter $\\sigma$, then,
    \begin{align*}
        P\p{\frac{1}{m}\Big|\sum_{i=1}^m X_i\Big|> \mu }\le 2\exp\p{-m \cdot \frac{\mu^2}{2\sigma^2}}
    \end{align*}
\end{enumerate}

\lemthresh*
\begin{proof} \ \\
    First, consider the case that $\omega= \omega_0$. For $\tau$ picked uniformly at random from $\Z_N$, let $X$ be the random variable given by 
    \begin{align*}
        X&= N(S\cdot e_{-\omega})[\tau].
    \end{align*}
    Then, $\mathbb{E}(X) = \alpha_0$. And moreover, since 
    \begin{align*}
        |\sqrt{N} \cdot \alpha_0e_{\omega_0}[\tau]|\le |\alpha_0|,
    \end{align*} the properties of independent sub-Gaussian random variables tells us that $X$ is \textit{sub-gaussian} with parameter $4|\alpha_0|^2+\sigma^2$. \\
    Let $X_1, \dots X_m$ correspond to $\tau_1, \dots, \tau_m$ chosen uniformly at random from $\Z_N$. Then, it follows from the properties of sub-Gaussian random variables that,
    \begin{align*}
        P\p{\frac{1}{m}|\sum_{i=1}^m X_i - \alpha_0| > \mu|\alpha_0|} &\le 2 \exp\p{-m\cdot\frac{\mu^2|\alpha_0|^2}{8|\alpha_0|^2+2\sigma^2}}\\
        & = 2 \exp\p{-m\cdot\frac{\mu^2}{8+2SNR^{-1}}}.
    \end{align*}
    But we have,
    \begin{align*}
        \widehat{\alpha} &= \frac{1}{m}\sum_{i=1}^m X_i,
    \end{align*}
    and since $||\widehat{\alpha}|- |\alpha_0||\ge |\widehat{\alpha} -\alpha_0|$ by the triangle inequality, the result follows.
\end{proof}

\subsection{Case 2: Discrete filter functions}
\lemfiltering*
\begin{proof} \ 
    \begin{enumerate}
        \item Since $m$ divides $N$ we have a natural map $\Z_N \to \Z_m$, namely for $\tau\in \Z_N$,
        \begin{align*}
            \tau \mapsto \tau \text{ mod } m.
        \end{align*}
        This induces a map which we will denote, $\text{Avg}_m:L^2(\Z_N)\to L^2(\Z_m)$, where,
        \begin{align*}
            (\text{Avg}_mS)[\tau] = \sum_{i = 1}^{N/m} S[\tau+i\cdot m]
        \end{align*}
        A quick check shows that for any $\tau \in \Z_N$,
        \begin{align*}
            S_j[\tau] & = \mathcal{F}\p{\text{Avg}_m\p{S\cdot H_{-\tau,0}F}}[j].
        \end{align*}
        Since $F$ has support of size $k$, computing $\text{Avg}_m\p{S\cdot H_{-\tau,0}F}$ requires at most $k$ samples of $S$ and $k$ arithmetic operations. Next, using FFT we can compute $S_1[\tau], \dots, S_m[\tau]$ with an additional $m\log m$ operations, and the result follows.
        \item Let $I\subset \Z_N$ denote the essential support of $F$. Define $\widetilde{F}$ to be the function,
        \begin{align*}
            \widetilde{F}[\tau] &= 
            \begin{cases}
                F[\tau] & \text{ if }\tau \in I \\
                0 &\text{ otherwise.}
            \end{cases}
        \end{align*}
        Then, 
        \begin{align*}
            \widetilde{S}_j = S * \p{\widetilde{F}\cdot e_{\frac{N}{m}j}}
        \end{align*} 
        is an estimate for $S_j$. By part 1, for any $\tau \in \Z_N$, we can compute $\widetilde{S}_1[\tau]$ using at most $k$ samples of $S$ and $k+m\log m$ arithmetic operations.
        \\
        Now we consider $|S_j[\tau]- \widetilde{S}_j[\tau]|$ for $j=1$. 
        \begin{align*}
            |S_1[\tau]- \widetilde{S}_1[\tau]| 
            & = \Big|\sum_{\tau'\in \Z_N} S[\tau -\tau']\cdot (F-\widetilde{F})[\tau']\Big| & \text{(definition of convolution)} \\
            & \le \p{\max_{\tau \in \Z_N} S[\tau]} \cdot \sum_{\tau'\notin I}\big|F[\tau']\big| &\text{(triangle inequality)} \\
            & \le 1 \cdot \sum_{\tau'\notin I}\big|F[\tau']\big| &\text{(since }S \text{ has unit norm)} \\
            & = O(1/N) &\text{(definition of essential support.)}
        \end{align*}
        The result follows by an identical argument for $j\ne 1$.  
    \end{enumerate}
    
\end{proof}

\lemfilnoise*
\begin{proof} \ \\
    Let $\nu$ and $\widetilde{\nu}$ denote the following random vectors,
    \begin{align*}
        \nu[\tau] = \nu_\tau, \ 
        \widetilde{\nu}[\tau] = \widetilde{\nu}_\tau.
    \end{align*}
    So we have,
    \begin{align*}
        (R*F) &= (S*F) + \widetilde{\nu}.
    \end{align*}
    The fourier transform then gives us,
    \begin{align*}
        \mathcal{F}(R*F) &= \mathcal{F}(S*F) + \mathcal{F}(\widetilde{\nu}),
    \end{align*}
    Let's consider $\mathcal{F}(\widetilde{\nu})$,
    \begin{align*}
        \mathcal{F}(\widetilde{\nu}) & = \mathcal{F}(\nu)\cdot \mathcal{F}(F) \\
        \mathbb{E}(\|\mathcal{F}(\widetilde{\nu})\|^2) &\approx \mathbb{E}(\sum_{\omega \in [-N/2k,N/2k]} |\mathcal{F}(\nu)\cdot \mathcal{F}(F)[\omega]|^2) & \text{(essential support)} \\
        & \le \mathbb{E}(\sum_{\omega \in [-N/2k,N/2k]} |\mathcal{F}(\nu)[\omega]|^2) & (|\mathcal{F}F[\omega]|\le 1)\\
        & = \sigma^2/k
    \end{align*}
    But we also have, 
    \begin{align*}
        \mathbb{E}(\|\mathcal{F}(\widetilde{\nu})\|^2)  & = \mathbb{E}(\|\widetilde{\nu}\|^2) \\
        &\ge N \cdot \widetilde{\sigma}^2/N &\text{(covariances are positive, so ignored)}\\
        & = \widetilde{\sigma}^2
    \end{align*}
    So altogether, we have that $\widetilde{\sigma}^2\le \sigma^2/k$.
\end{proof}

\subsection{General Case}
\lemisolate*
\begin{proof}
Given a pseudorandom permutation $\sigma$ picked uniformly at random,
\begin{align*}
\Pr (\omega_i, \omega_j \text{ collide after }\sigma) &\le \Pr (\sigma(\omega_i)-\sigma(\omega_j) \in [-C/2,C/2]) \\
& \le \frac{2C}{N}
\end{align*}
By a union bound, 
\begin{align*}
\Pr(\omega_i \text{ is not isolated by }\sigma) &\le \frac{2Ck}{N}
\end{align*}
\end{proof}

\corisolate*
\begin{proof}
By a union bound, 
\begin{align*}
\Pr(\omega_i \text{ is not isolated by }\sigma) &\le \frac{2Ck}{N} \le \frac{1}{4}  &(\text{by Lemma \ref{lem:isolate} and assumption on }C)
\end{align*}
Given $T =\log_4 (k/\delta) 
= O(\ln (k/\delta))$ permutations chosen independently uniformly at random,
\begin{align*}
\Pr(\omega_i \text{ is not isolated after }\sigma_t, \forall t = 1, \dots, T  ) &\le  \p{\frac{1}{4}}^{\log_4 (k/\delta)} = \frac{\delta}{k} 
\end{align*}
By a union bound,
\begin{align*}
\Pr(\exists i \text{ s.t. } \omega_i \text{ is not isolated after }\sigma_t, \forall t = 1, \dots, T  ) &\le \delta
\end{align*}
\end{proof}

Before we prove the guarantees of \cref{thm:SCE}, it may be in our interest to first consider the following schematic depiction for the SFFT process.
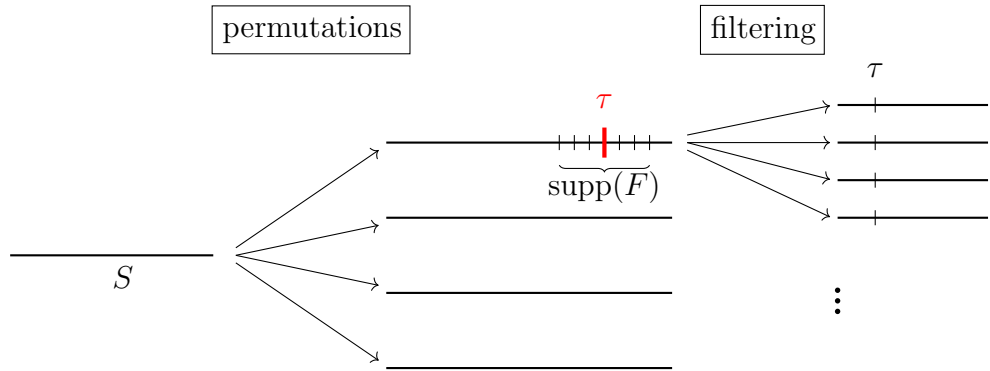
\begin{figure}[H]
\centering
\begin{tikzpicture}
	\draw[thick] (0,-1.5) -- (1.5,-1.5)node[below]{$S$} -- (2.7,-1.5);
	\draw[thick] (5,0) -- (8.8,0);
	
	\foreach \x in {7.3,7.5,7.7,8.1,8.3,8.5}
	\draw (\x,.1) -- (\x,-.1) node[below]{};
	\draw[red,ultra thick ] (7.9,.2) -- (7.9,-.2) node[above, yshift=1.2em]{$\tau$} ;
	\draw [decorate, decoration={brace,mirror}](7.3,-0.3) -- (8.5,-0.3) node[midway,yshift=-.7em]{$\text{supp}(F)$};
	
	\foreach \x in {-1,-2,-3}
	\draw[thick] (5,\x) -- (8.8,\x);
	
	\foreach \x/\y in {-1.4/-0.1,-1.5/-1.1,-1.5/-1.9,-1.6/-2.9} 
    	\draw[->](3,\x) -- (4.9,\y);

    	\foreach \x in {.5,0,-.5,-1}
    	\draw[thick](11,\x) -- (13,\x);

	\foreach \x/\y in {.1/.49, 0/0, 0/-.49, -.1/-.99}
    	\draw[->](9,\x) -- (10.9,\y);
    	
    	\foreach \x in {0,-.5,-1}
    	\draw (11.5,\x+.1) -- (11.5,\x-.1);
    	\draw (11.5,.5+.1) -- (11.5,.5-.1)node[above, yshift=.80em]{$\tau$};
    	
    	\node[draw,text width=67] at (4,1.5) {permutations};
    	\node[draw,text width=39] at (10,1.5) {filtering};
    	\node[] at (11,-2) {\Huge$\vdots$};
\end{tikzpicture}
\caption{A schematic representation of the SFFT process.} \label{fig:cplxtysfft}
\end{figure}

\begin{lemma}\label{lem:ugbutuseful} 
For a given implementation of $\text{SFFT}_\mu$ that utilizes a filter $F$, let 
\begin{enumerate}
    \item $n_{perm}$ denote the number of pseudorandom spectral permutations,
    \item $n_F$ denote the size of the support of $F$, 
    \item $n_{I}$ the number of intervals that the function is then filtered to and,
    \item $n_{1\text{SFFT}}$ denote the number of samples that the $1$-sparse algorithm requires.
\end{enumerate}
Then, $\text{SFFT}_\mu$ exhibits the following,
\begin{align*}
     \text{Sample complexity} &\le n_{perm}\cdot n_F\cdot n_{1\text{SFFT}}\\
     \text{Storage complexity} &\le c\cdot n_{perm}\cdot n_F\cdot n_{1\text{SFFT}} \\
     \text{Arithmetic complexity} &\le n_{perm}\cdot\p{n_F\cdot + n_I\log n_I}\cdot n_{1\text{SFFT}},
\end{align*}
where $c$ is a universal constant.
\end{lemma}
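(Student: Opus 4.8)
The plan is to treat this as a bookkeeping argument that tracks the cost of the three nested stages of $\text{SFFT}_\mu$ --- the outer loop over pseudorandom spectral permutations, the filtering step, and the innermost $1$-sparse estimation --- and to show how the quantities $n_{perm}, n_F, n_I, n_{1\text{SFFT}}$ combine. The crucial structural observation, which is exactly what prevents all four quantities from multiplying together, is that within a single permutation the $n_I$ filtered signals $S_1, \dots, S_{n_I}$ are all filtered versions of the \emph{same} underlying signal, so \cref{lem:filtering} lets us evaluate all of them at a common sample point \emph{simultaneously} from one shared batch of samples.

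First I would fix a single permutation $(\sigma, a)$ and bound the cost of running the inner procedure on the permuted signal $S^{\sigma,a}$. The $1$-sparse routine queries its input at $n_{1\text{SFFT}}$ points, and I would take these query points to be common to all $n_I$ filtered signals (the natural implementation, since the random query locations can be drawn once and reused). For each query point $\tau$, \cref{lem:filtering} computes $S_1[\tau], \dots, S_{n_I}[\tau]$ at once using $n_F$ samples of $S^{\sigma,a}$ and fewer than $n_F + n_I\log n_I$ operations. Since a sample $S^{\sigma,a}[\tau] = e_a[\tau]\,S[\sigma\tau]$ is, up to a known scalar, a single sample of $S$, each query point costs $n_F$ samples of $S$, for a per-permutation sample count of $n_F\cdot n_{1\text{SFFT}}$ and a per-permutation arithmetic count of $\p{n_F + n_I\log n_I}\cdot n_{1\text{SFFT}}$. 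The subsequent $1$-sparse estimation on each of the $n_I$ filtered signals adds only $O(n_I\cdot n_{1\text{SFFT}})$ operations by \cref{lem:bitbybit}, which is absorbed into the $n_I\log n_I\cdot n_{1\text{SFFT}}$ term.

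Then I would account for the outer loop. Distinct permutations sample $S$ at distinct locations, because the map $\tau\mapsto\sigma\tau$ changes with $\sigma$; hence samples are \emph{not} shared across permutations and the per-permutation costs simply multiply by $n_{perm}$, yielding the stated sample and arithmetic bounds. The inequalities are genuine upper bounds, since coinciding queries (within or across stages) can only decrease the counts. The storage bound follows because storage is proportional to the number of stored samples, which contributes the universal constant $c$ in front of $n_{perm}\cdot n_F\cdot n_{1\text{SFFT}}$.

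The part to get exactly right --- more a matter of careful accounting than a genuine obstacle --- is the sharing bookkeeping: one must verify that the factor $n_I$ attaches only to the logarithmic term in the arithmetic count (coming from the FFT over the $n_I$ intervals in \cref{lem:filtering}) and never to the sample complexity. This is precisely the payoff of computing all $n_I$ filtered signals simultaneously from a single shared pool of $n_F$ samples per query point, and it is the only place where one could inadvertently pick up an extra multiplicative factor.
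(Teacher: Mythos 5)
Your proposal is correct and follows essentially the same route as the paper's proof: a per-permutation accounting in which \cref{lem:filtering} supplies the key fact that all $n_I$ filtered signals can be evaluated at a common query point from a single batch of $n_F$ samples, so that only the $n_I\log n_I$ FFT term (and not the sample count) picks up a factor of $n_I$, after which the costs multiply by $n_{perm}$ and storage is a constant multiple of the samples. Your write-up is in fact somewhat more explicit than the paper's (which leans on its schematic figure), e.g.\ in noting that $S^{\sigma,a}[\tau]=e_a[\tau]\,S[\sigma\tau]$ makes a sample of the permuted signal cost one sample of $S$, but the underlying argument is the same.
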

\begin{proof} \ \\
Our verification will utilize \cref{fig:cplxtysfft}. Let us first examine sample complexity:
\\The figure depicts that we start with a signal $S$, after a pseudorandom permutation we have a signal $S^{\sigma,a}$, say, and after filtering we will have $n_I$ signals $S_j$, $j=1, \dots, n_I$. We apply the $1\text{SFFT}_\mu$ algorithm to the $S_j$. We also see that, for any $\tau \in \Z_N$, we compute $S_j[\tau]$, $j=1, \dots, n_I$, with $n_F$ samples of $S^{\sigma,a}$. Therefore, after each permutation we use at most $n_F\cdot n_{1\text{SFFT}}$ samples and so, overall, at most $n_{perm}\cdot n_F\cdot n_{1\text{SFFT}}$ samples are used.

The arithmetic complexity of $\text{SFFT}_\mu$ can be reasoned in largely the same manner as above, the only difference being that the filtering step may require up to $\p{n_F\cdot + n_I\log n_I}$ arithmetic operations (\cref{lem:filtering}).

Storage complexity is at most a constant multiple of sample complexity, since nothing other than samples is stored.
\end{proof}

\vspace{0.7cm}
In order to provide a statistical guarantee and complexity bounds for Sparse Channel Estimation (SCE), we first provide analogous guarantees for $\text{SFFT}_\mu$. These guarantees will hold under certain assumptions, namely, we will assume the model \cref{eq:sfft} with the following additional features.
\begin{itemize}
\item We make a sparsity assumption on the number of significant frequencies.

\textbf{A21} (\textit{Sparsity}): The number of significant frequencies is at most $k$, where $k\ll N$ is a constant, i.e., independent of $N$.

\item We also make the following assumption on the coefficients $\alpha_j$ and their distribution.

\textbf{A22} (\textit{$\epsilon$-coefficients}):
There is some $A>0$ and $\epsilon\in (0,1)$ such that $(\alpha_1, \dots,\alpha_k)$ is drawn uniformly at random from the following set:
\begin{align*}
    B_{\epsilon} & = \set{x\in \C^k}:{\|x\|^2 = A \text{ and } \min_{x_j\ne 0} |x_j|\ge \epsilon\cdot \sqrt{A/k}}.
\end{align*}

\item The final assumption that we make is on the distribution of the noise $\nu_\tau$.

\textbf{A23} (\textit{sub-Gaussian}): We assume 
$\nu_\tau$ are i.i.d, mean zero and \textit{subgaussian} random variables with subgaussian parameter $\sigma^2/N$. 
\end{itemize}

For our purposes, we define \textit{signal-to-noise ratio}, or \textit{SNR} for short, to be 
\begin{align*}
    \text{SNR} = A/\sigma^2.
\end{align*} 
Moreover, we define the \textit{probability of detection} as follows.
\begin{definition} The \underline{probability of detection} (PD) of an SFFT algorithm is the probability that the $j^{th}$ frequency, $\omega_j$, is returned by the algorithm.
\end{definition}
\begin{theorem}[SFFT]\label{thm:SFFT}Let $\mu = \kappa\cdot \epsilon\sqrt{A/k}$ for some confidence parameter $\kappa\in (0,1)$. \\ Then, under the sparsity, $\epsilon$-coefficients, and subgaussian assumptions there is an implementation of $\text{SFFT}_\mu$ which takes
\begin{enumerate}
    \item $c_1k (\log N)^3(\epsilon^{2}\text{SNR} )^{-1}$ samples,
    \item $c_2 k (\log N)^3(\epsilon^{2}\text{SNR})^{-1}$ bits of memory, and
    \item  $c_3 k (\log N)^3(\epsilon^{2}\text{SNR})^{-1}+k^2$ arithmetic operations 
\end{enumerate}
for which $\textit{PD} \to 1$ as $N\to \infty$, where $c_1,c_2,c_3$ are constants independent of $\epsilon$, $\textit{SNR}$, $k$ and $N$. 
\end{theorem}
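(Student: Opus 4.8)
The plan is to run $\text{SFFT}_\mu$ with an explicit choice of its four ``moving parts'' and then read off both the statistical guarantee and the complexity from the component lemmas, the only genuinely new ingredient being the bookkeeping of the \emph{effective} signal-to-noise ratio seen by the $1$-sparse subroutine. First I would fix the parameters entering Lemma \ref{lem:ugbutuseful}: take the number of spectral permutations to be $n_{perm} = O(\log(k/\delta))$ as prescribed by Corollary \ref{cor:isolate}, take the filter $F$ to be the discrete Gaussian of Corollary \ref{thm:filters} so that $n_F = O(k\log N)$ and $n_I = k$, and let the $1$-sparse routine $1\text{SFFT}_\mu$ (Lemmas \ref{lem:bitbybit} and \ref{lem:thresh}) use $n_{1\text{SFFT}}$ samples, to be chosen. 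The single remaining free parameter is the per-stage failure probability $\delta$, which I will eventually set to a fixed inverse power of $N$.

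The crux is the SNR computation. After a permutation isolates a frequency $\omega_j$ into its own interval, filtering by $F$ produces an essentially $1$-sparse signal whose surviving coefficient has magnitude $|\alpha_j| \ge \epsilon\sqrt{A/k}$ by assumption A22, while by Lemma \ref{lem:filnoise} the filtered noise has parameter $\widetilde\sigma^2 \le \sigma^2/k$; the leakage from the other $k-1$ frequencies through the filter's essential (rather than exact) support contributes only an $O(1/N)$ bias by Lemma \ref{lem:filtering}(2), negligible against $\mu$. Hence the effective SNR handed to the $1$-sparse routine is
\begin{align*}
    \text{SNR}' \;=\; \frac{|\alpha_j|^2}{\widetilde\sigma^2} \;\ge\; \frac{\epsilon^2 A/k}{\sigma^2/k} \;=\; \epsilon^2\,\text{SNR},
\end{align*}
which is precisely the quantity appearing in the target bounds. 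Feeding $\text{SNR}'\ge \epsilon^2\text{SNR}$ into Lemma \ref{lem:bitbybit} (to recover the bits of $\omega_j$) and Lemma \ref{lem:thresh} (for the thresholding decision, where $\mu=\kappa\epsilon\sqrt{A/k}<|\alpha_j|$ since $\kappa<1$ supplies the needed multiplicative gap), it suffices to take $n_{1\text{SFFT}} = O\big(\log(\log N/\delta)\,(\epsilon^2\text{SNR})^{-1}\big)$ for each stage to fail with probability at most $\delta$.

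To finish, I would set $\delta$ to a fixed inverse power of $N$. Then $\log(k/\delta)=O(\log N)$ (as $k$ is constant) and $\log(\log N/\delta)=O(\log N)$, so a union bound over the isolation event (Corollary \ref{cor:isolate}), the bit-recovery events for the $k$ frequencies, and the thresholding events gives total failure probability $\to 0$, i.e.\ $\text{PD}\to 1$, at a polynomial rate. The complexity then drops out of Lemma \ref{lem:ugbutuseful}: substituting $n_{perm}=O(\log N)$, $n_F=O(k\log N)$, $n_I=k$ and $n_{1\text{SFFT}}=O\big(\log N\,(\epsilon^2\text{SNR})^{-1}\big)$ yields sample and storage complexity $O\big(k(\log N)^3(\epsilon^2\text{SNR})^{-1}\big)$ and arithmetic complexity $O\big(k(\log N)^3(\epsilon^2\text{SNR})^{-1}\big)$, with an additive $k^2$ term from collating and de-duplicating the $O(k)$ candidate frequencies returned across permutations.

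The main obstacle is the SNR bookkeeping above: one must verify that the per-coefficient magnitude guaranteed by A22 genuinely survives the filtering step, and that the inter-frequency leakage together with the $O(1/N)$ estimation errors really are dominated by the threshold $\mu$, so that the effective SNR is $\epsilon^2\text{SNR}$ and not something smaller. Everything else is an assembly of the component lemmas together with the observation that choosing $\delta$ polynomially small in $N$ keeps every logarithmic factor at $O(\log N)$, producing the clean $(\log N)^3$ dependence.
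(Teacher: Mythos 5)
Your overall architecture matches the paper's proof: decompose the failure event into isolation, bit recovery, and thresholding; drive each probability to an inverse power of $N$; and read the complexity off \cref{lem:ugbutuseful} with $n_{perm}=O(\log N)$, $n_F=O(k\log N)$, and $n_{1\text{SFFT}}=O\big(\log N\,(\epsilon^2\text{SNR})^{-1}\big)$. However, there is a genuine gap at exactly the point you flag as ``the main obstacle,'' and your stated parameter choice makes it fail rather than merely leaving it unverified: you take $n_I=k$ intervals and assert that the isolated coefficient survives filtering with its full magnitude $\ge\epsilon\sqrt{A/k}$. By property 3 of \cref{defn:filter}, the Gaussian filter of \cref{thm:filters} is only guaranteed to satisfy $\mathcal{F}F_N[\omega]\ge\delta$ on a sub-window of width about $(N/k)\sqrt{\log(1/\delta)}/\sqrt{\log N}$ inside its pass interval of width $N/k$, i.e.\ on a $O(1/\sqrt{\log N})$-fraction of it. With only $k$ filters spaced $N/k$ apart, a frequency landing midway between two centers sees Gaussian response on the order of $e^{-\pi\log(N)/4}$, which is polynomially small in $N$; its effective SNR is then polynomially small, so \cref{lem:bitbybit} would require polynomially many samples, and since after a random permutation the location of $\omega_j$ is essentially uniform, all but a vanishing fraction of locations are bad. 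With your parameters, $\textit{PD}\not\to 1$.

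The paper closes this gap with two ingredients your proposal is missing. First, it uses more, overlapping intervals: $n_I = c\,k\sqrt{\log N/\log(1/\delta)}$, chosen precisely so that the good windows (where the response is at least $\delta$) of the shifted filters tile all of $\Z_N$; this extra $\sqrt{\log N}$ factor in $n_I$ is harmless, since $n_I$ enters only the arithmetic complexity through $n_I\log n_I$ and does not affect sample complexity in \cref{lem:ugbutuseful}. Second, it carries the attenuation explicitly: the effective signal-to-noise ratio handed to the $1$-sparse routine is $\text{SNR}_1\ge\delta^2\epsilon^2\text{SNR}$, not $\epsilon^2\text{SNR}$, where this $\delta$ is the filter attenuation parameter of \cref{defn:filter} (a different object from your per-stage failure probability, which you also call $\delta$); the paper then fixes it at the constant $1/\sqrt{2}$, giving $\text{SNR}_1\ge\epsilon^2\text{SNR}/2$, which is what actually feeds into \cref{lem:bitbybit} and \cref{lem:thresh}. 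With these two corrections the rest of your assembly --- the union bound, the inverse-polynomial choice of failure probabilities, and the substitution into \cref{lem:ugbutuseful} --- goes through exactly as in the paper.
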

\begin{proof} \ \\
    We consider the event that the $j^{th}$ frequency $\omega_j$ in \cref{eq:sfft} is not detected. This may happen in the following scenarios:
    \begin{enumerate}
        \item $\omega_j$ was never isolated during the SFFT process.
        \item $\omega_j$ was isolated but incorrectly estimated by the SFFT process.
        \item $\omega_j$ was isolated and accurately estimated but the estimate for the $j^{th}$ coefficient $|\widehat{\alpha_j}|<\mu$.
    \end{enumerate}
    Therefore, we have the following estimate for $PD$,
    \begin{align*}
        PD &&\ge &&1-P(\omega_j\text{ not isolated}) +
                    P(\omega_j\text{ not located after isolation}) \\
                    &&&&+P( |\widehat{\alpha_j}|<\mu \text{ after isolation and location}).
    \end{align*}
    Using the notation of \cref{lem:ugbutuseful}, we would first like to understand how many samples we may require for $1\text{SFFT}_\mu$. Assume that $\omega_j$ has been isolated, then using a $(k,N/k,\delta)$-filter we will filter to $n_I$ intervals where,
    \begin{align*}
        n_I &= c_1 \cdot \frac{N}{2N\sqrt{\log(1/\delta)}/ k\sqrt{\log N} } \\
            &= c_1 \cdot k \cdot \sqrt{\log N/\log(1/\delta)}
    \end{align*}
    Then, by definition of such a filter and by \cref{lem:filnoise}, we can assume model \cref{model:k1noise} with signal-to-noise ratio,
    \begin{align*}
        SNR_1 \ge \frac{\epsilon^2\delta^2A}{\sigma^2}
    \end{align*}
    Let's simply pick $\delta = 1/\sqrt{2}$, then,
        \begin{align}
            n_I & = c_1 \cdot k\log N \label{eq:snr1}\\
            SNR_1 &\ge \frac{\epsilon^2A}{2\sigma^2} = \epsilon^2SNR^{-1}/2 \label{eq:snr2}
        \end{align}
    \begin{enumerate}
        \item for $n_{perm} = c_2\cdot \log(kN)$, by \cref{cor:isolate},
        $$P(\omega_j\text{ not isolated})\le 1/N,$$
        \item for $n_{1\text{SFFT}}\ge c_3 \cdot \log(N\log N)\cdot\epsilon^2SNR^{-1} $, by \cref{eq:snr2,lem:bitbybit},
         $$P(\omega_j\text{ not located after isolation})\le 1/N,$$
        \item for $n_{1\text{SFFT}}\ge \log N \cdot\epsilon^2SNR^{-1}/(1-\kappa)$, by \cref{eq:snr2,lem:thresh},
        $$P( |\widehat{\alpha_j}|<\mu \text{ after isolation and location})\le 1/N.$$
    \end{enumerate}
    Moreover, $n_F = k\log N$.\\
    Then, by \cref{lem:ugbutuseful}, this implementation of $\text{SFFT}_\mu$ takes at most,
    \begin{align*}
        &c_1\cdot \log(kN) \cdot k\log N \cdot \log(N\log N)\cdot \epsilon^2 SNR^{-1} \\
        &= c_1 k\log(N)^3\epsilon^2 SNR^{-1} 
    \end{align*}
    samples, and
    \begin{align*}
        &c_3 \cdot \log(kN) \cdot\p{k\log N+k\sqrt{\log N} \cdot \log\p{k\sqrt{\log N}}} \cdot \log(N\log N)\cdot \epsilon^2 SNR^{-1}\\
        & = c_3k\log(N)^3 \epsilon^2 SNR^{-1}
    \end{align*}
    arithmetic operations.
    \\
    Finally, storage complexity is at most a constant multiple of sample complexity since nothing other than samples is stored, and so the result follows.
\end{proof}
    
\vspace{0.7cm}
We are now ready to prove a statistical guarantee and complexity bounds for Sparse Channel Estimation (SCE).
\thmSCE*
\begin{proof} \ \\
Assume that $\text{SCE}_\mu$ was executed by transmitting $S= S_L+S_M+S_K$ for three distinct lines $L,M$, and $K$, and $R$ given by \cref{eq:digital} under the sparsity, $\epsilon$-targets, and sub-Gaussian assumptions A1, A2, and A3. 

Now, consider $\mathcal{A}(S_L,R)$ on the line $K$. The assumptions A1, A2, and A3 immediately imply the sparsity, $\epsilon$-coefficients, and sub-Gaussian assumptions A21, A22, and A23, for $\mathcal{A}(S_L,R)$ on $K$.
So, $\text{SCE}_\mu$ can be executed using the implementation of $\text{SFFT}_\mu$ assumed in \cref{thm:SFFT}. Therefore, samples, storage and arithmetic operations are as claimed.

We begin with the probability of detection $PD$ of $\text{SCE}_\mu$ (\cref{alg:SCE}). The shift $(\tau_j,\omega_j)$ is not detected if, in particular, the corresponding peak of $\mathcal{A}(S_L,R)$, is not detected. 
The above holds true for $\mathcal{A}(S_M,R)$ and $\mathcal{A}(S_K,R)$ as well. So, denoting the probability of detection of $\text{SFFT}_\mu$ as $PD_{\text{SFFT}_\mu}$, a union bound gives us that the probability of detection of $\text{SCE}_\mu$ is at most,
\begin{align*}
    PD \ge 1 - 3(1-PD_{\text{SFFT}_\mu})
\end{align*}
By \cref{thm:SFFT}, we then have that $PD\to 1$ as $N\to \infty$.

Now let's consider the probability of false alarm $PFA$ of $\text{SCE}_\mu$. A shift $(\widetilde{\tau},\widetilde{\omega})$ is falsely returned by the algorithm in the following scenarios. 
\begin{enumerate}
    \item The choice of lines $L,M,K$ produce a false triple intersection. By \cref{lem:genline}, this happens with probability less than $k^3/(N-2)$ or,
    \item Peaks corresponding to $(\widetilde{\tau},\widetilde{\omega})$ are incorrectly detected for $\mathcal{A}(S_L,R)$, $\mathcal{A}(S_M,R)$ and $\mathcal{A}(S_K,R)$. It follows by \cref{lem:thresh} that, with appropriately chosen constants $c_1,c_2$ and $c_3$, this holds with probability at most $1/N^3$.  
\end{enumerate}
So, the given implementation of $\text{SCE}_\mu$ also exhibits $PFA \to 0$ as $N\to \infty$.
\end{proof}

\genline*
\begin{proof}
The shifted lines $ p_i+L, \  p_i+M$, $i= 1, \dots, k$, intersect in at most $2 \cdot \begin{pmatrix} k \\ 2\end{pmatrix} = k(k-1)$ points $p\ne p_i$. 

There are at most $k^2(k-1)$ choices of $K$, such that $p_i+K$ passes through such a point $p$. 

The probability that this happens is at most $\frac{k^2(k-1)}{N-2}\le \frac{k^3}{N-2}$.

\end{proof}

%
%
\bibliographystyle{siam}
\bibliography{thesis}

\end{document}